\newcolumntype{C}{>{\centering\arraybackslash}X} 
\newtheorem{thm}{Theorem}
\newtheorem{rem}{Remark}
\newtheorem{pos}{Proposition}
\newtheorem{proof}{proof}
\begin{document}
\linespread{1.35}

\title{Intelligent Reflecting Surface Aided MIMO Networks: Distributed or Centralized Architecture?}
\author{Guangji~Chen,\IEEEmembership{}
        Qingqing~Wu,\IEEEmembership{}
        Wen~Chen,\IEEEmembership{}
        Yanzhao~Hou,\IEEEmembership{}
        Mengnan~Jian,\IEEEmembership{}
        Shunqing~Zhang,\IEEEmembership{}
        and Jun~Li \vspace{-20pt}
        \thanks{Guangji Chen and Jun Li are with Nanjing University of Science and Technology, Nanjing 210094, China (email: guangjichen@njust.edu.cn; jun.li@njust.edu.cn). Qingqing Wu and Wen Chen are with Shanghai Jiao Tong University, 200240, China (e-mail: qingqingwu@sjtu.edu.cn; wenchen@sjtu.edu.cn). Yanzhao Hou is with Beijing University of Posts and Telecommunications, Beijing 100876, China (houyanzhao@bupt.edu.cn). Mengnan Jian is with ZTE Corporation, Shenzhen 518057, China (e-mail: jian.mengnan@zte.com.cn). Shunqing Zhang is with Shanghai University, Shanghai 20044 China (e-mail: shunqing@shu.edu.cn).}}

\maketitle
\vspace{-28pt}
\begin{abstract}
Intelligent reflecting surfaces (IRSs) have recently attained growing popularity in wireless networks owning to their capability to customize the wireless channel via smartly configured passive reflections. In addition to optimizing IRS reflection patterns, the flexible deployment of IRSs offers another design degree of freedom (DoF) to reconfigure the wireless propagation environment in favour of signal transmission. To unveil the impact of IRS deployment on the system capacity, we investigate the capacity of a broadcast channel with a multi-antenna base station (BS) sending independent messages to multiple users, aided by IRSs with $N$ elements. In particular, both the \emph{distributed} and \emph{centralized IRS} deployment architectures are considered. Regarding the \emph{distributed IRS}, the $N$ IRS elements form multiple IRSs and each of them is installed near a user cluster; while for the centralized IRS, all IRS elements are located in the vicinity of the BS. To draw essential insights, we first derive the maximum capacity achieved by the \emph{distributed IRS} and \emph{centralized IRS}, respectively, under the assumption of line-of-sight propagation and homogeneous channel setups. By capturing the fundamental tradeoff between the spatial multiplexing gain and passive beamforming gain, we rigourously prove that the capacity of the \emph{distributed IRS} is higher than that of the \emph{centralized IRS} provided that the total number of IRS elements is above a threshold. Motivated by the superiority of the \emph{distributed IRS}, we then focus on the transmission and element allocation design under the \emph{distributed IRS}. By exploiting the user channel correlation of intra-clusters and inter-clusters, an efficient hybrid multiple access scheme relying on both spatial and time domains is proposed to fully exploit both the passive beamforming gain and spatial DoF. Moreover, the IRS element allocation problem is investigated for the objectives of sum-rate maximization and minimum user rate maximization, respectively. Finally, extensive numerical results are provided to validate our theoretical finding and also to unveil the effectiveness of the
distributed IRS for improving the system capacity under various system setups.

\end{abstract}

\begin{IEEEkeywords}
Intelligent reflecting surface, broadcast channel, IRS deployment, capacity.
\end{IEEEkeywords}

\vspace{-8pt}
\section{Introduction}
The forthcoming sixth-generation (6G) wireless network is expected to satisfy the ever-growing demand for higher system capacity, enhanced reliability, and reduced latency \cite{saad2019vision}. In view of this issue, intelligent reflecting surfaces (IRSs) have been proposed as an appealing candidate for 6G owning to their potential to customize the wireless channel via passive reflection \cite{8910627,di2020smart,wu2021intelligent}. Briefly, an IRS is a controllable surface comprising a large number of tunable passive reflecting elements \cite{8910627}. By independently adjusting the phase-shift and amplitude of each reflecting element, a ``smart radio environment'' can be created, thereby strengthing the desired signals and mitigating the interference \cite{di2020smart}. Furthermore, IRSs enjoy additional practical advantages such as conformal geometry, light weight, and low profile, hence they can be conveniently deployed in future wireless networks for coverage enhancement. Owning to their appealing features, extensive researches have been carried out to facilitate the integration of IRSs into 6G wireless networks, by addressing their practical challenges, including channel estimation, IRS phase-shift optimization, and IRS deployment architecture/placement design. (see \cite{wu2021intelligent} and the references therein).

To fully reap the potential gains brought about by the IRS, it is of paramount importance to appropriately design its reflection coefficients so that the wireless propagation environment is reshaped for favorable signal transmission. This benefit has spurred great enthusiasm in the community. Motivated by this, the refection pattern design of the IRS has been extensively studied in the literature under various system setups, such as non-orthogonal multiple access (NOMA) \cite{mu2020exploiting, chen2021irs1, zheng2020intelligent, chen2022active, fu2021reconfigurable}, orthogonal frequency division multiplexing (OFDM) based wireless systems \cite{yang2020intelligent, li2021intelligent}, multiple-input multiple-output (MIMO) systems \cite{10186460, 9459505, chen2020performance, 9279253, pan2020multicell}, integrated sensing and communication \cite{hua2022joint, meng2022intelligent}, as well as wireless information and power transfer \cite{9610992,zhi2022active,9716123,9400380}. In particular, from an information theoretical viewpoint, it is essential to characterize the fundamental limits of IRS aided wireless systems so as to understand their achievable maximum performance gains. To this end, the seminal work \cite{wu2019beamforming} investigated the link-level received power and unveiled that a passive beamforming gain of order ${\cal O}\left( {{N^2}} \right)$ can be achieved for a total number of $N$ elements. Regarding a single-user MIMO system, the authors of \cite{zhang2020capacity} studied the maximum capacity achieved by jointly optimizing the IRS-aided passive beamforming and active transceiver/receiver beamforming. For IRS-aided multi-user systems, the capacity regions of the broadcast channel were characterized in \cite{mu2021capacity, 10159991} under the single-antenna and multi-antenna setups, respectively. Owing to the favorable channels brought by the IRS, the results in \cite{10159991} demonstrated that the capacity achieved by the efficient linear transmit precoder is able to approach that of classic dirty paper coding (DPC) under large $N$.
\begin{figure}[!t]
\setlength{\abovecaptionskip}{-5pt}
\setlength{\belowcaptionskip}{-5pt}
\centering
\includegraphics[width= 0.75\textwidth]{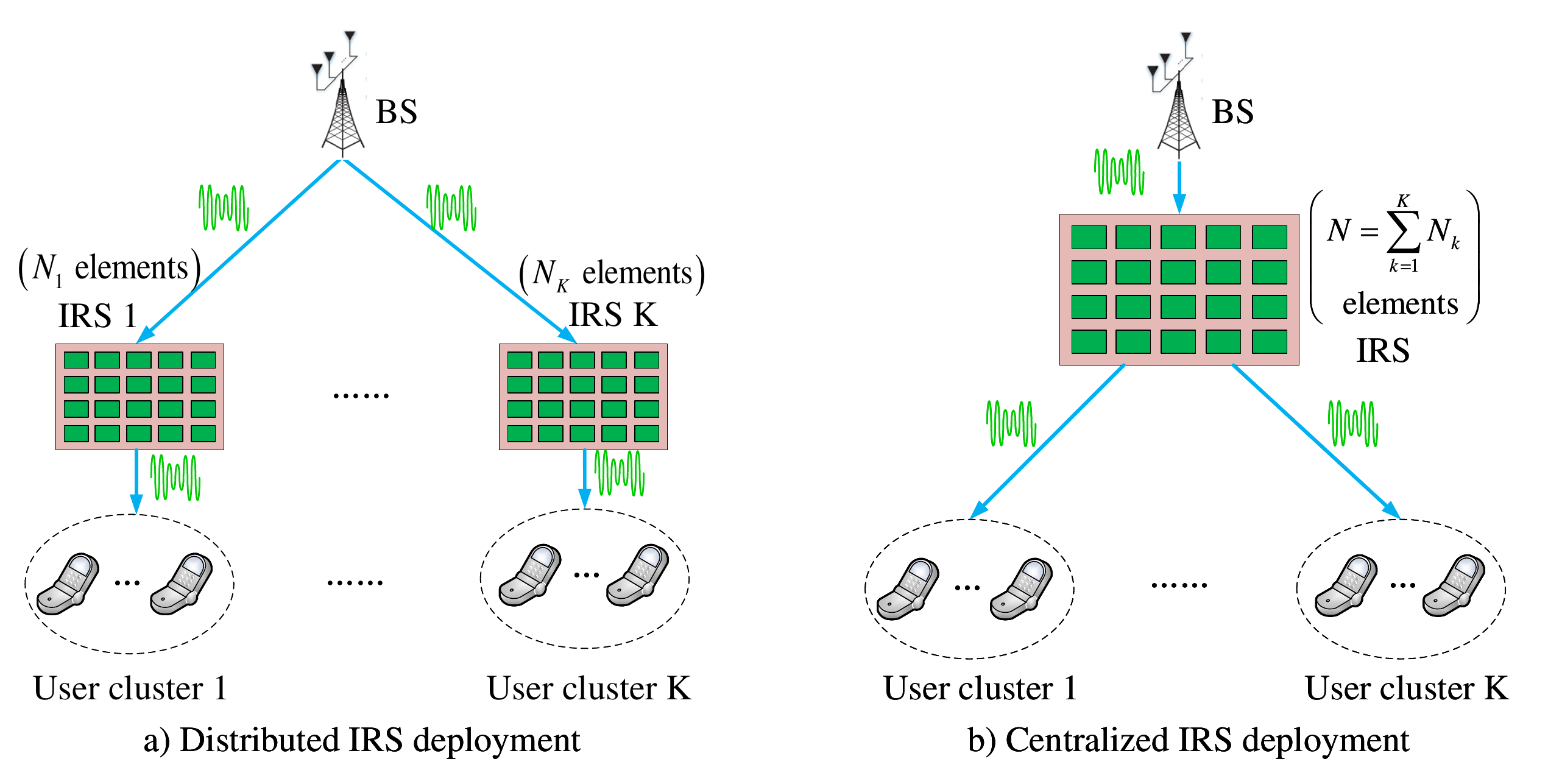}
\DeclareGraphicsExtensions.
\caption{An IRS-aided multi-user communication system with different IRS deployment strategies..}
\label{model}
\vspace{-10pt}
\end{figure}

Apart from the design of the IRS reflection pattern, IRS deployment can be flexibly optimized for further improving the system performance, thereby providing a new degree of freedom (DoF) for realizing channels customizations \cite{9745477,9740570,9423667,9261117,9976945,9427474}. However, a critical issue in IRS aided communications is that the IRS reflection link suffers from the effect of ``double fading'', which leads to severe path-loss. For the basic single-user setup, a piorneer tutorial paper \cite{wu2021intelligent} unveiled that the IRS should be placed close to either the user or the base station (BS) to minimize the resultant double path-loss effect of the IRS-aided link. For a more general multi-user setup, where different user clusters are located far apart from each other, the design of the IRS deployment should appropriately balance the performance of each individual user. In view of this issue, two typical IRS deployment architectures can be used for reducing the double path-loss of all users \cite{wu2021intelligent,9745477}, namely the \emph{distributed} and \emph{centralized IRS}, as shown in Fig. 1. For the \emph{distributed IRS} setup, all IRS elements are grouped into multiple small IRSs and each of them is placed in the vicinity of one user cluster, as illustrated in Fig. 1 (a). By contrast, all IRS elements are co-located to form a single large IRS under the \emph{centralized IRS} setup, which is placed near the BS, as illustrated in Fig. 1 (b). For these two IRS deployment architectures relying on a single-antenna BS setup, the authors of \cite{9427474} characterized the capacity regions of both the broadcast channel and multiple access channel from an information theoretical viewpoint. The analytical results of \cite{9427474} demonstrated that the \emph{centralized IRS} outperforms \emph{distributed IRS} in terms of its capacity due to the higher passive beamforming gain of the former.

Note that the result of \cite{9427474} is limited to the case with a single-antenna BS, whereas multiple antennas are generally equipped at the BS in current fifth-generation (5G) and beyond networks. In particular, the spatial domain can be fully exploited in multi-antenna networks for further improving the network capacity by serving multiple users in the same resource block simultaneously \cite{1424315}. In contrast to IRS aided single-antenna systems, the performance of IRS aided multi-antenna communication systems is determined by both the received power and the spatial multiplexing gain, which is even more important than the former in the high signal-to-noise ratio (SNR) region. Hence, the IRS deployment problem in a multi-antenna system needs to seek both increased passive beamforming gain and potential multiplexing gain. Compared to the \emph{centralized IRS}, each IRS of the distributed architecture can be flexibly deployed, which creates a rich multi-paths enviornment for improving the channel rank and thus enalbes multiple steams transmitted in parallel \cite{9976945}. By considering the fundamental tradeoff between the \emph{spatial multiplexing gain} and \emph{passive beamforming gain}, it remains to be unsolved which IRS deployment architecture achieves a higher capacity in multi-antenna systems, which thus motivates this work.

In this paper, we focus our attention on charactering the capacity of the multi-antenna broadcast channel assisted by both the \emph{distributed IRS} and \emph{centralized IRS}, as shown in Fig. 1. We aim for establishing an analytical framework to theoretically compare the capacity of the two IRS deployment architectures. To this end, the fundamental capacity limits of each IRS deployment architecture have to be characterized by capturing the distinct channel features of the two IRS deployment manners, which lay the foundation for further performance comparison. Note that the capacity characterization problem is non-trivial at all, since it involves the joint design of the IRS passive beamforming, IRS location, and BS's active beamforming. Aiming to address these issues, the main contributions of this paper are summarized as follows.

\begin{itemize}
  \item First, for drawing essential insights, we consider a special case of the line-of-sight (LoS) and homogeneous channel setup. By exploiting the unique channel structures of both IRS deployment architectures, their capacity regions are derived in closed form. For the \emph{distributed IRS} setup, an \emph{ideal IRS deployment condition} is unveiled and then we demonstrate that its capacity-achieving scheme is based on space-division multiple access (SDMA) employing maximum ratio transmission (MRT) based beamforming towards each IRS. By contrast, for the \emph{centralized IRS} setup, we reveal that its capacity-achieving scheme is based on alternating transmission among each user in a time-division multiple access (TDMA) manner by employing dynamic IRS beamforming.
  \item Second, we theoretically compare the \emph{distributed} and \emph{centralized IRS} in terms of their capacity. By carefully capturing the fundamental tradeoff between the spatial multiplexing gain and passive beamforming gain, sufficient conditions of ensuring that \emph{distributed IRS} and \emph{centralized IRS} performs better are unveiled, respectively. Our analytical results demonstrate that the sum-rate achieved by the \emph{distributed IRS} is higher than that of the \emph{centralized} one provided that $N$ is higher than a threshold, which differs from the conclusion in \cite{9427474} where the later always outperforms the former.
  \item Next, motivated by the superiority of the distributed IRS in the high $N$ regime, we focus our attention on the transmission and IRS element allocation design of the \emph{distributed IRS} architecture. By exploiting user channel correlation of intra-clusters and inter-clusters under the general Rician fading channel, we propose an efficient hybrid SDMA-TDMA multiple access scheme for harnessing both the spatial multiplexing gain and the dynamic IRS beamforming gain. Then, we investigate the issue of IRS element allocation to customize channels for both the sum-rate and the minimum user rate maximization objectives.
  \item Finally, under the general Rician fading channel, we derive the closed-form expression of the ergodic rate based on our proposed design. This performance characterization provides an efficient way for quantifying the performance erosion of the achievable sum-rate under Rician fading channels relative to that under the pure LoS channel. Extensive numerical results are presented to corroborate our theoretical findings and to unveil the benefits of the \emph{distributed IRS} in terms of improving the system capacity under various system setups.
\end{itemize}

The rest of this paper is organized as follows. Section II presents the system model of the \emph{distributed IRS} and \emph{centralized IRS} deployment architectures. In Section III, we provide a theoretical capacity comparison of these two IRS deployment architectures. Section IV addresses the transmission design and IRS element allocation problem for the \emph{distributed IRS}. Finally, we conclude in Section V.

\emph{Notations:} Boldface upper-case and lower-case  letter denote matrix and   vector, respectively.  ${\mathbb C}^ {d_1\times d_2}$ stands for the set of  complex $d_1\times d_2$  matrices. For a complex-valued vector $\bf x$, ${\left\| {\bf x} \right\|}$ represents the  Euclidean norm of $\bf x$, ${\rm arg}({\bf x})$ denotes  the phase of   $\bf x$, and ${\rm diag}(\bf x) $ denotes a diagonal matrix whose main diagonal elements are extracted from vector $\bf x$.
For a vector $\bf x$, ${\bf x}^*$ and  ${\bf x}^H$  stand for  its conjugate and  conjugate transpose respectively.   For a square matrix $\bf X$,  ${\rm{Tr}}\left( {\bf{X}} \right)$ and $\left\| {\bf{X}} \right\|_2$  respectively  stand for its trace and Euclidean norm. A circularly symmetric complex Gaussian random variable $x$ with mean $ \mu$ and variance  $ \sigma^2$ is denoted by ${x} \sim {\cal CN}\left( {{{\mu }},{{\sigma^2 }}} \right)$. ${\mathop{\rm Conv}\nolimits} \left( {\cal X} \right)$ denotes the convex hull operation of the set ${\cal X}$. $ \cup $ represents the union operation.
\vspace{-8pt}
\section{System Model}

We consider a wireless network where a multi-antenna BS serves multiple user clusters, denoted by ${{\cal A}_k}$, $k \in {\cal K} \buildrel \Delta \over = \left\{ {1, \ldots ,K} \right\}$, that are sufficiently far apart from each other. The BS is equipped with $M$ antennas and all users are equipped with a single-antenna. We focus on downlink transmission, where the BS sends independent messages to users. Moreover, a total $N$ IRS reflecting elements are deployed for enhancing wireless transmissions. For the IRS aided multi-antenna network, we consider two different deployment strategies for the $N$ available IRS elements, namely the \emph{distributed IRS} and the \emph{centralized IRS}. In particular, for the \emph{distributed IRS}, the $N$ IRS elements are grouped into $K$ distributed IRSs (see Fig. 1 (a)), where IRS $k$, $k \in {\cal K}$, with ${N_k}$ IRS elements, is deployed in the vicinity of user cluster ${{\cal A}_k}$, subject to $\sum\nolimits_{k = 1}^K {{N_k}}=N$. By contrast, for the \emph{centralized IRS}, all the available $N$ IRS elements form one single IRS, which is deployed in the vicinity of the BS (see Fig. 1 (b)). Specifically, we assume that $ L$ users are located in each user cluster ${{\cal A}_k}$, denoted by a set ${{\cal U}_k} \buildrel \Delta \over = \left\{ {u_1^k,u_2^k, \ldots ,u_{{L}}^k} \right\}$. The BS-user direct links are assumed to be severely blocked due to densely distributed obstacles. In the following, we describe the system models of both scenarios.

\subsection{Distributed IRS}
For the \emph{distributed IRS}, the baseband equivalent channels spanning from the BS to IRS $k$ and from IRS $k$ to user $u_l^k$ are denoted by ${\bf{G}}_k^{\rm{D}}\in \mathbb{C}^{{{N_k}} \times M}$ and ${\left( {{\bf{h}}_{r,kl}^{\rm{D}}} \right)^H}\in \mathbb{C}^{1 \times {{N_k}}}$, respectively. We assume that the distributed IRSs are deployed at desirable locations, so that there exists line of sight (LoS) paths to both the users and BS. Thus, we characterize the IRSs involved channels, i.e., ${\bf{G}}_k^{\rm{D}}$ and ${\left( {{\bf{h}}_{r,kl}^{\rm{D}}} \right)^H}$, by Rician fading. The BS to IRS $k$ channel can be expressed as
\begin{align}\label{channel_Gk}
{\bf{G}}_k^{\rm{D}} = \rho _{g,k}^{\rm{D}}\left( {\sqrt {\delta _k^{\rm{D}}/\left( {\delta _k^{\rm{D}} + 1} \right)} {\bf{\bar G}}_k^{\rm{D}} + \sqrt {1/\left( {\delta _k^{\rm{D}} + 1} \right)} {\bf{\tilde G}}_k^{\rm{D}}} \right),
\end{align}
where ${\left( {\rho _{g,k}^{\rm{D}}} \right)^2}$ is the large-scale path-loss, and ${\delta _k^{\rm{D}}}$ denotes the Rician factor.
The elements in ${{\bf{\tilde G}}_k^{\rm{D}}}$ are identically and independent (i.i.d.) complex Gaussian random variables with zero mean and unit variance, i.e., ${\cal C}{\cal N}\left( {0,1} \right)$. We assume that an ${N_k}\left( {{N_{v,k}} \times {N_{h,k}}} \right)$-element uniform planar array (UPA) is used at IRS $k$, $k \in {\cal K}$ and a uniform linear array (ULA) is adopted at the BS. Then, the LoS channel component ${{\bf{\bar G}}_k^{\rm{D}}}$ can be expressed as
\begin{align}\label{channel_Gk_LoS}
{{\bf{\bar G}}_k^{\rm{D}}} = {{\bf{a}}_{{\rm{S}},k}}\left( {\cos \phi _{{\rm{T}},k}^{\rm{AOA}},\sin \phi _{{\rm{T}},k}^{\rm{AOA}}\sin \theta _{{\rm{T}},k}^{\rm{AOA}}} \right){\bf{a}}_M^H\left( {\sin \theta _{{\rm{T}},k}^{\rm{AOD}}} \right),
\end{align}
where ${\theta _{{\rm{T}},k}^{\rm{AOA}}}$, ${\phi _{{\rm{T}},k}^{\rm{AOA}}}$, and ${\theta _{{\rm{T}},k}^{\rm{AOD}}}$ denote the horizontal AoA, the vertical AoA, and AoD of the BS-IRS $k$ link, respectively. Furthermore, ${{\bf{a}}_M}\left(  \cdot  \right)$ and ${{\bf{a}}_{{\rm{S}},k}}\left(  \cdot  \right)$ represent the array response vectors at the BS and IRS $k$, respectively.

Similar to the BS-IRS $k$ link, the channel spanning from IRS $k$ to user $u_l^k$ is given by
\begin{align}\label{channel_hk}
{\left( {{\bf{h}}_{r,kl}^{\rm{D}}} \right)^H} \!\!=\!\!\rho _{r,kl}^{\rm{D}}\left( {\sqrt {\frac{{\varepsilon _k^{\rm{D}}}}{{\varepsilon _k^{\rm{D}} + 1}}} {{\left( {{\bf{\bar h}}_{r,kl}^{\rm{D}}} \right)}^H} \!\!+\!\! \sqrt {\frac{1}{{\varepsilon _k^{\rm{D}} + 1}}} {{\left( {{\bf{\tilde h}}_{r,kl}^{\rm{D}}} \right)}^H}} \right),
\end{align}
where ${\left( {\rho _{r,kl}^{\rm{D}}} \right)^2}$ is the large-scale path-loss, ${\varepsilon _k^{\rm{D}}}$ denotes the Rician factor, ${{{\left( {{\bf{\tilde h}}_{r,kl}^{\rm{D}}} \right)}^H}}$ is the NLoS channel component, and ${{{\left( {{\bf{\bar h}}_{r,kl}^{\rm{D}}} \right)}^H}} = {\bf{a}}_{{\rm{S}},k}^H\left( {\cos \phi _{{\rm{R}},kl}^{\rm{AOD}},\sin \phi _{{\rm{R}},kl}^{\rm{AOD}}\sin \theta _{{\rm{R}},kl}^{\rm{AOD}}} \right)$ with ${\theta _{{\rm{R}},kl}^{\rm{AOD}}}$ and ${\phi _{{\rm{R}},kl}^{\rm{AOD}}}$ are the corresponding horizontal AoD and the vertical AoD of IRS $k$-user $u_l^k$ link. Note that the array response of the UPA can be decomposed into the Kronecker product of two ULAs as ${{\bf{a}}_{{\rm{S}},k}}\left( {X,Y} \right) = {{\bf{a}}_{{N_{v,k}}}}\left( X \right) \otimes {{\bf{a}}_{{N_{h,k}}}}\left( Y \right)$ with the array response vector of the ULA expressed by
\begin{align}\label{ULA}
{{\bf{a}}_N}\left( X \right) = \left[ {1,{e^{j\frac{{2\pi d}}{\lambda }X}}, \ldots ,{e^{j\frac{{2\pi d}}{\lambda }X\left( {N - 1} \right)}}} \right].
\end{align}
Let us denote the reflection pattern of IRS $k$ by ${\bf{\Theta }}_k^{\rm{D}} = {\mathop{\rm diag}\nolimits} \left( {{e^{j\theta _{k,1}^{\rm{D}}}}, \ldots, {e^{j\theta _{k,{N_k}}^{\rm{D}}}}} \right)\in \mathbb{C}^{{{N_k}} \times {{N_k}}}$ with $\theta _{k,n}^{\rm{D}} \in {\cal F} \buildrel \Delta \over = \left\{ {2\pi q/Q,q = 0, \ldots ,Q - 1} \right\},\forall n \in \left\{ {1, \ldots ,{N_k}} \right\}$, where $Q = {2^b}$ and $b$ denotes the number of bits adopted to quantize phase-shifts. Let ${{{\cal P}^{\rm{D}}}}$ denote the set of all possible IRS reflection patterns and thus $\left| {{{\cal P}^{\rm{D}}}} \right| = {b^{{N_k}}}$. Since the user clusters are sufficiently far apart, it is assumed that the signal reflected by IRS $j$ is negligible at the users located in user cluster ${{\cal A}_k}$, $k \ne j$. Therefore, the effective channel spanning from the BS to user $u_l^k$ is given by
\begin{align}\label{effective_channel d}
{\left( {{\bf{h}}_{kl}^{\rm{D}}\left( {{\bf{\Theta }}_k^{\rm{D}}} \right)} \right)^H} = {\left( {{\bf{h}}_{r,kl}^{\rm{D}}} \right)^H}{\bf{\Theta }}_k^{\rm{D}}{\bf{G}}_k^{\rm{D}}.
\end{align}
Let ${\bf{x}} = {\left[ {{x_1}, \ldots ,{x_M}} \right]^T}\in \mathbb{C}^{M \times 1}$ denote the vector transmitted the beamformer, where the average transmit power constraint is given by ${\mathop{\rm E}\nolimits} \left[ {\left\| {\bf{x}} \right\|_2^2} \right] \le {P_{\max }}$, where ${P_{\max }}$ denotes the maximum allowed transmitted power at the BS. Then, the vector of received symbols under the case of \emph{distributed IRS}, which is denoted by ${{\bf{y}}^{\rm{D}}} = {\left[ {y_{11}^D, \ldots ,y_{K{L}}^{\rm{D}}} \right]^T}$ (with $y_{kl}^{\rm{D}}$ representing the received signal at user $u_l^k$) is given by
\begin{align}\label{received_signal}
{{\bf{y}}^{\rm{D}}} = {{\bf{H}}^{\rm{D}}}\left( {\left\{ {{\bf{\Theta }}_k^{\rm{D}}} \right\}} \right){\bf{x}} + {\bf{z}},
\end{align}
where ${{\bf{H}}^{\rm{D}}}\left( {\left\{ {{\bf{\Theta }}_k^{\rm{D}}} \right\}} \right) \!=\! {\left[ {{\bf{h}}_{11}^{\rm{D}}\left( {{\bf{\Theta }}_k^{\rm{D}}} \right), \ldots ,{\bf{h}}_{K{L}}^{\rm{D}}\left( {{\bf{\Theta }}_k^{\rm{D}}} \right)} \right]^H}$ and ${\bf{z}} \!=\! {\left[ {{z_{11}}, \ldots ,{z_{K{L}}}} \right]^T}$ denotes the additive white Gaussian noise vector with ${z_{kl}}$ representing the noise at user $u_l^k$. Each entry in ${\bf{z}}$ is an i.i.d random variable obeying the distribution of ${\cal C}{\cal N}\left( {0,{\sigma ^2}} \right)$ with ${\sigma ^2}$ denoting the noise power.

\subsection{Centralized IRS}
For the \emph{centralized IRS} deployment, we denote the channel from the BS to the (single) IRS and from the IRS to user $u_l^k$ by ${{\bf{G}}^{\rm{C}}}\in \mathbb{C}^{N \times M}$ and ${\left( {{\bf{h}}_{r,kl}^{\rm{C}}} \right)^H}\in \mathbb{C}^{1 \times N}$, respectively. Upon adopting the Rician channel model, ${{\bf{G}}^{\rm{C}}}$ and ${\left( {{\bf{h}}_{r,kl}^{\rm{C}}} \right)^H}$ can be expressed respectively as
\begin{align}\label{channel_G_C}
{{\bf{G}}^{\rm{C}}} = \rho _g^{\rm{C}}\left( {\sqrt {\frac{{{\delta ^{\rm{C}}}}}{{{\delta ^{\rm{C}}} + 1}}} {{{\bf{\bar G}}}^{\rm{C}}} + \sqrt {\frac{1}{{{\delta ^{\rm{C}}} + 1}}} {{{\bf{\tilde G}}}^{\rm{C}}}} \right),
\end{align}
\begin{align}\label{channel_hk_C}
{\left( {{\bf{h}}_{r,kl}^{\rm{C}}} \right)^H} = \rho _{r,kl}^{\rm{C}}\left( {\sqrt {\frac{{\varepsilon _k^{\rm{C}}}}{{\varepsilon _k^{\rm{C}} + 1}}} {{\left( {{\bf{\bar h}}_{r,kl}^{\rm{C}}} \right)}^H}\sqrt {\frac{1}{{\varepsilon _k^{\rm{C}} + 1}}} {{\left( {{\bf{\tilde h}}_{r,kl}^{\rm{C}}} \right)}^H}} \right),
\end{align}
where ${\left( {\rho _g^{\rm{C}}} \right)^2}$ and ${\left( {\rho _{r,kl}^{\rm{C}}} \right)^2}$ are the large-scale path-loss, ${{\delta ^{\rm{C}}}}$ and ${\varepsilon _k^{\rm{C}}}$ are the associated Rician factors, ${{{{\bf{\tilde G}}}^{\rm{C}}}}$ and ${{{\left( {{\bf{\tilde h}}_{r,kl}^{\rm{C}}} \right)}^H}}$ are NLoS channel components whose elements are i.i.d random variables following ${\cal C}{\cal N}\left( {0,1} \right)$. With an ${N}\left( {{N_{v}} \times {N_{h}}} \right)$-element \emph{centralized IRS}, ${{{{\bf{\bar G}}}^{\rm{C}}}}$ and ${{{\left( {{\bf{\bar h}}_{r,kl}^{\rm{D}}} \right)}^H}}$ are LoS channel components, which can be expressed as ${{{\bf{\bar G}}}^{\rm{C}}} = {{\bf{a}}_{\rm{S}}}\left( {\cos \phi _{\rm{T}}^{{\rm{AOA}}},\sin \phi _{\rm{T}}^{{\rm{AOA}}}\sin \theta _{\rm{T}}^{{\rm{AOA}}}} \right){\bf{a}}_M^H\left( {\sin \theta _{\rm{T}}^{{\rm{AOD}}}} \right)$ and ${\left( {{\bf{\bar h}}_{r,kl}^{\rm{C}}} \right)^H} = {\bf{a}}_{\rm{S}}^H\left( {\cos \phi _{{\rm{RC}},kl}^{{\rm{AOD}}},\sin \phi _{{\rm{RC}},kl}^{{\rm{AOD}}}\sin \theta _{{\rm{RC}},kl}^{{\rm{AOD}}}} \right)$. Note that we have ${{\bf{a}}_{\rm{S}}}\left( {X,Y} \right) = {{\bf{a}}_{{N_v}}}\left( X \right) \otimes {{\bf{a}}_{{N_h}}}\left( Y \right)$ and $\left\{ {\phi _{\rm{T}}^{{\rm{AOA}}},\theta _{\rm{T}}^{{\rm{AOA}}},\theta _{\rm{T}}^{{\rm{AOD}}},\phi _{{\rm{RC}},kl}^{{\rm{AOD}}},\theta _{{\rm{RC}},kl}^{{\rm{AOD}}}} \right\}$ is a set of AoA/AoD information for the IRS links. Let ${{\bf{\Theta }}^{\rm{C}}} = {\rm{diag}}\left( {{e^{j\theta _1^{\rm{C}}}}, \ldots {e^{j\theta _N^{\rm{C}}}}} \right)$ denote the reflection pattern of the \emph{centralized IRS} with $\theta _n^{\rm{C}} \in {\cal F}$, $\forall n \in \left\{ {1,...,N} \right\}$ and ${{{\cal P}^{\rm{D}}}}$ denote the set of all possible IRS reflection patterns of ${{\bf{\Theta }}^{\rm{C}}}$. Hence, the effective channel vector from the BS to user $u_l^k$ under the \emph{centralized IRS} deployment can be written as
\begin{align}\label{effective_channel c}
{\left( {{\bf{h}}_{kl}^{\rm{C}}\left( {{{\bf{\Theta }}^{\rm{C}}}} \right)} \right)^H} = {\left( {{\bf{h}}_{r,kl}^{\rm{C}}} \right)^H}{{\bf{\Theta }}^{\rm{C}}}{{\bf{G}}^{\rm{C}}}.
\end{align}
Under the same expressions of the transmitted signal vector and receiver noise vector as in the \emph{distributed IRS} case, the signal received at all users can be modeled similar to \eqref{received_signal} upon replacing ${{\bf{H}}^{\rm{D}}}\left( {\left\{ {{\bf{\Theta }}_k^{\rm{D}}} \right\}} \right)$ by ${{\bf{H}}^{\rm{C}}}\left( {{{\bf{\Theta }}^{\rm{C}}}} \right)$, where ${{\bf{H}}^{\rm{C}}}\left( {{{\bf{\Theta }}^{\rm{C}}}} \right) = {\left[ {{\bf{h}}_{11}^{\rm{C}}\left( {{{\bf{\Theta }}^{\rm{C}}}} \right), \ldots ,{\bf{h}}_{K{L}}^{\rm{C}}\left( {{\bf{\Theta }}^{\rm{C}}} \right)} \right]^H}$. Comparing \eqref{effective_channel d} and \eqref{effective_channel c}, we observe that the effective channel for the BS-user $u_l^k$ link under the \emph{distributed IRS} deployment only depends on the reflection pattern ${\bf{\Theta }}_k^{\rm{D}}$ of IRS $k$, which is deployed in the vicinity of user cluster ${{\cal A}_k}$. Whereas for the \emph{centralized IRS} deployment, the effective channels of all users depend on the common IRS reflection pattern ${{\bf{\Theta }}^{\rm{C}}}$ of the single IRS.

\section{Distributed IRS Versus Centralized IRS}
In this section, we provide a theoretical performance comparison for the achievable rate under the two IRS deployment schemes. In each user cluster ${{\cal A}_k}$, we select a typical user, denoted by ${{\tilde u}^k} \in {{\cal U}_k}$, to represent the performance of its associated user cluster. For notational simplicity, we drop the user index $l$ used for the specific user in cluster ${{\cal A}_k}$. Hence, the channel from IRS $k$ (the single IRS) to user ${{\tilde u}^k}$ is denoted by ${\left( {{\bf{h}}_{r,k}^{\rm{D}}} \right)^H}$ (${\left( {{\bf{h}}_{r,k}^{\rm{C}}} \right)^H}$) and its associated large scale path-loss is ${\left( {\rho _{r,k}^{\rm{D}}} \right)^2}$ (${\left( {\rho _{r,k}^{\rm{C}}} \right)^2}$). Since we have ${\mathop{\rm rank}\nolimits} \left( {{{\bf{H}}^{\rm{D}}}\left( {\left\{ {{\bf{\Theta }}_k^{\rm{D}}} \right\}} \right)} \right) \le \min \left( {M,K} \right)$ (${\mathop{\rm rank}\nolimits} \left( {{{\bf{H}}^{\rm{C}}}\left( {{{\bf{\Theta }}^{\rm{D}}}} \right)} \right) \le \min \left( {M,K} \right)$), it is assumed that $K \le M$ for all subsequent discussions in this section. For fair comparison of the two deployment architectures, we consider the following \emph{homogeneous channel} setup, as described in \emph{Assumption 1} below.

\emph{Assumption 1} (\emph{Homogeneous Channel}): For the channel statistical properties of ${\bf{G}}_k^{\rm{D}}$, ${\left( {{\bf{h}}_{r,k}^{\rm{D}}} \right)^H}$, ${{\bf{G}}^{\rm{C}}}$, and ${\left( {{\bf{h}}_{r,k}^{\rm{C}}} \right)^H}$, it is assumed that
\begin{align}\label{assumption1}
{{{\left( {\rho _g^{\rm{C}}} \right)}^2}{{\left( {\rho _{r,1}^{\rm{C}}} \right)}^2} = {{\left( {\rho _{g,1}^{\rm{D}}} \right)}^2}{{\left( {\rho _{r,1}^{\rm{D}}} \right)}^2} =  \ldots  = {{\left( {\rho _g^{\rm{C}}} \right)}^2}{{\left( {\rho _{r,K}^{\rm{C}}} \right)}^2} = {{\left( {\rho _{g,K}^{\rm{D}}} \right)}^2}{{\left( {\rho _{r,K}^{\rm{D}}} \right)}^2}.}
\end{align}

The above homogeneous channel assumption holds in practice provided that the concatenated twin-hop path-loss factors of the IRS channels in the \emph{distributed} and \emph{centralized IRS} are the same. Based on \emph{Assumption 1}, we compare the maximum achievable rate for the two IRS deployment schemes, as detailed below.
\vspace{-8pt}
\subsection{Theoretical Performance Comparison}
We first consider the LoS channel case, i.e., $\delta _k^{\rm{D}} = \varepsilon _k^{\rm{D}} = {\delta ^{\rm{C}}} = \varepsilon _k^{\rm{C}} \to \infty $, where the IRS involved channels under the two IRS deployment schemes reduce to ${{\bf{G}}^{\rm{C}}} = \rho _g^{\rm{C}}{{{\bf{\bar G}}}^{\rm{C}}}$, ${\left( {{\bf{h}}_{r,k}^{\rm{C}}} \right)^H} = \rho _{r,k}^{\rm{C}}{\left( {{\bf{\bar h}}_{r,k}^{\rm{C}}} \right)^H}$, ${\bf{G}}_k^{\rm{D}} = \rho _{g,k}^{\rm{D}}{\bf{\bar G}}_k^{\rm{D}}$, ${\left( {{\bf{h}}_{r,k}^{\rm{D}}} \right)^H} = \rho _{r,k}^{\rm{D}}{\left( {{\bf{\bar h}}_{r,k}^{\rm{D}}} \right)^H}$.  For ease of exposition, we use ${{{\bf{a}}_{{\rm{S}},k}}}$ and ${{{\bf{a}}_{{\rm{S}}}}}$ to replace ${{{\bf{a}}_{{\rm{S}},k}}\left( {\cos \phi _{{\rm{T}},k}^{{\rm{AOA}}},\sin \phi _{{\rm{T}},k}^{{\rm{AOA}}}\sin \theta _{{\rm{T}},k}^{{\rm{AOA}}}} \right)}$ and ${{\bf{a}}_s}\left( {\cos \phi _{{\rm{RC}},kl}^{{\rm{AOD}}},\sin \phi _{{\rm{RC}},kl}^{{\rm{AOD}}}\sin \theta _{{\rm{RC}},kl}^{{\rm{AOD}}}} \right)$, respectively. We next derive the maximum achievable rate under the distributed and centralized IRS deployment, respectively.

\subsubsection{Capacity Characterization for Distributed IRS}
For the distributed IRS, the achievable rate tuple is denoted by ${{\bf{r}}^{\rm{D}}} = {\left[ {r_1^{\rm{D}}, \ldots ,r_K^{\rm{D}}} \right]^T}$ with ${r_k^{\rm{D}}}$ representing the achievable rate of user ${{\tilde u}^k}$. The active beamforming vector at the BS for user ${{\tilde u}^k}$ is denoted by ${{{\bf{w}}_k}}\in \mathbb{C}^{M \times 1}$. It is known that for a general non-degraded broadcast channel, its corresponding capacity achieving scheme is DPC \cite{1424315,10159991}. By using DPC, the capacity region along with given IRS reflection patterns and active beamforming vectors, i.e., $\left\{ {{\bf{\Theta }}_k^{\rm{D}},{{\bf{w}}_k}} \right\}$ is the region consisting of all rate-tuples that satisfy the following constraints \cite{1424315,10159991}:
\begin{align}\label{rate_distributed}
0 \le r_k^{\rm{D}} \le R_k^{\rm{D}}\left( {\left\{ {{{\bf{w}}_k}} \right\},{\bf{\Theta }}_k^{\rm{D}}} \right),\forall k,
\end{align}
with $\sum\nolimits_{k = 1}^K {{{\left\| {{{\bf{w}}_k}} \right\|}^2} \le {P_{\max }}}$, where
\begin{align}\label{defination1}
R_k^{\rm{D}}\left( {\left\{ {{{\bf{w}}_k}} \right\},{\bf{\Theta }}_k^{\rm{D}}} \right) \buildrel \Delta \over = {\log _2}\left( {1 + \frac{{{{\left| {{{\left( {{\bf{h}}_k^{\rm{D}}\left( {{\bf{\Theta }}_k^{\rm{D}}} \right)} \right)}^H}{{\bf{w}}_k}} \right|}^2}}}{{\sum\nolimits_{i = k + 1}^K {{{\left| {{{\left( {{\bf{h}}_k^{\rm{D}}\left( {{\bf{\Theta }}_k^{\rm{D}}} \right)} \right)}^H}{{\bf{w}}_i}} \right|}^2} + {\sigma ^2}} }}} \right),\forall k.
\end{align}
We denote the set characterized by \eqref{rate_distributed} as ${{\cal C}^{\rm{D}}}\left( {\left\{ {{\bf{\Theta }}_k^{\rm{D}}} \right\},\left\{ {{{\bf{w}}_k}} \right\}} \right)$. By flexibly designing $\left\{ {{\bf{\Theta }}_k^{\rm{D}},{{\bf{w}}_k}} \right\}$, any rate tuple within the union set ${{\cal C}^{\rm{D}}}\left( {\left\{ {{\bf{\Theta }}_k^{\rm{D}}} \right\},\left\{ {{{\bf{w}}_k}} \right\}} \right)$ over all feasible $\left\{ {{\bf{\Theta }}_k^{\rm{D}},{{\bf{w}}_k}} \right\}$ can be achieved. By further employing time sharing among different $\left\{ {{\bf{\Theta }}_k^{\rm{D}},{{\bf{w}}_k}} \right\}$, the capacity region of the IRS aided broadcast channel under the \emph{distributed IRS} deployment is defined as \cite{10159991}
\begin{align}\label{capacity_region_distributed}
{{\cal C}^{\rm{D}}}\!\! \buildrel \Delta \over =\!\! {\rm{Conv}}\left( {\bigcup\limits_{\theta _{k,n}^{\rm{D}} \in {\cal F},\sum\nolimits_{k = 1}^K {{{\left\| {{{\bf{w}}_k}} \right\|}^2} \le {P_{\max }}} } {{{\cal C}^{\rm{D}}}\left( {\left\{ {{\bf{\Theta }}_k^{\rm{D}}} \right\},\left\{ {{{\bf{w}}_k}} \right\}} \right)} } \right).
\end{align}

By assuming that ${N_k} = N/K$, we derive ${{\cal C}^{\rm{D}}}$ in closed form by
exploiting the special channel structure under the \emph{ideal deployment} scenario, which is provided in the following proposition.
\begin{pos}
Under the condition that
\begin{align}\label{deployment_condition}
\left| {\sin \theta _{{\rm{T}},k}^{{\rm{AOD}}} - \sin \theta _{{\rm{T}},i}^{{\rm{AOD}}}} \right| = \frac{{\lambda m}}{{dM}},\forall k \ne i,m \in \left\{ {1, \ldots ,M} \right\},
\end{align}
${{\cal C}^{\rm{D}}}$ is given by
\begin{align}\label{C_D_closed}
{{\cal C}^{\rm{D}}} = \left\{ {{{\bf{r}}^{\rm{D}}}:0 \le r_k^{\rm{D}} \le \bar r_k^{\rm{D}}} \right\},
\end{align}
under the constraint of $\sum\nolimits_{k = 1}^K {{p_k} = {P_{\max }}} $, where
\begin{align}\label{defination2}
\bar r_k^{\rm{D}} \!\!=\!\! {\log _2}\left( {1 \!\!+\!\! \frac{{{p_k}M{N^2}{{\left( {\rho _{g,k}^{\rm{D}}} \right)}^2}{{\left( {\rho _{r,k}^{\rm{D}}} \right)}^2}}}{{{K^2}{\sigma ^2}}}{{\left( {\frac{{{2^b}}}{\pi }\sin \frac{\pi }{{{2^b}}}} \right)}^2}} \right).
\end{align}
Accordingly, the maximum sum-rate of the $K$ users is obtained as
\begin{align}\label{sumrate_D_closed}
R_s^{\rm{D}} = K{\log _2}\left( {1 + \frac{{{P_{\max }}M{N^2}{{\left( {\rho _{g,1}^{\rm{D}}\rho _{r,1}^{\rm{D}}} \right)}^2}}}{{{K^3}{\sigma ^2}}}{{\left( {\frac{{{2^b}}}{\pi }\sin \frac{\pi }{{{2^b}}}} \right)}^2}} \right),
\end{align}
which is achieved by
\begin{align}\label{optimal_beamforming}
\begin{array}{l}
{\left( {{\bf{\Theta }}_k^{\rm{D}}} \right)^*} = \mathop {\arg \max }\limits_{{\bf{\Theta }}_k^{\rm{D}} \in {P^{\rm{D}}}} {\left| {{{\left( {{\bf{h}}_{r,k}^{\rm{D}}} \right)}^H}{\bf{\Theta }}_k^{\rm{D}}{{\bf{a}}_{{\rm{S}},k}}} \right|^2},\\
{\bf{w}}_k^* = \frac{{\sqrt {{P_{\max }}} {{\bf{a}}_M}\left( {\sin \theta _{{\rm{T}},k}^{{\rm{AOD}}}} \right)}}{{\sqrt {KM} }}.
\end{array}
\end{align}
\end{pos}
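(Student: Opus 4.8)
The plan is to exploit the rank-one structure of the pure-LoS BS--IRS channels to collapse the MISO broadcast channel into $K$ parallel single-user links, and then to dispatch the passive beamforming gain and the power allocation as two essentially independent sub-problems. First I would substitute the degenerate LoS channels into \eqref{effective_channel d}: since ${\bf{G}}_k^{\rm D}=\rho_{g,k}^{\rm D}{\bf{a}}_{{\rm S},k}{\bf{a}}_M^H(\sin\theta_{{\rm T},k}^{\rm AOD})$ is rank one, the effective channel of the typical user becomes $({\bf{h}}_k^{\rm D}({\bf{\Theta}}_k^{\rm D}))^H=c_k({\bf{\Theta}}_k^{\rm D}){\bf{a}}_M^H(\sin\theta_{{\rm T},k}^{\rm AOD})$ with the scalar gain $c_k({\bf{\Theta}}_k^{\rm D})=\rho_{g,k}^{\rm D}\rho_{r,k}^{\rm D}({\bf{\bar h}}_{r,k}^{\rm D})^H{\bf{\Theta}}_k^{\rm D}{\bf{a}}_{{\rm S},k}$. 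Hence every user's channel is locked to the fixed direction ${\bf{a}}_M(\sin\theta_{{\rm T},k}^{\rm AOD})$ and the only freedom left to IRS $k$ is to reshape the magnitude $|c_k|$.

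Next I would verify that \eqref{deployment_condition} is exactly the condition making the $K$ steering vectors $\{{\bf{a}}_M(\sin\theta_{{\rm T},k}^{\rm AOD})\}$ mutually orthogonal (this needs $K\le M$, as already assumed): their inner product reduces to the geometric sum $\sum_{n=0}^{M-1}e^{j2\pi nm/M}$, which vanishes whenever $m\not\equiv0\ (\mathrm{mod}\ M)$, while $\|{\bf{a}}_M(\cdot)\|^2=M$. Consequently, aligning ${\bf{w}}_i$ with ${\bf{a}}_M(\sin\theta_{{\rm T},i}^{\rm AOD})$ forces $({\bf{h}}_k^{\rm D})^H{\bf{w}}_i=0$ for all $i\neq k$, the interference sum in \eqref{defination1} disappears, and the DPC region degenerates into a Cartesian product of single-user intervals. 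For the converse direction, for any feasible $\{{\bf{\Theta}}_k^{\rm D},{\bf{w}}_k\}$ I would drop the non-negative interference term in \eqref{defination1} and apply Cauchy--Schwarz, $|({\bf{h}}_k^{\rm D})^H{\bf{w}}_k|^2\le\|{\bf{h}}_k^{\rm D}\|^2\|{\bf{w}}_k\|^2=M|c_k({\bf{\Theta}}_k^{\rm D})|^2 p_k$ with $p_k=\|{\bf{w}}_k\|^2$ and $\sum_k p_k\le P_{\max}$; maximizing $|c_k|^2$ over the discrete phase set then gives $r_k^{\rm D}\le\bar r_k^{\rm D}$, so every achievable rate tuple lies in \eqref{C_D_closed}. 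The upper-right corner is attained \emph{simultaneously} by the MRT beamformers ${\bf{w}}_k^*$ together with the gain-maximizing reflection patterns $({\bf{\Theta}}_k^{\rm D})^*$ of \eqref{optimal_beamforming}; and since the set $\{{\bf{r}}^{\rm D}:\sum_k(2^{r_k^{\rm D}}-1)/\gamma_k\le P_{\max}\}$ is convex (each $2^{r_k^{\rm D}}$ is convex in $r_k^{\rm D}$), the ${\rm Conv}(\cdot)$/time-sharing operation in \eqref{capacity_region_distributed} is vacuous, which establishes \eqref{C_D_closed}--\eqref{defination2} once the passive gain is in hand.

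The crux is precisely that passive beamforming gain: proving $\max_{{\bf{\Theta}}_k^{\rm D}\in{\cal P}^{\rm D}}|({\bf{\bar h}}_{r,k}^{\rm D})^H{\bf{\Theta}}_k^{\rm D}{\bf{a}}_{{\rm S},k}|^2=N_k^2\bigl(\tfrac{2^b}{\pi}\sin\tfrac{\pi}{2^b}\bigr)^2$ with $N_k=N/K$. Writing $({\bf{\bar h}}_{r,k}^{\rm D})^H{\bf{\Theta}}_k^{\rm D}{\bf{a}}_{{\rm S},k}=\sum_n e^{j(\psi_{k,n}+\theta_{k,n}^{\rm D})}$, where $\psi_{k,n}$ is the deterministic phase of the $n$-th entry of the product of the two UPA responses (an affine function of the planar element index, via the Kronecker decomposition of Section II), each $\theta_{k,n}^{\rm D}\in{\cal F}$ may be chosen so that the residual $\psi_{k,n}+\theta_{k,n}^{\rm D}$ lands in $[-\pi/2^b,\pi/2^b)$. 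Since the $\psi_{k,n}$ form an arithmetic progression whose increment is (generically) irrational modulo $2\pi/2^b$, Weyl's equidistribution theorem gives $\tfrac1{N_k}\sum_n\cos(\psi_{k,n}+\theta_{k,n}^{\rm D})\to\tfrac{2^b}{2\pi}\int_{-\pi/2^b}^{\pi/2^b}\cos t\,dt=\tfrac{2^b}{\pi}\sin\tfrac{\pi}{2^b}$ and $\tfrac1{N_k}\sum_n\sin(\psi_{k,n}+\theta_{k,n}^{\rm D})\to0$, and a matching upper bound $\tfrac1{N_k}\sum_n\max_q\cos(\psi_{k,n}+2\pi q/2^b-\beta)\to\tfrac{2^b}{\pi}\sin\tfrac{\pi}{2^b}$ for every $\beta$ shows this choice is optimal; hence $|c_k|^2\to(\rho_{g,k}^{\rm D}\rho_{r,k}^{\rm D})^2N_k^2\bigl(\tfrac{2^b}{\pi}\sin\tfrac{\pi}{2^b}\bigr)^2$. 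This passage is the one requiring genuine care, being only asymptotically exact in $N$ and leaning on the array geometry; substituting it into the per-user bound delivers \eqref{defination2}.

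Finally, for the sum-rate, \emph{Assumption 1} makes all per-user coefficients $\gamma_k=\tfrac{MN^2}{K^2\sigma^2}(\rho_{g,1}^{\rm D}\rho_{r,1}^{\rm D})^2\bigl(\tfrac{2^b}{\pi}\sin\tfrac{\pi}{2^b}\bigr)^2$ equal, so maximizing $\sum_k\log_2(1+\gamma_k p_k)$ subject to $\sum_k p_k=P_{\max}$ is, by concavity and symmetry (a one-line KKT argument), solved by the uniform allocation $p_k=P_{\max}/K$, which yields \eqref{sumrate_D_closed}; a quick check confirms that ${\bf{w}}_k^*$ in \eqref{optimal_beamforming} has $\|{\bf{w}}_k^*\|^2=P_{\max}/K$, consistent with this allocation and with the interference-nulling alignment used above, completing the argument.
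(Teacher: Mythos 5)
Your proposal is correct and follows essentially the same route as the paper's Appendix A: an outer bound obtained by dropping inter-user interference and decomposing the gain maximization into MRT at the BS plus discrete phase alignment at the IRS (with the $\bigl(\tfrac{2^b}{\pi}\sin\tfrac{\pi}{2^b}\bigr)^2$ factor emerging asymptotically from the residual quantization phases), followed by tightness via the orthogonality of the steering vectors under \eqref{deployment_condition} and equal power allocation under \emph{Assumption 1}. Your Weyl-equidistribution justification of the quantization-loss limit and your explicit check that the convex-hull operation is vacuous are slightly more careful than the paper's ``uniformly distributed residual phase'' and implicit time-sharing arguments, but they formalize the same steps rather than constituting a different proof.
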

\begin{proof}
Please refer to Appendix A.
\end{proof}

From Proposition 1, the capacity-achieving transmission scheme under the \emph{distributed IRS} is based on SDMA by employing MRT beamforming towards each IRS. All the points on the boundary of ${\cal C}^{\rm{D}}$ can be achieved by flexibly adjusting the power allocation ${\left\{ {{p_k}} \right\}}$ under the constraint of $\sum\nolimits_{k = 1}^K {{p_k} = {P_{\max }}}$. Thanks to the deployment principle unveiled in \eqref{deployment_condition}, which is referred as \emph{ideal IRS deployment condition}, the transmit beamforming at the BS is able to simultaneously maximize the received power and fully null the inter-user interference. Besides, the role of the reflection pattern of each distributed IRS is to maximize the received power of the typical user in its user cluster. Hence, no sophisticated DPC and time sharing operation are needed due to \eqref{deployment_condition}.


\subsubsection{Capacity Characterization for Centralized IRS}
For the \emph{centralized IRS}, the achievable rate tuple is denoted by ${{\bf{r}}^{\rm{C}}} = {\left[ {r_1^{\rm{C}}, \ldots ,r_K^{\rm{C}}} \right]^T}$ with ${r_k^{\rm{C}}}$ representing the achievable rate of user ${{\tilde u}^k}$. Similar to the case of \emph{distributed IRS} deployment, the capacity region of the \emph{centralized IRS} deployment is defined as
\begin{align}\label{capacity_region_centralized}
{{\cal C}^{\rm{C}}} \buildrel \Delta \over = {\rm{Conv}}\left( {\bigcup\limits_{\theta _n^{\rm{C}} \in {\cal F},\sum\nolimits_{k = 1}^K {{{\left\| {{{\bf{w}}_k}} \right\|}^2} \le {P_{\max }}} } {{{\cal C}^{\rm{C}}}\left( {{{\bf{\Theta }}^{\rm{C}}},\left\{ {{{\bf{w}}_k}} \right\}} \right)} } \right),
\end{align}
where ${{{\cal C}^{\rm{C}}}\left( {{{\bf{\Theta }}^{\rm{C}}},\left\{ {{{\bf{w}}_k}} \right\}} \right)}$ is a set of rate-tuples satisfying the following constraints
\begin{align}\label{rate_centralized}
0 \le r_k^{\rm{C}} \le R_k^{\rm{C}}\left( {\left\{ {{{\bf{w}}_k}} \right\},{{\bf{\Theta }}^{\rm{C}}}} \right)
 \buildrel \Delta \over = {\log _2}\left( {1 + \frac{{{{\left| {{{\left( {{\bf{h}}_k^{\rm{C}}\left( {{{\bf{\Theta }}^{\rm{C}}}} \right)} \right)}^H}{{\bf{w}}_k}} \right|}^2}}}{{\sum\nolimits_{i = k + 1}^K {{{\left| {{{\left( {{\bf{h}}_k^{\rm{C}}\left( {{{\bf{\Theta }}^{\rm{C}}}} \right)} \right)}^H}{{\bf{w}}_i}} \right|}^2} + {\sigma ^2}} }}} \right),
\end{align}
with $\sum\nolimits_{k = 1}^K {{{\left\| {{{\bf{w}}_k}} \right\|}^2} \le {P_{\max }}}$. Then, we derive ${{\cal C}^{\rm{C}}}$ in closed form, as detailed in the following proposition.
\begin{pos}
As $N \to \infty $, the capacity region of the \emph{centralized IRS} deployment is given by
\begin{align}\label{C_C_closed}
{{\cal C}^{\rm{C}}} = \bigcup\limits_{{\rho _k} \in \left[ {0,1} \right],\sum\nolimits_{k = 1}^K {{\rho _k} = 1} } {\left\{ {{{\bf{r}}^C}:0 \le r_k^{\rm{C}} \le \bar r_k^C} \right\}},
\end{align}
where
\begin{align}\label{defination4}
\bar r_k^{\rm{C}} = {\rho _k}{\log _2}\left( {1 + \frac{{{P_{\max }}M{N^2}{{\left( {\frac{{{2^b}}}{\pi }\sin \frac{\pi }{{{2^b}}}} \right)}^2}}}{{{\sigma ^2}{{\left( {\rho _g^{\rm{C}}\rho _{r,k}^{\rm{C}}} \right)}^{ - 2}}}}} \right).
\end{align}
${{\cal C}^{\rm{C}}}$ is achieved by time sharing among ${\Gamma _k}$'s, which are given by
\begin{align}\label{pattern_k}
{\Gamma _k} = \left\{ {{{\bf{w}}_k} = \sqrt {\frac{{{P_{\max }}}}{M}} {{\bf{a}}_M}\left( {\sin \theta _{\rm{T}}^{{\rm{AOD}}}} \right),{{\bf{w}}_i} = {\bf{0}},\forall i \ne k,{{\left( {{{\bf{\Theta }}^{\rm{C}}}} \right)}^*} = \mathop {\arg \max }\limits_{{{\bf{\Theta }}^{\rm{C}}} \in {{\cal P}^{\rm{C}}}} {{\left| {{{\left( {{\bf{\bar h}}_{r,k}^{\rm{C}}} \right)}^H}{{\bf{\Theta }}^{\rm{C}}}{{\bf{a}}_{\rm{S}}}} \right|}^2}} \right\}.
\end{align}
Its corresponding sum-rate is
\begin{align}\label{sumrate_centralize}
R_s^{\rm{C}} = {\log _2}\left( {1 + \frac{{{P_{\max }}M{N^2}{{\left( {\rho _g^{\rm{C}}} \right)}^2}{{\left( {\rho _{r,k}^{\rm{C}}} \right)}^2}}}{{{\sigma ^2}}}{{\left( {\frac{{{2^b}}}{\pi }\sin \frac{\pi }{{{2^b}}}} \right)}^2}} \right).
\end{align}
\end{pos}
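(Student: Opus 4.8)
The plan is to reduce the \emph{centralized IRS} broadcast channel to an equivalent scalar (degraded) Gaussian broadcast channel and then read off its capacity region, with the limit $N\to\infty$ entering only through the passive beamforming gain. The first step exploits the rank-one structure of the BS--IRS link in the pure-LoS regime: by \eqref{channel_G_C} together with the LoS expression stated above it, ${{\bf{G}}^{\rm{C}}} = \rho _g^{\rm{C}}{{\bf{a}}_{\rm{S}}}{\bf{a}}_M^H\left( {\sin \theta _{\rm{T}}^{{\rm{AOD}}}} \right)$, a rank-one matrix, so \eqref{effective_channel c} becomes ${\left( {{\bf{h}}_k^{\rm{C}}\left( {{{\bf{\Theta }}^{\rm{C}}}} \right)} \right)^H} = {c_k}\left( {{{\bf{\Theta }}^{\rm{C}}}} \right){\bf{a}}_M^H\left( {\sin \theta _{\rm{T}}^{{\rm{AOD}}}} \right)$ with the scalar ${c_k}\left( {{{\bf{\Theta }}^{\rm{C}}}} \right) = \rho _g^{\rm{C}}\rho _{r,k}^{\rm{C}}{\left( {{\bf{\bar h}}_{r,k}^{\rm{C}}} \right)^H}{{\bf{\Theta }}^{\rm{C}}}{{\bf{a}}_{\rm{S}}}$. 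Since \emph{every} user sees the common transmit signature ${{\bf{a}}_M}\left( {\sin \theta _{\rm{T}}^{{\rm{AOD}}}} \right)$, any component of ${\bf{x}}$ orthogonal to it is useless to all users; hence one may restrict ${\bf{x}} = {{\bf{a}}_M}\left( {\sin \theta _{\rm{T}}^{{\rm{AOD}}}} \right)s/\sqrt M$ without loss of optimality, and the model collapses to a scalar Gaussian broadcast channel in which user $k$ enjoys SNR ${\gamma _k}\left( {{{\bf{\Theta }}^{\rm{C}}}} \right) = M{P_{\max }}{\left| {{c_k}\left( {{{\bf{\Theta }}^{\rm{C}}}} \right)} \right|^2}/{\sigma ^2}$ when all power is allocated to it. These component channels are stochastically degraded (ordered by ${\left| {{c_k}\left( {{{\bf{\Theta }}^{\rm{C}}}} \right)} \right|}$), so for a \emph{fixed} ${{\bf{\Theta }}^{\rm{C}}}$ the capacity region is the classical superposition-coding region of a degraded Gaussian broadcast channel \cite{1424315}, and by \eqref{capacity_region_centralized} the overall region ${{\cal C}^{\rm{C}}}$ is the convex hull of the union of these over all ${{\bf{\Theta }}^{\rm{C}}}\in{{\cal P}^{\rm{C}}}$.

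Next I would establish that ${{\cal C}^{\rm{C}}}$ contains the set in \eqref{C_C_closed} by evaluating the scheme ${\Gamma _k}$ of \eqref{pattern_k}: set ${{\bf{w}}_k} = \sqrt {{P_{\max }}/M} {{\bf{a}}_M}\left( {\sin \theta _{\rm{T}}^{{\rm{AOD}}}} \right)$, ${{\bf{w}}_i} = {\bf{0}}$ for $i\ne k$, and pick ${{\bf{\Theta }}^{\rm{C}}}$ to maximize ${\left| {{{\left( {{\bf{\bar h}}_{r,k}^{\rm{C}}} \right)}^H}{{\bf{\Theta }}^{\rm{C}}}{{\bf{a}}_{\rm{S}}}} \right|^2}$. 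Here the limit $N\to\infty$ is used exactly as in Appendix A / Proposition 1: choosing each of the $N$ diagonal phases to be the $b$-bit quantization of the value that aligns the corresponding summand leaves residual phase errors that equidistribute over $[ { - \pi /{2^b},\pi /{2^b}} ]$, so that $\max_{{{\bf{\Theta }}^{\rm{C}}}\in{{\cal P}^{\rm{C}}}} {\left| {{{\left( {{\bf{\bar h}}_{r,k}^{\rm{C}}} \right)}^H}{{\bf{\Theta }}^{\rm{C}}}{{\bf{a}}_{\rm{S}}}} \right|^2} \to {N^2}{\left( {\frac{{{2^b}}}{\pi }\sin \frac{\pi }{{{2^b}}}} \right)^2}$. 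Consequently ${\Gamma _k}$ attains the single-user rate ${\log _2}\left( {1 + {\gamma ^{\max }}} \right)$ with ${\gamma ^{\max }} = \frac{{{P_{\max }}M{N^2}{{\left( {\rho _g^{\rm{C}}\rho _{r,k}^{\rm{C}}} \right)}^2}}}{{{\sigma ^2}}}{\left( {\frac{{{2^b}}}{\pi }\sin \frac{\pi }{{{2^b}}}} \right)^2}$, which is the same for every $k$ by \emph{Assumption 1}, \eqref{assumption1}. Time sharing among the ${\Gamma _k}$'s with time fractions $\left\{ {{\rho _k}} \right\}$ (including an idle share) then achieves every rate tuple in \eqref{C_C_closed}--\eqref{defination4}, i.e.\ the whole simplex with vertices ${\bf 0}$ and ${\log _2}\left( {1 + {\gamma ^{\max }}} \right){{\bf e}_k}$.

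The converse is then immediate, and in fact exact. For any fixed ${{\bf{\Theta }}^{\rm{C}}}$ the least-noisy user can decode all $K$ messages, so $\sum\nolimits_k r_k^{\rm{C}} \le {\log _2}\left( {1 + \max_k {\gamma _k}\left( {{{\bf{\Theta }}^{\rm{C}}}} \right)} \right) \le {\log _2}\left( {1 + {\gamma ^{\max }}} \right)$; since this bound is preserved under time sharing over reflection patterns, ${{\cal C}^{\rm{C}}}$ is contained in the simplex $\left\{ {{{\bf{r}}^{\rm{C}}}:r_k^{\rm{C}} \ge 0,\ \sum\nolimits_k r_k^{\rm{C}} \le {\log _2}\left( {1 + {\gamma ^{\max }}} \right)} \right\}$, which --- because all $\gamma _k^{\max }$ coincide under \eqref{assumption1} --- is exactly the set on the right-hand side of \eqref{C_C_closed}. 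Together with the inner bound this proves \eqref{C_C_closed}, and \eqref{sumrate_centralize} is simply its sum-rate vertex, attained by any single ${\Gamma _k}$. I expect the one genuinely analytic step --- and hence the main obstacle --- to be establishing the $N\to\infty$ passive beamforming gain, that is, justifying the equidistribution of the $b$-bit quantization residuals over the UPA element set so that $\max_{{{\bf{\Theta }}^{\rm{C}}}}{\left| {{{\left( {{\bf{\bar h}}_{r,k}^{\rm{C}}} \right)}^H}{{\bf{\Theta }}^{\rm{C}}}{{\bf{a}}_{\rm{S}}}} \right|^2}$ and thus ${\gamma ^{\max }}$ take the stated form; once that identity (already needed for Proposition 1) is in hand, the rank-one reduction, the degradedness ordering, and the interchange of the pattern union with the sum-rate half-space are routine.
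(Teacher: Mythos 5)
Your proposal is correct and follows essentially the same route as the paper's Appendix B: exploit the rank-one LoS structure so that all users share the common transmit signature ${{\bf{a}}_M}\left( {\sin \theta _{\rm{T}}^{{\rm{AOD}}}} \right)$ (a degraded scalar broadcast channel), bound the sum rate by the best single-user rate, which under \emph{Assumption 1} and the $N \to \infty$ phase-quantization limit equals ${\log _2}\left( {1 + {\gamma ^{\max }}} \right)$ for every $k$, and match this outer bound by time sharing among the ${\Gamma _k}$'s. The only cosmetic difference is that the paper packages the converse through the uplink--downlink duality constraints $\sum\nolimits_{k \in {\cal J}} {r_k^{\rm{C}}} \le {\log _2}\bigl( {1 + \sum\nolimits_{k \in {\cal J}} {{p_k}{\gamma _k}} /{\sigma ^2}} \bigr)$ for all ${\cal J} \subseteq {\cal K}$, whereas you invoke the classical ``strongest receiver decodes all messages'' bound for a degraded Gaussian broadcast channel; both yield the same simplex.
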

\begin{proof}
Please refer to Appendix B.
\end{proof}

Proposition 2 unveils that the capacity-achieving transmission scheme under the centralized IRS deployment is alternating transmission among each user in TDMA manner, where each user's effective channel power gain is maximized by dynamically configuring the IRS reflection pattern. Due to the assumption of the homogenous channels, each user shares the same received SNR under the optimal BS beamforming vector and IRS reflection pattern. Hence, the general superposition coding based NOMA is not needed for achieving the boundary of the capacity region.

Note that \eqref{sumrate_D_closed} and \eqref{sumrate_centralize} presented in Proposition 1 and Proposition 2 lay the theoretical foundation for the capacity comparison between the \emph{distributed} and \emph{centralized IRS} deployment.

\subsubsection{Distributed IRS versus Centralized IRS}
It is observed from \eqref{sumrate_D_closed} and \eqref{sumrate_centralize} that the passive beamforming gain achieved by the \emph{centralized IRS} is higher than that of the \emph{distributed IRS}, i.e., ${N^2}{\left( {\frac{{{2^b}}}{\pi }\sin \frac{\pi }{{{2^b}}}} \right)^2} \ge {\left( {\frac{N}{K}} \right)^2}{\left( {\frac{{{2^b}}}{\pi }\sin \frac{\pi }{{{2^b}}}} \right)^2}$ for $1 \le K \le M$. From the perspective of DoF, which determines the spatial multiplexing gain, we have
\begin{align}\label{DoF_comparison}
{d^{\rm{D}}}\!\! \buildrel \Delta \over =\!\! \mathop {\lim }\limits_{{P_{\max }} \to \infty } \frac{{R_s^D}}{{{{\log }_2}{P_{\max }}}} \!\! =\!\!  K \ge 1 = {d^{\rm{C}}} \buildrel \Delta \over = \mathop {\lim }\limits_{{P_{\max }} \to \infty } \frac{{R_s^C}}{{{{\log }_2}{P_{\max }}}},
\end{align}
with ${d^{\rm{D}}}$ and ${d^{\rm{C}}}$ respectively denoting the DoF of \emph{distributed IRS} and \emph{centralized IRS}, which indicates that the multiplexing gain achieved by the former is higher than that of the latter. By taking both the passive beamforming gain and multiplexing gain into account, the comparison outcome for these two cases depends on the specific system parameters. First, we unveil sufficient conditions for ensuring that \emph{distributed IRS} outperforms \emph{centralized IRS} and those of its opposite in the following theorem.
\begin{thm}
For $\forall K \in \left\{ {1, \ldots ,M} \right\}$, we have $R_s^{\rm{D}} \le R_s^{\rm{C}}$ provided that
\begin{align}\label{sufficient_condition1}
N \le \sqrt {\frac{{{C_{{\rm{th}}}}{\sigma ^2}}}{{{P_{\max }}M{{\left( {\rho _{g,1}^{\rm{D}}} \right)}^2}{{\left( {\rho _{r,1}^{\rm{D}}} \right)}^2}{{\left( {\frac{{{2^b}}}{\pi }\sin \frac{\pi }{{{2^b}}}} \right)}^2}}}} ,
\end{align}
where ${{C_{{\rm{th}}}}}$ is the unique solution of the equation
\begin{align}\label{equation}
g\left( x \right) \buildrel \Delta \over = \ln \left( {1 + x} \right) - 3 + \frac{3}{{1 + x}} = 0
\end{align}
located in $\left( {0,\infty } \right)$. Otherwise, $R_s^{\rm{D}} \ge R_s^{\rm{C}}$ for $\forall K \in \left\{ {1, \ldots ,M} \right\}$ if
\begin{align}\label{sufficient_condition2}
N \ge M\sqrt {\frac{{{C_{{\rm{th}}}}{\sigma ^2}}}{{{P_{\max }}{{\left( {\rho _{g,1}^{\rm{D}}} \right)}^2}{{\left( {\rho _{r,1}^{\rm{D}}} \right)}^2}{{\left( {\frac{{{2^b}}}{\pi }\sin \frac{\pi }{{{2^b}}}} \right)}^2}}}}.
\end{align}
\end{thm}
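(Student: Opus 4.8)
The plan is to strip the comparison down to a one–dimensional monotonicity question. Invoking \emph{Assumption 1}, which forces $\big(\rho_g^{\rm C}\rho_{r,k}^{\rm C}\big)^2=\big(\rho_{g,1}^{\rm D}\rho_{r,1}^{\rm D}\big)^2$ for all $k$, I would introduce the shorthand $A\buildrel\Delta\over=\frac{P_{\max}M(\rho_{g,1}^{\rm D}\rho_{r,1}^{\rm D})^2}{\sigma^2}\big(\frac{2^b}{\pi}\sin\frac{\pi}{2^b}\big)^2>0$ and $x\buildrel\Delta\over=AN^2>0$, so that \eqref{sumrate_D_closed} and \eqref{sumrate_centralize} become $(\ln 2)R_s^{\rm D}=\varphi(K)$ and $(\ln 2)R_s^{\rm C}=\varphi(1)$, where $\varphi(t)\buildrel\Delta\over=t\ln\!\big(1+x/t^3\big)$ is regarded as a function of the real variable $t\ge 1$. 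Hence both assertions reduce to controlling the sign of $\varphi(K)-\varphi(1)$ for $K\in\{1,\dots,M\}$, and since $\varphi$ will turn out to be monotone over the relevant range, it suffices to study $\varphi$ on the continuous interval $[1,M]$.

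Next I would differentiate. The chain rule yields $\varphi'(t)=\ln\!\big(1+x/t^3\big)-\frac{3x/t^3}{1+x/t^3}=g\!\big(x/t^3\big)$, with $g$ precisely the function in \eqref{equation}. Thus the shape of $\varphi$ is dictated entirely by the sign of $g$ on $(0,\infty)$, which I would pin down as follows: $g(0)=0$, and $g'(u)=\frac{1}{1+u}-\frac{3}{(1+u)^2}=\frac{u-2}{(1+u)^2}$ shows that $g$ is strictly decreasing on $(0,2)$ and strictly increasing on $(2,\infty)$; together with $g(2)=\ln 3-2<0$ and $g(u)\to\infty$ as $u\to\infty$, this gives a \emph{unique} zero $C_{\rm th}\in(2,\infty)$, with $g<0$ on $(0,C_{\rm th})$ and $g>0$ on $(C_{\rm th},\infty)$. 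Consequently $\varphi'(t)>0\iff t<t^\star$ and $\varphi'(t)<0\iff t>t^\star$, where $t^\star\buildrel\Delta\over=(x/C_{\rm th})^{1/3}$; that is, $\varphi$ increases on $[1,t^\star]$ and decreases on $[t^\star,\infty)$.

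The two regimes of the theorem then follow by locating $t^\star$ relative to the endpoints $1$ and $M$. If $N$ obeys \eqref{sufficient_condition1}, then $x=AN^2\le C_{\rm th}$, so $t^\star\le 1$ and $\varphi$ is non-increasing on $[1,\infty)$; hence $\varphi(K)\le\varphi(1)$ for every integer $K\ge 1$, i.e.\ $R_s^{\rm D}\le R_s^{\rm C}$. If instead $N$ obeys \eqref{sufficient_condition2}, then $x=AN^2\ge C_{\rm th}M^3$, so for every $t\in[1,M]$ we have $x/t^3\ge x/M^3\ge C_{\rm th}$ and therefore $\varphi'(t)=g(x/t^3)\ge 0$; thus $\varphi$ is non-decreasing on $[1,M]$ and $\varphi(K)\ge\varphi(1)$ for every $K\in\{1,\dots,M\}$, i.e.\ $R_s^{\rm D}\ge R_s^{\rm C}$. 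It then only remains to verify that the inequalities $x\le C_{\rm th}$ and $x\ge C_{\rm th}M^3$, rewritten through $x=AN^2$, coincide with the displayed bounds on $N$ — the extra factor $M$ in \eqref{sufficient_condition2} arising because $\sqrt{M^3}$ in $\sqrt{C_{\rm th}M^3/A}$ combines with the $M^{-1/2}$ hidden inside $1/\sqrt{A}$.

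I expect the single genuinely delicate step to be the global sign analysis of $g$ on $(0,\infty)$ — in particular, arguing rigorously that its zero $C_{\rm th}$ is unique and cleanly separates the negative from the positive region. This is exactly what makes $\varphi$ unimodal, and hence what allows one threshold on $N$ to govern the comparison simultaneously for all admissible cluster counts $K$. By contrast, the computation of $\varphi'$, the convexity bookkeeping for $g'$, and the final back-substitution $x=AN^2$ are routine.
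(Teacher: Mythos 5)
Your proposal is correct and follows essentially the same route as the paper's Appendix C: relax $K$ to a continuous variable, identify the derivative of $K\mapsto K\ln\bigl(1+\gamma_0/K^3\bigr)$ with $g(\gamma_0/K^3)$, establish the sign pattern of $g$ via $g'(u)=(u-2)/(1+u)^2$ and the uniqueness of its zero $C_{\rm th}$, and then read off the two regimes from monotonicity on $[1,M]$. Your formulation in natural logarithms via $\varphi(t)=t\ln(1+x/t^3)$ is a slightly cleaner bookkeeping of the same argument (and incidentally sidesteps the paper's cosmetic mixing of $\log_2$ with base-$e$ constants in its derivative expression), but it is not a different proof.
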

\begin{proof}
Please refer to Appendix C.
\end{proof}

%

In Theorem 1, \eqref{sufficient_condition1} and \eqref{sufficient_condition2} serve as sufficient conditions for ensuring that \emph{centralized IRS} outperforms \emph{distributed IRS} and its opposite case, respectively. It is observed that the \emph{distributed IRS} deployment is preferable provided that the total number of IRS elements is sufficiently large. The reason is that the multiplexing gain of \emph{distributed IRS} is higher than that of \emph{centralized IRS}, which leads a faster increase of sum-rate with the receive SNR in the former case. Increasing the number of IRS elements helps enhance the SNR at users, which is beneficial for significantly improving the sum-rate under the \emph{distributed IRS} deployment. By contrast, when the number of IRS elements is small, the receive SNR at users is low and thus the sum-rate is mainly limited by the passive beamforming gain achieved, rather than by the multiplexing gain. Hence, \emph{centralized IRS} is preferable in this case due to the higher passive beamforming gain.

To gain more useful insights, we focus on the asymptotically high SNR case with $\left( {{P_{\max }}/{\sigma ^2}} \right) \to \infty$, such as ${P_{\max }}{\left( {\rho _{g,1}^{\rm{D}}} \right)^2}{\left( {\rho _{r,1}^{\rm{D}}} \right)^2}/{\sigma ^2} \gg 1$. In this case, the sufficient and necessary condition for ensuring that \emph{distributed IRS} outperforms \emph{centralized IRS} is unveiled in the following theorem.
\begin{thm}
Under the assumption of ${P_{\max }}{\left( {\rho _{g,1}^{\rm{D}}} \right)^2}{\left( {\rho _{r,1}^{\rm{D}}} \right)^2}/{\sigma ^2} \gg 1$, we have $R_s^{\rm{D}} \ge R_s^{\rm{C}}$ for any given $K$ satisfying $2 \le K \le M$ if and only if
\begin{align}\label{n_s_condition}
N \ge {N_{{\rm{th}}}} \buildrel \Delta \over = \sqrt {\frac{{{\sigma ^2}}}{{{P_{\max }}M{{\left( {\rho _{g,1}^{\rm{D}}} \right)}^2}{{\left( {\rho _{r,1}^{\rm{D}}} \right)}^2}{{\left( {\frac{{{2^b}}}{\pi }\sin \frac{\pi }{{{2^b}}}} \right)}^2}}}} {K^{\frac{{3K}}{{2\left( {K - 1} \right)}}}}.
\end{align}
\end{thm}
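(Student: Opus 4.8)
The plan is to turn the claimed equivalence into an elementary scalar inequality in $N$ and then exploit the high-SNR assumption to solve it in closed form. First, I would substitute the closed-form sum-rates from Propositions 1 and 2, together with the homogeneous-channel identity of \emph{Assumption 1}, $(\rho_g^{\rm C}\rho_{r,k}^{\rm C})^2=(\rho_{g,1}^{\rm D}\rho_{r,1}^{\rm D})^2$ for all $k$. Introducing the shorthand $\beta=\frac{P_{\max}M(\rho_{g,1}^{\rm D})^2(\rho_{r,1}^{\rm D})^2}{\sigma^2}\bigl(\frac{2^b}{\pi}\sin\frac{\pi}{2^b}\bigr)^2$, the two sum-rates become $R_s^{\rm D}=K\log_2\!\bigl(1+\beta N^2/K^3\bigr)$ and $R_s^{\rm C}=\log_2\!\bigl(1+\beta N^2\bigr)$, so that $R_s^{\rm D}\ge R_s^{\rm C}$ is equivalent to $\bigl(1+\beta N^2/K^3\bigr)^{K}\ge 1+\beta N^2$.

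Next, I would invoke the asymptotic regime $P_{\max}(\rho_{g,1}^{\rm D})^2(\rho_{r,1}^{\rm D})^2/\sigma^2\gg 1$, under which $\beta\gg 1$ and the constants inside the two logarithms are negligible against the remaining terms; this is the standard high-SNR reduction $\log_2(1+x)\mapsto\log_2 x$, and it collapses the comparison to $\bigl(\beta N^2/K^3\bigr)^{K}\ge\beta N^2$, i.e. $(\beta N^2)^{K-1}\ge K^{3K}$. Because $2\le K\le M$, the exponent $K-1$ is a positive integer, so $t\mapsto t^{K-1}$ is strictly increasing on $(0,\infty)$; hence the inequality holds if and only if $\beta N^2\ge K^{3K/(K-1)}$, that is, if and only if $N\ge K^{3K/(2(K-1))}/\sqrt{\beta}$. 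Substituting back the definition of $\beta$ gives $K^{3K/(2(K-1))}/\sqrt{\beta}=\sqrt{\sigma^2/\bigl(P_{\max}M(\rho_{g,1}^{\rm D})^2(\rho_{r,1}^{\rm D})^2(\frac{2^b}{\pi}\sin\frac{\pi}{2^b})^2\bigr)}\,K^{3K/(2(K-1))}$, which is exactly $N_{\rm th}$ in \eqref{n_s_condition}; the strict monotonicity of $t^{K-1}$ yields both directions of the ``if and only if'' simultaneously.

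The step I expect to be the main obstacle is making the high-SNR linearization clean: one must justify replacing $\log_2(1+x)$ by $\log_2 x$ in \emph{both} rate expressions and argue that the discarded corrections $K\log_2(1+K^3/(\beta N^2))$ and $\log_2(1+1/(\beta N^2))$ are negligible as $\beta\to\infty$, so that $N_{\rm th}$ is the limiting crossover point. If instead one keeps the exact inequality $\bigl(1+\beta N^2/K^3\bigr)^{K}\ge1+\beta N^2$, a single crossover in $N$ still exists — writing $u=1+\beta N^2/K^3$, the map $u\mapsto u^{K}-K^{3}u$ is convex, starts at $0$ with negative slope at $N=0$, and tends to $+\infty$ since $K\ge2$ — but its root has no closed form, which is precisely why the asymptotic assumption is imposed. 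It is also worth recording why $K\ge2$ is needed: for $K=1$ the distributed and centralized deployments yield identical sum-rates, so no threshold exists, consistent with the case excluded from the statement; the non-asymptotic regime, by contrast, is already handled by Theorem 1 through sufficient rather than necessary conditions.
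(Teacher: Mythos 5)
Your proposal is correct and follows essentially the same route as the paper: both replace $\log_2(1+x)$ by $\log_2 x$ in the two closed-form sum-rates under the high-SNR assumption and then solve the resulting inequality $(\beta N^2/K^3)^K \ge \beta N^2$ (the paper does this additively in the log domain, you do it by exponentiating, which is the same computation). Your extra remarks on the discarded correction terms and on the degenerate case $K=1$ are sound but do not change the argument.
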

\begin{proof}
Based on the assumption of ${P_{\max }}{\left( {\rho _{g,1}^{\rm{D}}} \right)^2}{\left( {\rho _{r,1}^{\rm{D}}} \right)^2}/{\sigma ^2} \gg 1$, we have
\begin{align}\label{approximation_distributed}
R_s^{\rm{D}}\mathop  \to \limits^{a.s.} K{\log _2}\left( {\frac{{{{\tilde \gamma }_0}{N^2}}}{{{K^3}}}} \right)= K\left( {2{{\log }_2}N + {{\log }_2}{{\tilde \gamma }_0} - 3{{\log }_2}K} \right) \buildrel \Delta \over = \tilde R_s^{\rm{D}}
\end{align}
\vspace{-8pt}
\begin{align}\label{approximation_centralized}
R_s^{\rm{C}}\mathop  \to \limits^{a.s.} {\log _2}\left( {{{\tilde \gamma }_0}{N^2}} \right) = 2{\log _2}N + {\log _2}{{\tilde \gamma }_0} \buildrel \Delta \over = \tilde R_s^{\rm{C}},
\end{align}
with ${{\tilde \gamma }_0} = {P_{\max }}M{\left( {\rho _{g,1}^{\rm{D}}} \right)^2}{\left( {\rho _{r,1}^{\rm{D}}} \right)^2}{\left( {\frac{{{2^b}}}{\pi }\sin \frac{\pi }{{{2^b}}}} \right)^2}/{\sigma ^2}$. By solving the inequality $\tilde R_s^{\rm{D}} \ge \tilde R_s^{\rm{C}}$ based on \eqref{approximation_distributed} and \eqref{approximation_centralized}, \eqref{n_s_condition} is naturally obtained.
\end{proof}

\begin{rem}
$\left\lceil {{N_{{\rm{th}}}}} \right\rceil $ defined in Theorem 2 represents the minimum number of total IRS elements required for \emph{distributed IRS} to outperforme \emph{centralized IRS}, which is a function of system parameters, e.g., $M$, $b$, ${P_{\max }}$, and $K$. It is seen from \eqref{n_s_condition} that ${{N_{{\rm{th}}}}}$  monotonously decreases with $M$, $b$, and ${P_{\max }}$, which suggests that the practical operating region for \emph{distributed IRS} can be extended by increasing the transmit power, the number  of antennas or quantization bits for IRS phase-shifts, since it is helpful for improving the received power. Upon relaxing ${{N_{{\rm{th}}}}}$ as a continuous variable, we obtain
\begin{align}\label{Elements_vs_K}
\frac{{\partial \ln {N_{{\rm{th}}}}}}{{\partial  K}} = \frac{{K - 1 - \ln K}}{{{{\left( {K - 1} \right)}^2}}}\mathop  \ge \limits^{\left( a \right)} 0,
\end{align}
where (a) holds due to $\ln x \le x - 1$ for $\forall x \ge 1$. Hence, $\left\lceil {{N_{{\rm{th}}}}} \right\rceil $ monotonically increases with $K$, which indicates that in total more IRS elements are needed for \emph{distributed IRS} to outperform \emph{centralized IRS}, when the number of user clusters is high.
\end{rem}
\begin{rem}
In the large-IRS regime, i.e., ${N \to \infty }$, we have
\begin{align}\label{distributed_versus_centralized_large_N}
\mathop {\lim }\limits_{N \to \infty } \frac{{R_s^{\rm{D}}}}{{R_s^{\rm{C}}}} = \mathop {\lim }\limits_{N \to \infty } \frac{{K\left( {2{{\log }_2}N + {{\log }_2}{{\tilde \gamma }_0} - 3{{\log }_2}K} \right)}}{{2{{\log }_2}N + {{\log }_2}{{\tilde \gamma }_0}}} = K,
\end{align}
which further demonstrates that \emph{distributed IRS} is more appealing in practical systems, provided that a high total number of IRS elements is affordable.
\end{rem}

\subsection{Numerical Results}
In this subsection, we provide numerical results to verify our theoritical findings under the LoS and homogeneous channel setup, as shown in \emph{Assumption 1}. We set $M=5$, $K = 4$.  ${P_{\max }} = 30$ dBm,  and $ {\sigma ^2}=- 90$ dBm. The distributed IRSs are deployed according to the \emph{ideal deployment condition} unveiled in \eqref{deployment_condition}. For the \emph{homogeneous channel} setup, the two-hop path-loss via the IRS link is set as ${\left( {\rho _g^{\rm{C}}\rho _{r,k}^{\rm{C}}} \right)^2} = {\left( {\rho _{g,k}^{\rm{D}}\rho _{r,k}^{\rm{D}}} \right)^2} =  - 140$ dB, $\forall k$ and the number of IRS elements for each IRS under the distributed IRS architecture is set to ${N_k} = N/K,\forall k$.

For comparison, we consider the following schemes: 1) \textbf{Distributed IRS-optimal}: the optimal SDMA-based transmission scheme unveiled in Proposition 1 is employed under the \emph{distributed IRS}; 2) \textbf{Centralized IRS-optimal}: the optimal TDMA-based transmission scheme unveiled in Proposition 2 is employed under the \emph{centralized IRS}; 3) \textbf{Distributed IRS-TDMA}: Under the \emph{distributed IRS}, the TDMA scheme is adopted. In Fig. \ref{sumrate_power}, we plot the sum-rate of all the schemes considered versus the maximum transmit power at the BS. It is observed that the sum-rate of the \emph{distributed IRS} under the optimal transmission scheme increases more sharply with the power than that of the \emph{centralized IRS}. This is expected since the \emph{distributed IRS} enjoys a higher spatial multiplexing gain, which agrees with our analysis in \eqref{DoF_comparison}. As such, the sum-rate of the \emph{distributed IRS} gradually exceeds that of the \emph{distributed IRS} as ${P_{\max }}$ increases and the relative performance gain becomes more pronounced for a high ${P_{\max }}$. Additionally, the \emph{centralized IRS} always outperforms the \emph{distributed IRS} under the TDMA scheme. This is due to fact that each user is only covered by its local IRS under the distributed IRS deployment, which results in a lower passive beamforming gain compared to the centralized IRS. The results highlight the importance of employing the most appropriate transmission scheme for each IRS deployment architecture.

\begin{figure}
\begin{minipage}[t]{0.4\linewidth}
\centering
\includegraphics[width=2.9in]{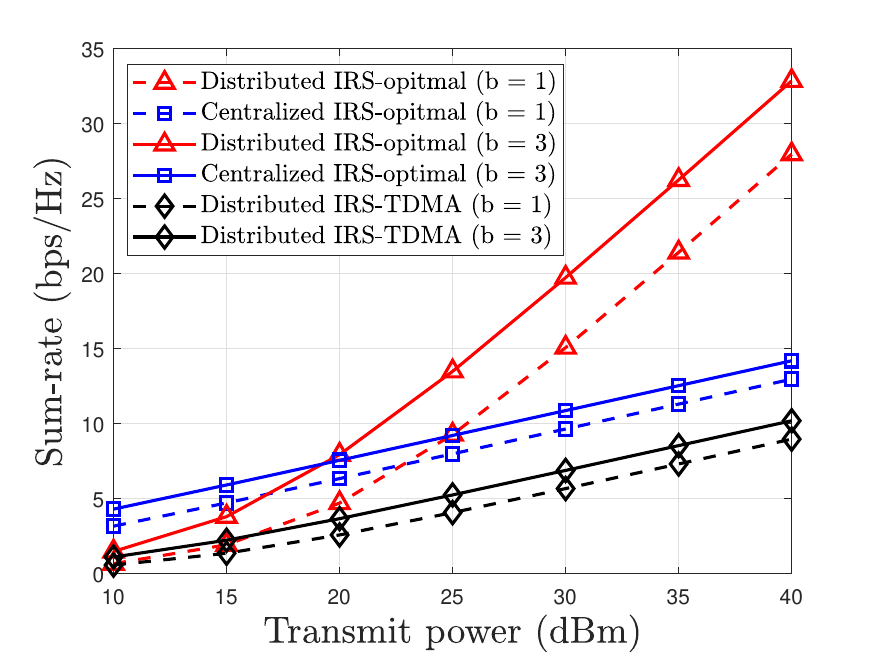}
\caption{Sum-rate versus ${P_{\max }}$ with $N=200$.}
\label{sumrate_power}
\end{minipage}%
\hfill
\begin{minipage}[t]{0.4\linewidth}
\centering
\includegraphics[width=2.9in]{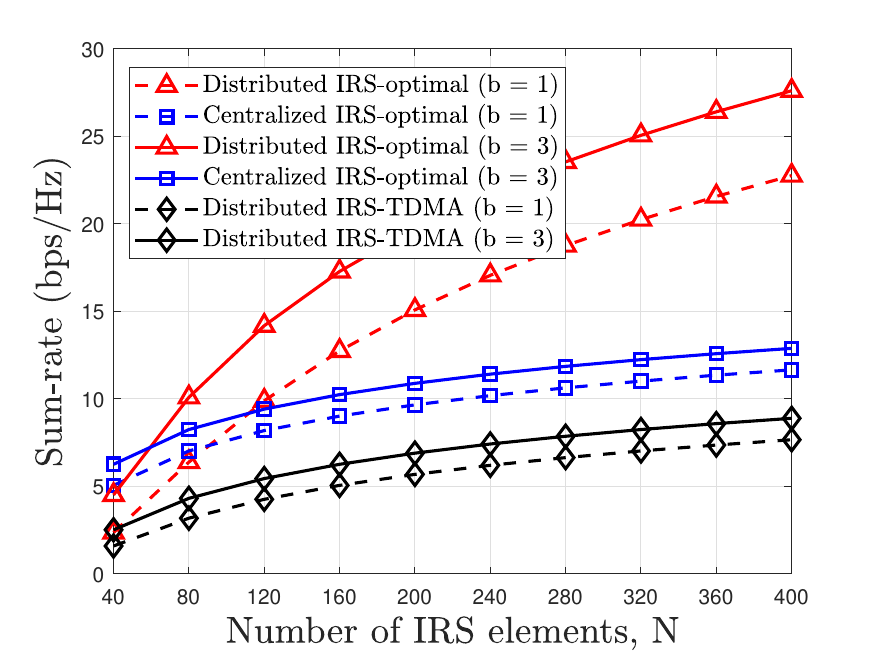}
\caption{Sum-rate versus $N$.}
\label{sumrate_elements}
\end{minipage}
\vspace{-16pt}
\end{figure}

In Fig. \ref{sumrate_elements}, we show the sum-rate versus $N$. It is observed that the \emph{centralized IRS} outperforms \emph{distributed IRS} in the low-$N$ regime. In the low-$N$ regime, the sum-rate is mainly restricted by the  power received at the user. Compared to \emph{distributed IRS}, \emph{centralized IRS} enjoys the advantages of reaping higher passive beamforming gain, which is beneficial for substantially improving the received power. Nevertheless, the \emph{distributed IRS} gradually outperforms \emph{centralized IRS} as $N$ increases. This is due to the fact that \emph{centralized IRS} has limited DoF for spatial multiplexing. As $N$ becomes large, the power received at the user becomes sufficient and the benefits brought about by spatial multiplexing under the distributed IRS architecture become dominant. The results demonstrate the superiority of employing distributed IRS when the total number of available IRS elements is large, which validates Theorem 1.

\begin{figure}[!t]
\setlength{\abovecaptionskip}{-5pt}
\setlength{\belowcaptionskip}{-5pt}
\centering
\includegraphics[width= 0.4\textwidth]{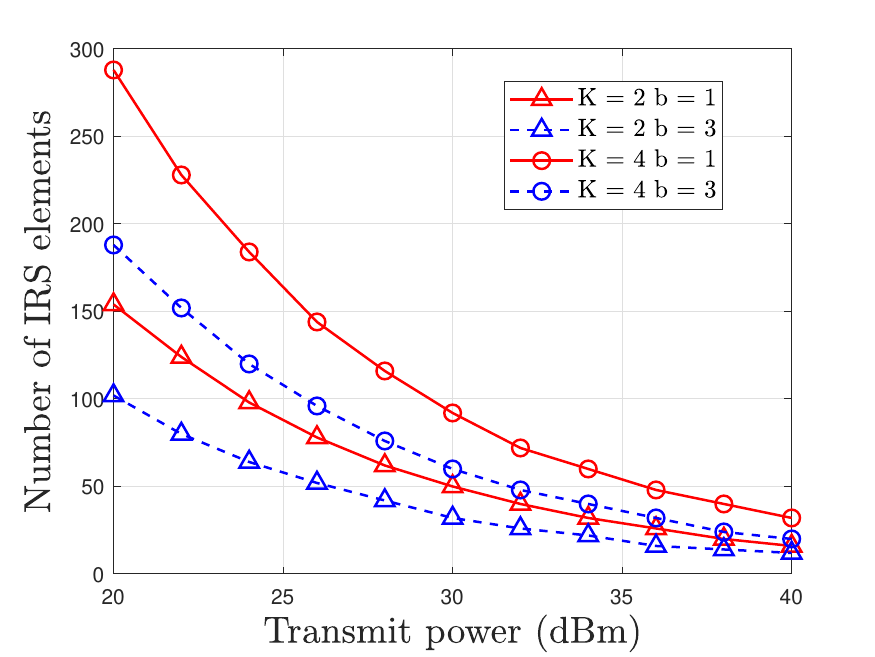}
\DeclareGraphicsExtensions.
\caption{$N$ required for ensuring distributed IRS to outperform centralized IRS.}
\label{R_elements}
\vspace{-10pt}
\end{figure}

To unveil the operating region of the distributed IRS, we quantify the total number of IRS elements required for ensuring \emph{distributed IRS} to outperform \emph{centralized IRS} in Fig. \ref{R_elements}. It is observed from Fig. \ref{R_elements} that the requirement for the total number IRS elements can be alleviated by increasing the transmit power at the BS, increasing the number of quantization bits at the IRS, and reducing the number of scheduled users, i.e., $K$. The reason is that increasing ${P_{\max }}$, $b$ or reducing $K$, is helpful for increasing the power received at the users. These results are consistent with our analysis in Remark 1.

\vspace{-8pt}
\section{Transmission and Element Allocation design for distributed IRS}
Section III has theoritically unveiled that the \emph{distributed IRS} is more appealing, when a large number of IRS elements is affordable. Motivated by the potential of \emph{distributed IRS} for achieving high capacity, we next focus on the design of the transmission scheme and IRS element allocation under the \emph{distributed IRS}. We first propose a hybrid SDMA-TDMA multiple access scheme by exploiting the user channel correlation of intra-clusters and inter-clusters. Then, the IRS element allocation problem is studied both in terms of sum-rate maximization and minimum user rate maximization.
\vspace{-8pt}
\subsection{Hybrid SDMA-TDMA Scheme}
Under the assumption of $\delta _1^{\rm{D}}{\rm{ = }} \ldots {\rm{ = }}\delta _K^{\rm{D}} = {\delta ^{\rm{D}}}$, $\varepsilon _k^{\rm{D}} \to \infty, \forall k$, we obtain the following proposition to capture the effect of channel correlation under the \emph{distributed IRS} deployment.
\begin{pos}
Under a randomly given ${\bf{\Theta }}_k^{\rm{D}}$, i.e., $\theta _{k,n}^{\rm{D}} \sim {\cal U}\left[ {0,2\pi } \right]$, $\forall k,n$, we have
\begin{align}\label{correlation1}
\rho _{kl,ml'}^2 \!\buildrel \Delta \over = \!\frac{{{\rm{E}}\left\{ {{{\left| {{{\left( {{\bf{h}}_{kl}^{\rm{D}}\left( {{\bf{\Theta }}_k^{\rm{D}}} \right)} \right)}^H}\left( {{\bf{h}}_{ml'}^{\rm{D}}\left( {{\bf{\Theta }}_m^{\rm{D}}} \right)} \right)} \right|}^2}} \right\}}}{{{\rm{E}}\left\{ {\left\| {{\bf{h}}_{kl}^{\rm{D}}\left( {{\bf{\Theta }}_k^{\rm{D}}} \right)} \right\|_2^2} \right\}{\rm{E}}\left\{ {\left\| {{\bf{h}}_{ml'}^{\rm{D}}\left( {{\bf{\Theta }}_m^{\rm{D}}} \right)} \right\|_2^2} \right\}}}{\rm{ = }}\frac{{2{\delta ^{\rm{D}}} + 1}}{{{{\left( {{\delta ^{\rm{D}}} + 1} \right)}^2}}}\frac{1}{M},\forall k \ne m,
\end{align}
\begin{align}\label{correlation2}
\rho _{kl,kl'}^2 \buildrel \Delta \over = \frac{{{\rm{E}}\left\{ {{{\left| {{{\left( {{\bf{h}}_{kl}^{\rm{D}}\left( {{\bf{\Theta }}_k^{\rm{D}}} \right)} \right)}^H}\left( {{\bf{h}}_{kl'}^{\rm{D}}\left( {{\bf{\Theta }}_k^{\rm{D}}} \right)} \right)} \right|}^2}} \right\}}}{{{\rm{E}}\left\{ {{{\left\| {{\bf{h}}_{kl}^{\rm{D}}\left( {{\bf{\Theta }}_k^{\rm{D}}} \right)} \right\|}^2}} \right\}{\rm{E}}\left\{ {{{\left\| {{\bf{h}}_{kl'}^{\rm{D}}\left( {{\bf{\Theta }}_k^{\rm{D}}} \right)} \right\|}^2}} \right\}}}{\rm{ = }}\frac{{2{\delta ^{\rm{D}}} + 1}}{{{{\left( {{\delta ^{\rm{D}}} + 1} \right)}^2}}}\frac{1}{M} + \frac{{{{\left( {{\delta ^{\rm{D}}}} \right)}^2}}}{{{{\left( {{\delta ^{\rm{D}}} + 1} \right)}^2}}},\forall l \ne l',
\end{align}
as ${N_k} \to \infty$, where $\rho _{kl,ml'}^2$ and $\rho _{kl,kl'}^2$ denote the squared-correlation coefficients of the user pairs $\left( {u_l^k,u_{l'}^m} \right)$ and $\left( {u_l^k,u_{l'}^k} \right)$, respectively.
\end{pos}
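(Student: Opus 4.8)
The plan is to compute the two squared-correlation coefficients by first expanding the effective channel $({\bf h}_{kl}^{\rm D}({\bf \Theta}_k^{\rm D}))^H = ({\bf h}_{r,kl}^{\rm D})^H {\bf \Theta}_k^{\rm D} {\bf G}_k^{\rm D}$ into its LoS and NLoS constituents using \eqref{channel_Gk} and \eqref{channel_hk}. Since $\varepsilon_k^{\rm D}\to\infty$, the IRS-user channel collapses to its rank-one LoS component $\rho_{r,kl}^{\rm D}({\bf \bar h}_{r,kl}^{\rm D})^H$, so the only randomness left is the i.i.d.\ Gaussian matrix ${\bf \tilde G}_k^{\rm D}$ inside ${\bf G}_k^{\rm D}$ together with the random phases $\theta_{k,n}^{\rm D}\sim{\cal U}[0,2\pi]$ in ${\bf \Theta}_k^{\rm D}$. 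Writing ${\bf v}_{kl}^H \buildrel \Delta \over = ({\bf \bar h}_{r,kl}^{\rm D})^H {\bf \Theta}_k^{\rm D}$ (a random row vector whose entries, after absorbing the unit-modulus LoS coefficients, are i.i.d.\ uniform-phase unit-modulus), the effective channel becomes a scaled version of ${\bf v}_{kl}^H{\bf G}_k^{\rm D}$. First I would evaluate the denominator normalizations ${\rm E}\{\|{\bf h}_{kl}^{\rm D}\|^2\}$: expanding ${\bf G}_k^{\rm D}$ and using ${\rm E}[{\bf \tilde G}_k^{\rm D}({\bf \tilde G}_k^{\rm D})^H]=M{\bf I}$, ${\rm E}[{\bf \bar G}_k^{\rm D}({\bf \bar G}_k^{\rm D})^H]=M{\bf I}$ (since ${\bf \bar G}_k^{\rm D}$ is rank-one with norm$^2$ equal to $M N_k$, but projected through a single IRS-user response the relevant quantity is $M$ per element), one gets the denominator scaling proportional to $N_k M$ times the common path-loss factors.

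Next I would compute the numerators. For the inter-cluster pair $(u_l^k,u_{l'}^m)$ with $k\ne m$, the two effective channels involve \emph{independent} IRS matrices ${\bf \Theta}_k^{\rm D}$, ${\bf \Theta}_m^{\rm D}$ and independent Gaussian parts ${\bf \tilde G}_k^{\rm D}$, ${\bf \tilde G}_m^{\rm D}$, so the cross term ${\rm E}\{|({\bf h}_{kl}^{\rm D})^H {\bf h}_{ml'}^{\rm D}|^2\}$ factorizes heavily; only the deterministic LoS BS-IRS directions ${\bf a}_M^H(\sin\theta_{{\rm T},k}^{\rm AOD})$ and ${\bf a}_M^H(\sin\theta_{{\rm T},m}^{\rm AOD})$ survive as a coupling, and after taking expectations over phases and Gaussians and dividing by the product of normalizers, the $N_k^2 M^2$ factors cancel leaving $\frac{1}{M}$ multiplied by the variance-bookkeeping factor $\frac{2\delta^{\rm D}+1}{(\delta^{\rm D}+1)^2}$ that tracks how the LoS/NLoS split of ${\bf G}_k^{\rm D}$ contributes to the second moment (the $2\delta^{\rm D}+1$ arises because $|{\rm LoS}|^2$ contributes $\delta^{\rm D}$, the $|{\rm NLoS}|^2$ contributes $1$, and the cross-moment of a Gaussian fourth moment contributes an extra $\delta^{\rm D}$). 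For the intra-cluster pair $(u_l^k,u_{l'}^k)$, the channels share the \emph{same} ${\bf \Theta}_k^{\rm D}$ and the \emph{same} ${\bf G}_k^{\rm D}$, so there is an additional fully-correlated contribution coming from the pure-LoS part of ${\bf G}_k^{\rm D}$ (which is deterministic and identical for both users), producing the extra term $\frac{(\delta^{\rm D})^2}{(\delta^{\rm D}+1)^2}$; the remaining part behaves as in the inter-cluster case and yields the same $\frac{2\delta^{\rm D}+1}{(\delta^{\rm D}+1)^2}\frac{1}{M}$.

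The main obstacle I anticipate is carefully handling the fourth-moment computations: the numerator is a modulus-squared of an inner product of two random vectors, so expanding it produces a sum over four indices with terms mixing Gaussian fourth moments of ${\bf \tilde G}_k^{\rm D}$, phase averages ${\rm E}[e^{j(\theta_{k,n}-\theta_{k,n'})}]=\delta_{nn'}$, and the structured LoS array responses. One has to argue that in the regime $N_k\to\infty$ the ``diagonal'' index contributions dominate (by a law-of-large-numbers / concentration argument on sums of $N_k$ independent unit-modulus phase terms) while the ``cross'' contributions from the array-response coupling ${\bf a}_M^H(\sin\theta_{{\rm T},k}^{\rm AOD}){\bf a}_M(\sin\theta_{{\rm T},m}^{\rm AOD})$ are $O(1)$ and hence vanish after division by $N_k^2$. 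I would organize this by separating ${\bf G}_k^{\rm D}$ once and for all into $\sqrt{\delta^{\rm D}/(\delta^{\rm D}+1)}{\bf \bar G}_k^{\rm D}+\sqrt{1/(\delta^{\rm D}+1)}{\bf \tilde G}_k^{\rm D}$, expanding the product of two such sums (four cross terms), and tracking which cross terms survive the phase/Gaussian expectations; the bookkeeping of the coefficients $\delta^{\rm D}/(\delta^{\rm D}+1)$ and $1/(\delta^{\rm D}+1)$ through the fourth moment is exactly what assembles into $2\delta^{\rm D}+1$ and $(\delta^{\rm D})^2$ over $(\delta^{\rm D}+1)^2$. Everything else is routine once that moment accounting is set up.
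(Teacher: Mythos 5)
Your overall strategy matches the paper's Appendix D: decompose ${\bf G}_k^{\rm D}$ into its LoS and NLoS parts, write the inner product $({\bf h}_{kl}^{\rm D})^H{\bf h}_{ml'}^{\rm D}$ as a sum of four cross terms, compute the second moment of each under the random phases and the Gaussian NLoS component, and divide by the normalizers ${\rm E}\{\|{\bf h}\|_2^2\}=(\rho_{r,kl}^{\rm D}\rho_{g,k}^{\rm D})^2MN_k$. However, there is a genuine gap in how you dispose of the LoS--LoS cross term for $k\ne m$. You claim the coupling through ${\bf a}_M^H(\sin\theta_{{\rm T},k}^{\rm AOD}){\bf a}_M(\sin\theta_{{\rm T},m}^{\rm AOD})$ is $O(1)$ and therefore washes out after division by $N_k^2$. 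That is not true: the LoS--LoS term equals $\tfrac{\delta^{\rm D}}{\delta^{\rm D}+1}\bigl(({\bf \bar h}_{r,kl}^{\rm D})^H{\bf \Theta}_k^{\rm D}{\bf a}_{{\rm S},k}\bigr)\bigl({\bf a}_M^H(\sin\theta_{{\rm T},k}^{\rm AOD}){\bf a}_M(\sin\theta_{{\rm T},m}^{\rm AOD})\bigr)\bigl({\bf a}_{{\rm S},m}^H({\bf \Theta}_m^{\rm D})^H{\bf \bar h}_{r,ml'}^{\rm D}\bigr)$, and each of the two random scalars has second moment $N_k$ (resp.\ $N_m$) under uniform phases, so its squared magnitude scales as $N_kN_m$ --- exactly the same order as the surviving terms. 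It does \emph{not} vanish in the limit; it vanishes only because ${\bf a}_M^H(\sin\theta_{{\rm T},k}^{\rm AOD}){\bf a}_M(\sin\theta_{{\rm T},m}^{\rm AOD})=0$ under the ideal deployment condition \eqref{deployment_condition}, which the paper invokes explicitly at this step. Without that orthogonality, \eqref{correlation1} would pick up an extra additive term proportional to $\tfrac{(\delta^{\rm D})^2}{(\delta^{\rm D}+1)^2}\,\bigl|{\bf a}_M^H(\sin\theta_{{\rm T},k}^{\rm AOD}){\bf a}_M(\sin\theta_{{\rm T},m}^{\rm AOD})\bigr|^2/M^2$ and the stated formula would be false.

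Relatedly, your bookkeeping for the factor $2\delta^{\rm D}+1$ is off. You attribute one $\delta^{\rm D}$ to the $|{\rm LoS}|^2$ term, one to a ``Gaussian cross-moment,'' and $1$ to $|{\rm NLoS}|^2$. But the $|{\rm LoS}|^2$ term carries weight $(\delta^{\rm D})^2/(\delta^{\rm D}+1)^2$, not $\delta^{\rm D}/(\delta^{\rm D}+1)^2$, and for $k\ne m$ it must be identically zero for \eqref{correlation1} to hold. The correct accounting is $\delta^{\rm D}+\delta^{\rm D}+1$ from the two mixed LoS$\times$NLoS terms (each weighted by $\delta^{\rm D}/(\delta^{\rm D}+1)^2$ and contributing $MN_k^2$) plus the NLoS$\times$NLoS term (weighted by $1/(\delta^{\rm D}+1)^2$). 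Your final numbers coincide with the paper's only by accident of arithmetic; the derivation as written would not survive the detailed moment computation you correctly identify as the main technical work.
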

\begin{proof}
Please refer to Appendix D.
\end{proof}

It is plausible from Proposition 3 that $\rho _{kl,ml'}^2 < \rho _{kl,kl'}^2,\forall k \ne m,l \ne l'$ provided that ${\delta ^{\rm{D}}} > 0$, which explicitly demonstrates that the squared-correlation of channels for the users located in different clusters is lower than that of the users located in the same cluster. As the Rician factor ${{\delta ^{\rm{D}}}} $ increases, the difference of the two squared-correlations $\rho _{kl,kl'}^2 - \rho _{kl,ml'}^2$ increases. Note that we have $\rho _{kl,ml'}^2 = 0, \forall k \ne m$ and $\rho _{kl,kl'}^2 = 1, \forall l \ne l'$ as ${\delta ^{\rm{D}}} \to \infty $. Hence, Proposition 3 motivates us to propose an efficient hybrid SDMA-TDMA scheme to harness both the spatial multiplexing gain and dynamic IRS beamforming gain, as described below.

Recall from Section II that there are $L$ users\footnote{The proposed hybrid SDMA-TDMA scheme is also applicable to the scenario where the number of users in each cluster is different. The key idea is to schedule the intra-cluster users via the round robin scheme and to serve the inter-cluster users simultaneously via SDMA.} in each user cluster associated with IRS $k$, denoted by the set ${{\cal U}_k} \buildrel \Delta \over = \left\{ {u_1^k,u_2^k, \ldots ,u_{{L}}^k} \right\}$. We focus on a specific channel coherence block ${\cal T}$, whose time duration is denoted by $T$. First, the channel's coherence interval ${\cal T}$ is equally partitioned into $L$ orthogonal time slots (TSs), denoted by ${{\cal T}_l}$, $\forall l \in {\cal L} \buildrel \Delta \over = \left\{ {1, \ldots ,L} \right\}$, and the time duration of ${{\cal T}_l}$ is $T/L$. Then, all users are naturally divided into $L$ disjoint groups, denoted by ${{\cal G}_l} = \left\{ {u_l^1,u_l^2 \ldots ,u_l^K} \right\}$, $\forall l \in {\cal L}$. The users in group ${{\cal G}_l}$ are scheduled in TS ${{\cal T}_l}$ via SDMA. In particular, the transmit beamforming vectors at the BS and IRS reflection pattern in TS ${{\cal T}_l}$ are configured as
\begin{align}\label{joint_bf_configuration}
{\bf{\Theta }}_k^{\rm{D}}\left[ l \right] = \mathop {\arg \max }\limits_{{\bf{\Theta }}_k^{\rm{D}} \in {{\cal P}^{\rm{D}}}} {\left| {{{\left( {{\bf{\bar h}}_{r,kl}^{\rm{D}}} \right)}^H}{\bf{\Theta }}_k^{\rm{D}}{{\bf{a}}_{{\rm{S}},k}}} \right|^2},{{\bf{w}}_k}\left[ l \right] = \sqrt {{p_k}} \frac{{{{\bf{a}}_M}\left( {\sin \theta _{{\rm{T}},k}^{{\rm{AOD}}}} \right)}}{{\sqrt M }},\forall k,l,
\end{align}
with $\sum\nolimits_{k = 1}^K {{p_{kl}} = {P_{\max }}}$. Based on \eqref{joint_bf_configuration}, the sum-rate of all users in ${\cal T}$ can be written as
\begin{align}\label{sumrate_all_cluster}
\bar R = \sum\limits_{l = 1}^L {\sum\limits_{k = 1}^K {\frac{1}{L}{{\log }_2}\left( {1 + {\rm{SIN}}{{\rm{R}}_{k,l}}} \right)} },
\end{align}
where
\begin{align}\label{SINR_kl}
{\rm{SIN}}{{\rm{R}}_{k,l}} = \frac{{{{\left| {{{\left( {{\bf{h}}_{kl}^{\rm{D}}\left( {{\bf{\Theta }}_k^{\rm{D}}\left[ l \right]} \right)} \right)}^H}{{\bf{w}}_k}\left[ l \right]} \right|}^2}}}{{\sum\nolimits_{m = 1,m \ne k}^K {{{\left| {{{\left( {{\bf{h}}_{kl}^{\rm{D}}\left( {{\bf{\Theta }}_k^{\rm{D}}\left[ l \right]} \right)} \right)}^H}{{\bf{w}}_k}\left[ l \right]} \right|}^2} + {\sigma ^2}} }}
\end{align}
denotes the received signal-tointerference-plus-noise ratio (SINR) of user $u_l^k$.

\begin{rem}
It is observed from \eqref{joint_bf_configuration} that the design of $\left\{ {{{\bf{w}}_k}\left[ l \right],{\bf{\Theta }}_k^{\rm{D}}\left[ l \right]} \right\}$ relies only on the statistical CSI, i.e., the locations, AoD of the BS, AoA/AoD of the IRSs, and AoA of the users. Note that the statistical CSI varies slowly and hence remains near-constant for a long time. Accordingly, the proposed hybrid SDMA-TDMA scheme requires low channel estimation overhead and computational complexity, which is more appealing in practical systems with a high $N$.
\end{rem}
\vspace{-12pt}
\subsection{IRS Element Allocation Design}
In this subsection, we study the IRS element allocation problem under the proposed hybrid SDMA-TDMA transmission scheme. Note that the random NLoS components of the IRS involved channels cannot be applied to determine the number of IRS elements in each user cluster. Motivated by this, the IRS element allocation problem is investigated under a LoS channel scenario. In each user cluster, we select a user located at the boundary of ${{\cal A}_k}$, which represents the performance of $k$-th cluster. Without loss of generality, the selected user in the $k$-th cluster is assumed to be $u_L^k$. Upon substituting \eqref{joint_bf_configuration} into \eqref{SINR_kl} under the LoS channel setup, the rate of user $u_L^k$ can be expressed as
\begin{align}\label{rate_kL}
R_{kL}^{\rm{D}} \!\!=\!\! {\log _2}\left( {1 + \frac{{{p_{kL}}MN_k^2{{\left( {\rho _{g,k}^{\rm{D}}} \right)}^2}{{\left( {\rho _{r,kL}^{\rm{D}}} \right)}^2}}}{{{\sigma ^2}}}{{\left( {\frac{{{2^b}}}{\pi }\sin \frac{\pi }{{{2^b}}}} \right)}^2}} \right).
\end{align}
In the following, we study the sum-rate maximization and the minimum user rate maximization problems, respectively.

\subsubsection{Minimum User Rate Maximization}
For minimum user rate maximization, the corresponding optimization problem by jointly optimizing the power allocation and IRS element allocation can be formulated as follows.
\begin{subequations}\label{C13}
\begin{align}
\label{C13-a}\mathop {\max }\limits_{\left\{ {{p_k}} \right\},\left\{ {{N_k}} \right\}} \;\;&\mathop {\min }\limits_{k \in {\cal K}} \;\;R_{kL}^{\rm{D}}\\
\label{C13-b}{\rm{s.t.}}\;\;\;\;\;&\sum\nolimits_{k = 1}^K {{p_{kL}}}  = {P_{\max }},\\
\label{C13-c}&\sum\nolimits_{k = 1}^K {{N_k}}  = N, \\
\label{C13-d}&{N_k} \in {\mathbb{N}},\forall k \in {\cal K}.
\end{align}
\end{subequations}
Note that constraints \eqref{C13-b} and \eqref{C13-c} represent the transmit power constraint at the BS and the deployment budget for the total number of IRS elements, respectively. Problem \eqref{C13} is challenging to be solved optimally since ${{p_k}}$ and ${{N_k}}$ are tightly coupled in the objective function \eqref{C13-a}, which renders the design objective a complicated function. Moreover, constraint \eqref{C13-d} is non-convex, since the number of IRS elements deployed in each cluster is discrete.

To overcome the above challenges, we first relax the value of ${{N_k}}$ into a continuous value ${{\tilde N}_k}$ and then the integer rounding technique is employed to reconstruct the optimal solution of the original optimization problem, which leads to the following optimization problem:
\begin{subequations}\label{C14}
\begin{align}
\label{C14-a}\mathop {\max }\limits_{\left\{ {{p_k}} \right\},\left\{ {{{\tilde N}_k}} \right\}} \;\;&\mathop {\min }\limits_{k \in {\cal K}} \;\;R_{kL}^{\rm{D}}\\
\label{C14-b}{\rm{s.t.}}\;\;\;\;\;&\sum\nolimits_{k = 1}^K {{{\tilde N}_k}}  = N, \\
\label{C14-c}&\eqref{C13-b}.
\end{align}
\end{subequations}
Although problem \eqref{C14} is still non-convex, we obtain its optimal solution in the following proposition by exploiting its particular structure.
\begin{pos}
The optimal solution $\left\{ {p_{kL}^*,\tilde N_k^*} \right\}$ of problem \eqref{C14} is
\begin{align}\label{optimal_solution_C14}
\tilde N_k^* = \frac{{{{\left( {2{{\left( {\rho _{g,k}^{\rm{D}}\rho _{r,kL}^{\rm{D}}} \right)}^{ - 2}}} \right)}^{1/3}}N}}{{\sum\nolimits_{j = 1}^K {{{\left( {2{{\left( {\rho _{g,j}^{\rm{D}}\rho _{r,jL}^{\rm{D}}} \right)}^{ - 2}}} \right)}^{1/3}}} }},\;\;p_{kL}^* = \frac{{{P_{\max }}{{\left( {\rho _{g,k}^{\rm{D}}\rho _{r,kL}^{\rm{D}}\tilde N_k^*} \right)}^{ - 2}}}}{{\sum\nolimits_{j = 1}^K {{{\left( {\rho _{g,j}^{\rm{D}}\rho _{r,jL}^{\rm{D}}\tilde N_k^*} \right)}^{ - 2}}} }},\forall k \in K.
\end{align}
\end{pos}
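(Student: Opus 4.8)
The plan is to attack problem \eqref{C14} by first analyzing the power allocation for fixed $\{\tilde N_k\}$, then optimizing over $\{\tilde N_k\}$, and finally stitching the two together. The key observation is that $R_{kL}^{\rm D}$ in \eqref{rate_kL} is monotonically increasing in the product $p_{kL}\tilde N_k^2$; writing $c_k \buildrel \Delta \over = M(\rho_{g,k}^{\rm D})^2(\rho_{r,kL}^{\rm D})^2(2^b\sin(\pi/2^b)/\pi)^2/\sigma^2$, we have $R_{kL}^{\rm D}={\log_2}(1+c_k p_{kL}\tilde N_k^2)$. For a max-min objective it is classical that at the optimum all the user rates are equalized (if one user's rate were strictly largest, one could shift an $\epsilon$ of power away from it to a minimizer and strictly improve the min, using continuity and strict monotonicity of each $R_{kL}^{\rm D}$ in its own power). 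Hence at optimality $c_k p_{kL}\tilde N_k^2 = c_j p_{jL}\tilde N_j^2$ for all $k,j$, i.e. $p_{kL}\propto (c_k\tilde N_k^2)^{-1}$, and imposing $\sum_k p_{kL}=P_{\max}$ fixes the constant, giving the stated $p_{kL}^*$ as a function of $\{\tilde N_k\}$ (note $c_k$ absorbs the common factors $M$ and $(2^b\sin(\pi/2^b)/\pi)^2$, which then cancel, leaving only $(\rho_{g,k}^{\rm D}\rho_{r,kL}^{\rm D}\tilde N_k)^{-2}$ as written).

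Next I would substitute this $p_{kL}^*(\{\tilde N_k\})$ back into the common rate value. The equalized SINR becomes $c_k p_{kL}^*\tilde N_k^2 = P_{\max}\big/\sum_{j}\big(\rho_{g,j}^{\rm D}\rho_{r,jL}^{\rm D}\big)^{-2}\big(\tilde N_j^{-2}\,c_j^{-1}c_k\big)\cdots$ — more cleanly, after cancellation the objective reduces to ${\log_2}\!\big(1+P_{\max}\big/\sum_{j=1}^K (\rho_{g,j}^{\rm D}\rho_{r,jL}^{\rm D})^{-2}\tilde N_j^{-2}\cdot(\text{const})\big)$ up to the common constant factors; the precise bookkeeping is routine. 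The point is that maximizing the min rate over $\{\tilde N_k\}$ subject to $\sum_k\tilde N_k=N$ is equivalent to \emph{minimizing} $f(\{\tilde N_k\})\buildrel \Delta \over = \sum_{j=1}^K a_j \tilde N_j^{-2}$ with $a_j\buildrel \Delta \over = (\rho_{g,j}^{\rm D}\rho_{r,jL}^{\rm D})^{-2}$, over the simplex $\sum_k\tilde N_k = N$, $\tilde N_k\ge 0$. This $f$ is strictly convex on the positive orthant (each term $\tilde N_j^{-2}$ is convex and decreasing), the constraint set is convex and compact, so the minimizer is unique and characterized by KKT: $-2a_k\tilde N_k^{-3}=\mu$ for a common multiplier $\mu$, hence $\tilde N_k\propto a_k^{1/3}=(\rho_{g,k}^{\rm D}\rho_{r,kL}^{\rm D})^{-2/3}$. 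Normalizing by $\sum_k\tilde N_k=N$ yields exactly the claimed $\tilde N_k^*$ (the factor $2^{1/3}$ in the statement is a harmless common scalar that cancels between numerator and denominator).

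The main obstacle — and the reason problem \eqref{C14} is called non-convex despite the clean answer — is that the joint objective in $(\{p_{kL}\},\{\tilde N_k\})$ is \emph{not} jointly concave, because of the coupling through $p_{kL}\tilde N_k^2$; so one cannot simply invoke a convex-programming argument directly. I would handle this by the two-stage (partial minimization / inner-outer) decomposition sketched above: first solve the inner max-min over $\{p_{kL}\}$ in closed form using the rate-equalization argument (which only needs strict monotonicity, not concavity), then show the resulting reduced problem in $\{\tilde N_k\}$ alone \emph{is} a convex program, and finally argue that any optimal pair for \eqref{C14} must have its power component equal to $p_{kL}^*(\{\tilde N_k\})$ and its $\tilde N$-component optimal for the reduced problem — so the pair $(\{p_{kL}^*\},\{\tilde N_k^*\})$ in \eqref{optimal_solution_C14} is globally optimal. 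A minor technical point to verify along the way is that the optimal $\tilde N_k^*$ are strictly positive (immediate, since $a_k>0$ forces $f\to\infty$ as any $\tilde N_k\to 0$), so the KKT stationarity conditions hold with no active nonnegativity constraints, and that the max-min optimum indeed equalizes rates rather than being attained at a boundary point of the power simplex (again immediate from $R_{kL}^{\rm D}\to 0$ as $p_{kL}\to 0$).
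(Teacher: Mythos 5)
Your proposal is correct and follows essentially the same route as the paper's proof: establish rate (SNR) equalization at the optimum, solve for the power allocation in closed form for fixed $\{\tilde N_k\}$, substitute back to obtain the convex reduced problem $\min \sum_k a_k \tilde N_k^{-2}$ over the simplex, and solve it via KKT to get $\tilde N_k^* \propto (\rho_{g,k}^{\rm D}\rho_{r,kL}^{\rm D})^{-2/3}$. The only differences are cosmetic (your perturbation argument for equalization mirrors the paper's contradiction argument, and you add the harmless observations that $\tilde N_k^*>0$ and that the $2^{1/3}$ factor cancels).
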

\begin{proof}
First, we show that the condition
\begin{align}\label{condition_snr1}
{\Upsilon _1} = {\Upsilon _2} =  \ldots  = {\Upsilon _K} = \Upsilon
\end{align}
is satisfied at the optimal solution by using the method of contradiction, where
\begin{align}\label{snr_definition1}
{\Upsilon _k} = \frac{{{p_{kL}}M\tilde N_k^2{{\left( {\rho _{g,k}^{\rm{D}}} \right)}^2}{{\left( {\rho _{r,kL}^{\rm{D}}} \right)}^2}}}{{{\sigma ^2}}}{\left( {\frac{{{2^b}}}{\pi }\sin \frac{\pi }{{{2^b}}}} \right)^2}
\end{align}
denotes the receive SNR at user $u_L^k$. Assume that $\tilde \Xi  = \left\{ {{{\tilde p}_{kL}},{{\tilde N}_k}} \right\}$ is the optimal solution of problem \eqref{C14}, which yields ${\Upsilon _1} =  \ldots  = {\Upsilon _{K - 1}} > {\Upsilon _K}$. Then, we construct a different solution $\mathord{\buildrel{\lower3pt\hbox{$\scriptscriptstyle\frown$}}
\over \Xi }  = \left\{ {{{\mathord{\buildrel{\lower3pt\hbox{$\scriptscriptstyle\frown$}}
\over p} }_{kL}},{{\mathord{\buildrel{\lower3pt\hbox{$\scriptscriptstyle\frown$}}
\over N} }_k}} \right\}$, where ${{\mathord{\buildrel{\lower3pt\hbox{$\scriptscriptstyle\frown$}}
\over p} }_{kL}} = {{\tilde p}_{kL}}$ for $k < K - 1$, ${{\mathord{\buildrel{\lower3pt\hbox{$\scriptscriptstyle\frown$}}
\over p} }_{\left( {K - 1} \right)L}} = {{\tilde p}_{\left( {K - 1} \right)L}} + \Delta p$, ${{\mathord{\buildrel{\lower3pt\hbox{$\scriptscriptstyle\frown$}}
\over p} }_{KL}} = {{\tilde p}_{KL}} - \Delta p$, and ${{\mathord{\buildrel{\lower3pt\hbox{$\scriptscriptstyle\frown$}}
\over N} }_k} = {{\tilde N}_k},\forall k$. Note that the value of $\Delta p$ is selected to keep ${\Upsilon _{K - 1}} = {\Upsilon _K}$. It can be readily verified that ${\mathord{\buildrel{\lower3pt\hbox{$\scriptscriptstyle\frown$}}
\over \Xi } }$ is also a feasible solution to \eqref{C14} and its achieved objective value is larger than that under the solution $\tilde \Xi  = \left\{ {{{\tilde p}_{kL}},{{\tilde N}_k}} \right\}$, which contradicts that ${\tilde \Xi }$ is optimal. Hence, \eqref{condition_snr1} must be satisfied at the optimal solution. Let
\begin{align}\label{tao_k}
{\Gamma _k} = \frac{{M{{\left( {\rho _{g,k}^{\rm{D}}} \right)}^2}{{\left( {\rho _{r,kL}^{\rm{D}}} \right)}^2}}}{{{\sigma ^2}}}{\left( {\frac{{{2^b}}}{\pi }\sin \frac{\pi }{{{2^b}}}} \right)^2}, \forall k.
\end{align}
Then, the optimal $\left\{ {{p_{kL}}} \right\}$ under the arbitrarily given $\left\{ {{{\tilde N}_k}} \right\}$ is ${p_{kL}} = \Upsilon /\left( {{\Gamma _k}\tilde N_k^2} \right)$.  Upon substituting ${p_{kL}} = \Upsilon /\left( {{\Gamma _k}\tilde N_k^2} \right)$ into \eqref{C13-b}, we have
\begin{align}\label{gamma_gamma}
\Upsilon  = \frac{{{P_{\max }}}}{{\sum\nolimits_{k = 1}^K {\left( {1/\left( {{\Gamma _k}\tilde N_k^2} \right)} \right)} }}.
\end{align}
Hence, the optimal $\left\{ {{p_{kL}}} \right\}$ under an arbitrary $\left\{ {{{\tilde N}_k}} \right\}$ is given by
\begin{align}\label{optinal_power allocation}
{p_{kL}} = \frac{{{P_{\max }}/\left( {{\Gamma _k}\tilde N_k^2} \right)}}{{\sum\nolimits_{k = 1}^K {\left( {1/\left( {{\Gamma _k}\tilde N_k^2} \right)} \right)} }},\forall k.
\end{align}
By further substituting \eqref{optinal_power allocation} into problem \eqref{C14}, problem \eqref{C14} is equivalently transformed into
\begin{align}\label{optimization_elements_maxmin}
\mathop {\min }\limits_{\left\{ {{{\tilde N}_k}} \right\}}~~ {\rm{ }}\sum\nolimits_{k = 1}^K {\frac{1}{{{\Gamma _k}\tilde N_k^2}}}~~{\rm{s}}{\rm{.t}}{\rm{.}}~~\eqref{C14-b}.
\end{align}
It can be readily verified that problem \eqref{optimization_elements_maxmin} is a convex optimization problem. Hence, its optimal solution can be derived by analyzing the KKT conditions. In particular, the Lagrangian function of problem \eqref{optimization_elements_maxmin} is given by
\begin{align}\label{Lagrangian1}
{{\cal L}_1}\left( {{{\tilde N}_k},\mu } \right) = \sum\nolimits_{k = 1}^K {\frac{1}{{{\Gamma _k}\tilde N_k^2}}}  + \mu \left( {\sum\nolimits_{k = 1}^K {{{\tilde N}_k}} - N } \right).
\end{align}
Its KKT conditions can be written as
\begin{align}\label{KKT_conditons1}
\!\!\frac{{\partial {{\cal L}_1}\left( {{{\tilde N}_k},\mu } \right)}}{{\partial {{\tilde N}_k}}} = \frac{{ - 2}}{{{\Gamma _k}\tilde N_k^3}} + \mu  = 0,\forall k,N - \sum\nolimits_{k = 1}^K {{{\tilde N}_k}}  = 0.
\end{align}
Based on \eqref{KKT_conditons1}, \eqref{optimal_solution_C14} can be obtained after some straightforward  manipulations.
\end{proof}

For the objective of maximizing the minimum user rate, Proposition 4 demonstrates that the number of IRS elements deployed in each cluster scales with ${\left( {\rho _{g,k}^{\rm{D}}\rho _{r,kL}^{\rm{D}}} \right)^{ - 2/3}}$. Note that ${\left( {\rho _{g,k}^{\rm{D}}\rho _{r,kL}^{\rm{D}}} \right)^{ - 2}}$ represents the concatenated path-loss of the BS-IRS-user $u_k^L$ link. The result is intuitive, since more elements have to be deployed in the cluster suffering the severe concatenated path-loss, which is helpful for balancing the SINR. Based on Proposition 1, the optimal solution for the original problem \eqref{C13} can be constructed via the integer rounding technique.

\subsubsection{Sum-rate Maximization}
We further consider the IRS element allocation design for the objective of the sum-rate maximization. The corresponding optimization problem of jointly optimizing the IRS element allocation and power allocation is formulated as follows:
\begin{subequations}\label{C15}
\begin{align}
\label{C15-a}\mathop {\max }\limits_{\left\{ {{p_k}} \right\},\left\{ {{N_k}} \right\}} \;\;&\sum\nolimits_{k = 1}^K {R_{kL}^{\rm{D}}}\\
\label{C15-b}{\rm{s.t.}}\;\;\;\;\;&\eqref{C13-b},\eqref{C13-c},\eqref{C13-d}.
\end{align}
\end{subequations}
Problem \eqref{C15} is challenging to solve due to both the coupled optimization variables in the objective function and to the discrete variables in constraint \eqref{C13-d}. To make problem \eqref{C15} tractable, we first consider to relax the discrete constraints on $\left\{ {{N_k}} \right\}$ in \eqref{C13-d}. Then, the resultant optimization problem is
\begin{subequations}\label{C16}
\begin{align}
\label{C16-a}\mathop {\max }\limits_{\left\{ {{p_k}} \right\},\left\{ {{N_k}} \right\}} \;\;&\sum\nolimits_{k = 1}^K {R_{kL}^{\rm{D}}}\\
\label{C16-b}{\rm{s.t.}}\;\;\;\;\;&{N_k} \ge 0,\\
\label{C16-b}&\eqref{C13-b},\eqref{C13-c}.
\end{align}
\end{subequations}
Then, we derive the asymptotically optimal solution of problem \eqref{C16} in the large-$N$ regime, which is formulated in the following proposition.
\begin{pos}
As $N \to \infty$, the asymptotically optimal solution of problem \eqref{C16}, denoted by $\left\{ {p_{kL}^*,N_k^*} \right\}$, is derived as
\begin{align}\label{optimal_solution2}
N_k^* = \frac{N}{K},p_{kL}^* = \frac{{{P_{\max }}}}{K}, \forall k \in {\cal K}.
\end{align}
\end{pos}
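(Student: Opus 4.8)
The plan is to show that in the large-$N$ regime the sum-rate objective $\sum_{k=1}^K R_{kL}^{\rm D}$ is, to leading order, a concave function of the allocation variables, and that the symmetric point $N_k = N/K$, $p_{kL}=P_{\max}/K$ is the unique maximizer. First I would substitute \eqref{rate_kL} into the objective and write $R_{kL}^{\rm D} = \log_2\!\left(1 + p_{kL} N_k^2 \Gamma_k\right)$ with $\Gamma_k$ as defined in \eqref{tao_k}. Since Assumption~1 (homogeneous channel) forces $\Gamma_1 = \cdots = \Gamma_K = \Gamma$, the objective becomes $\sum_k \log_2(1 + \Gamma p_{kL} N_k^2)$ subject to $\sum_k p_{kL} = P_{\max}$ and $\sum_k N_k = N$. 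As $N\to\infty$ the receive SNRs grow without bound, so $\log_2(1+\Gamma p_{kL}N_k^2) \to \log_2(\Gamma p_{kL}N_k^2) = \log_2 \Gamma + \log_2 p_{kL} + 2\log_2 N_k$, and the objective separates into $\sum_k \log_2 p_{kL} + 2\sum_k \log_2 N_k$ plus a constant.

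Next I would maximize the two separable pieces independently. Each of $\sum_k \log_2 p_{kL}$ and $\sum_k \log_2 N_k$ is a sum of concave functions subject to a single linear (sum) constraint, so by symmetry — or formally by a Lagrangian/KKT argument, or simply by the AM--GM inequality applied to $\prod_k p_{kL}$ and $\prod_k N_k$ — the maximum of each is attained uniquely at the equal allocation $p_{kL} = P_{\max}/K$ and $N_k = N/K$. This yields the claimed asymptotically optimal solution \eqref{optimal_solution2}. To make the ``asymptotically optimal'' statement rigorous, I would argue that the gap between $\log_2(1+\Gamma p_{kL}N_k^2)$ and its high-SNR surrogate $\log_2(\Gamma p_{kL} N_k^2)$ is $O(1/N^2)$ uniformly over the feasible set bounded away from the boundary, so any feasible allocation achieving within $o(1)$ of the optimum of \eqref{C16} must converge to the symmetric point; equivalently, the symmetric point achieves the optimal value of the surrogate exactly and is within $o(1)$ of the true optimum.

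The main obstacle I anticipate is not the optimization itself — which is elementary once the separation is in place — but handling the interchange of the large-$N$ limit with the optimization carefully: a priori a maximizer of \eqref{C16} for finite $N$ could drive some $N_k$ or $p_{kL}$ toward zero (where the high-SNR approximation fails), so I would need to rule this out, e.g. by observing that sending any $N_k \to 0$ sends the corresponding $R_{kL}^{\rm D}\to 0$ while the symmetric allocation gives a strictly positive and growing contribution, hence boundary allocations are strictly suboptimal for $N$ large enough. With that, the concavity of the logarithm and the symmetry of the homogeneous setup deliver \eqref{optimal_solution2} uniquely.
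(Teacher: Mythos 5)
Your proposal is correct and follows essentially the same route as the paper: replace $\log_2(1+\cdot)$ by its large-$N$ surrogate, reduce the objective to maximizing $\sum_k\log_2 p_{kL}+2\sum_k\log_2 N_k$ under the two sum constraints (solved by equal allocation via concavity/AM--GM), and separately rule out allocations with some $N_k=0$ because they sacrifice a $\Theta(\log_2 N)$ term that no redistribution can recover; the paper merely inserts an intermediate water-filling characterization of the power before taking the limit. One small correction: you should not invoke Assumption~1 here --- Section~IV allows heterogeneous $\Gamma_k$ (that heterogeneity is the whole point of the element-allocation design) --- but since $\sum_k\log_2\Gamma_k$ is an additive constant your separation argument goes through unchanged.
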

\begin{proof}
 Let ${N_k} = {\beta _k}N$. Under any given $\left\{ {{N_k}} \right\}$, the optimal $\left\{ {{p_{kL}}} \right\}$ can be derived as \cite{4657320}
\begin{align}\label{water_filling1}
{{\tilde p}_{kL}} = \max \left( {u - \frac{1}{{{\Gamma _k}\beta _k^2{N^2}}},0} \right)
\end{align}
with $\sum\limits_{k = 1}^K {{{\tilde p}_{kL}}}  = {P_{\max }}$, where ${{\Gamma _k}}$ is defined in \eqref{tao_k}. Let ${N_k} = {\beta _k}N$. Then, upon substituting \eqref{water_filling1} into problem \eqref{C16}, \eqref{C16} becomes equivalent to
\begin{subequations}\label{C17}
\begin{align}
\label{C17-a}\mathop {\max }\limits_{\left\{ {{\beta _k}} \right\}} \;\;&\sum\nolimits_{k = 1}^K {{{\log }_2}\left( {1 + {{\tilde p}_{kL}}\beta _k^2{N^2}{\Gamma _k}} \right)}\\
\label{C17-b}{\rm{s.t.}}\;\;&\sum\nolimits_{k = 1}^K {{\beta _k}}  = 1,\\
\label{C17-c}&{\beta _k} \ge 0.
\end{align}
\end{subequations}
For problem \eqref{C16}, we first focus on the case of ${\beta _k} > 0,\forall k$. As $N \to \infty $, ${{\tilde p}_{kL}} = u - 1/\left( {{\Gamma _k}\beta _k^2{N^2}} \right)$ holds naturally. Then, the objective function \eqref{C17-a} can be simplified as
\begin{align}\label{function17_large_N}
\mathop {\lim }\limits_{N \to \infty } \sum\nolimits_{k = 1}^K {{{\log }_2}\left( {1 + {{\tilde p}_{kL}}\beta _k^2{N^2}{\Gamma _k}} \right)} = \sum\nolimits_{k = 1}^K {{{\log }_2}\left( {u\beta _k^2{N^2}{\Gamma _k}} \right)}.
\end{align}
Hence, problem \eqref{C17} is reduced to
\begin{align}\label{case1_problem}
\mathop {\max }\limits_{\left\{ {{\beta _k}} \right\}}~~ \sum\nolimits_{k = 1}^K {{{\log }_2}\left( {u\beta _k^2{N^2}{\Gamma _k}} \right)} ~~{\rm{s}}{\rm{.t}}{\rm{.}}~~\eqref{C17-b}, \eqref{C17-c}.
\end{align}
Problem \eqref{case1_problem} is convex and its optimal solution can be shown to be ${\beta _k} = 1/K$. Accordingly, the optimal power allocation in this case is ${{\tilde p}_{kL}} = {P_{\max }}/K$ and its resultant objective value is
\begin{align}\label{objective_value1}
\tilde R = \sum\nolimits_{k = 1}^K {{{\log }_2}\left( {\frac{{{P_{\max }}{N^2}}}{{{K^3}}}{\Gamma _k}} \right)}.
\end{align}
Let $\bar {\cal K} \buildrel \Delta \over = \left\{ {k:{\beta _k} = 0,k \in {\cal K}} \right\}$. Then, we can show that $\bar {\cal K} \ne \emptyset $ is always suboptimal. Under any given ${\bar {\cal K}}$, the objective value can be similarly derived as
\begin{align}\label{objective_value1}
\mathord{\buildrel{\lower3pt\hbox{$\scriptscriptstyle\smile$}}
\over R}  = \sum\nolimits_{k \in {\cal K}\backslash \bar {\cal K}} {{{\log }_2}\left( {\frac{{{P_{\max }}{N^2}}}{{{{\left( {K - \left| {\bar {\cal K}} \right|} \right)}^3}}}{\Gamma _k}} \right)}.
\end{align}
It can be shown that
\begin{align}\label{case1_versus_case2}
\mathop {\lim }\limits_{N \to \infty } \frac{{\tilde R}}{{\mathord{\buildrel{\lower3pt\hbox{$\scriptscriptstyle\smile$}}
\over R} }} = \frac{K}{{K - \bar {\cal K}}} \ge 1,
\end{align}
which implies that the optimal solution of problem \eqref{C17} is always under the case of ${\beta _k} > 0,\forall k$ as $N \to \infty $. Thus, we complete the proof.
\end{proof}

Proposition 5 unveils that equal elements allocation is able to achieve the near-optimal performance under the large number of IRS elements regime. In a general multi-stream transmission, it is well known that the equal power allocation is asymptotically optimal in the high SNR region \cite{4657320}. Deploying a large number of IRS elements generates the equivalent high SNR regime artificially, which makes the equal power/element allocation scheme is asymptotically optimal for a high $N$.

\subsection{Performance Characterization under Rician Channels}
In this subsection, we characterize the ergodic rate of the proposed hybrid SDMA-TDMA scheme under the Rician fading channels. Under the assumption of $\delta _1^{\rm{D}}{\rm{ = }} \ldots {\rm{ = }}\delta _K^{\rm{D}} = {\delta ^{\rm{D}}}$ and $\varepsilon _k^{\rm{D}} \to \infty, \forall k$, the closed-form expression of the ergoidic rate of user $u_k^l$ is approximately derived in the following proposition.

\begin{pos}
In the $l$-th TS, the ergodic rate of user $u_k^l$ can be approximated as
\begin{align}\label{ergodic_rate_result}
{\rm{E}}\left[ {{R_{kl}}} \right] \!\!\approx \!\!\frac{{{p_{kl}}{{\left( {\rho _{g,k}^{\rm{D}}\rho _{r,kl}^{\rm{D}}} \right)}^2}\left( {\eta MN_k^2{{\left( {\frac{{{2^b}}}{\pi }\sin \frac{\pi }{{{2^b}}}} \right)}^2}{\rm{ + }}\left( {1 - \eta } \right){N_k}} \right)}}{{{{\left( {\rho _{g,k}^{\rm{D}}\rho _{r,kl}^{\rm{D}}} \right)}^2}\sum\nolimits_{i = 1,i \ne k}^K {{p_{il}}\left( {1 - \eta } \right){N_k} + {\sigma ^2}} }},
\end{align}
where $\eta  = {\delta ^{\rm{D}}}/\left( {{\delta ^{\rm{D}}} + 1} \right)$.
\end{pos}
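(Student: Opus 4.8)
The plan is to substitute the hybrid SDMA-TDMA configuration \eqref{joint_bf_configuration} into the effective channel \eqref{effective_channel d}, decompose that channel into a deterministic line-of-sight (LoS) part and a zero-mean non-LoS (NLoS) part, evaluate the first two moments of the received signal and interference powers, and then apply a standard large-$N_k$ decoupling to the logarithm. Since $\varepsilon_k^{\rm D}\to\infty$, the IRS-user link is purely LoS, $({\bf h}_{r,kl}^{\rm D})^H = \rho_{r,kl}^{\rm D}({\bf{\bar h}}_{r,kl}^{\rm D})^H$, whereas by \eqref{channel_Gk} the BS-IRS link retains both components with weights $\sqrt{\eta}$ and $\sqrt{1-\eta}$, where $\eta = \delta^{\rm D}/(\delta^{\rm D}+1)$. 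Exploiting the rank-one LoS structure ${\bf{\bar G}}_k^{\rm D} = {\bf a}_{{\rm S},k}\,{\bf a}_M^H(\sin\theta_{{\rm T},k}^{\rm AOD})$ together with ${\bf w}_i[l] = \sqrt{p_{il}}\,{\bf a}_M(\sin\theta_{{\rm T},i}^{\rm AOD})/\sqrt{M}$, the inner product $({\bf h}_{kl}^{\rm D}({\bf{\Theta}}_k^{\rm D}[l]))^H{\bf w}_i[l]$ splits into a LoS term carrying the factor $\sqrt{\eta}\,({\bf{\bar h}}_{r,kl}^{\rm D})^H{\bf{\Theta}}_k^{\rm D}[l]\,{\bf a}_{{\rm S},k}\cdot{\bf a}_M^H(\sin\theta_{{\rm T},k}^{\rm AOD}){\bf a}_M(\sin\theta_{{\rm T},i}^{\rm AOD})$ and an NLoS term carrying $\sqrt{1-\eta}\,({\bf{\bar h}}_{r,kl}^{\rm D})^H{\bf{\Theta}}_k^{\rm D}[l]\,{\bf{\tilde G}}_k^{\rm D}\,{\bf a}_M(\sin\theta_{{\rm T},i}^{\rm AOD})$.

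For the desired signal ($i=k$) I would first observe that ${\bf{\Theta}}_k^{\rm D}[l]$ is deterministic, as it depends only on LoS quantities, so the LoS term is deterministic; using ${\bf a}_M^H{\bf a}_M = M$ and the quantized-phase identity $|({\bf{\bar h}}_{r,kl}^{\rm D})^H{\bf{\Theta}}_k^{\rm D}[l]\,{\bf a}_{{\rm S},k}|^2 = N_k^2(\frac{2^b}{\pi}\sin\frac{\pi}{2^b})^2$ established in the proof of Proposition~1, its power equals $(\rho_{g,k}^{\rm D}\rho_{r,kl}^{\rm D})^2 p_{kl}\,\eta M N_k^2(\frac{2^b}{\pi}\sin\frac{\pi}{2^b})^2$. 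For the NLoS term I would use that $({\bf{\bar h}}_{r,kl}^{\rm D})^H{\bf{\Theta}}_k^{\rm D}[l]$ has unit-modulus entries, hence Euclidean norm $\sqrt{N_k}$, together with the identity ${\rm E}\{|{\bf u}^H{\bf{\tilde G}}_k^{\rm D}{\bf v}|^2\} = \|{\bf u}\|^2\|{\bf v}\|^2$ valid for a matrix with i.i.d. ${\cal C}{\cal N}(0,1)$ entries; taking ${\bf v} = {\bf a}_M(\sin\theta_{{\rm T},k}^{\rm AOD})$ with $\|{\bf v}\|^2 = M$, this yields mean NLoS signal power $(\rho_{g,k}^{\rm D}\rho_{r,kl}^{\rm D})^2 p_{kl}(1-\eta)N_k$. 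Because the NLoS term is zero mean, the LoS-NLoS cross term vanishes under expectation, so ${\rm E}\{|({\bf h}_{kl}^{\rm D})^H{\bf w}_k[l]|^2\}$ equals the numerator of \eqref{ergodic_rate_result}.

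For an interferer $i\ne k$, the ideal IRS deployment condition \eqref{deployment_condition} makes the BS array responses towards distinct IRSs orthogonal, ${\bf a}_M^H(\sin\theta_{{\rm T},k}^{\rm AOD}){\bf a}_M(\sin\theta_{{\rm T},i}^{\rm AOD}) = 0$, so the LoS interference term vanishes identically, and the same moment identity gives mean NLoS interference power $(\rho_{g,k}^{\rm D}\rho_{r,kl}^{\rm D})^2 p_{il}(1-\eta)N_k$ for each $i$. Summing over $i\ne k$ and adding the noise power $\sigma^2$ reproduces the denominator of \eqref{ergodic_rate_result}. It then remains to invoke the large-$N_k$ approximation ${\rm E}\{\log_2(1+x)\}\approx\log_2(1+{\rm E}\{x\})$ and to replace ${\rm E}\{{\rm SINR}_{k,l}\}$ by the ratio of the mean signal power to the mean interference-plus-noise power, which gives \eqref{ergodic_rate_result}.

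I expect this last decoupling to be the main obstacle: replacing ${\rm E}\{\log_2(1+S/I)\}$ by $\log_2(1+{\rm E}\{S\}/{\rm E}\{I\})$ for correlated random $S$ and $I$ is only asymptotically valid and requires a concentration (law-of-large-numbers) argument for the NLoS quadratic forms as $N_k\to\infty$; one must also accept that the quantized-reflection gain $N_k^2(\frac{2^b}{\pi}\sin\frac{\pi}{2^b})^2$ is itself an average over the residual phase-quantization errors. By comparison, the remaining pieces are routine evaluations of Gaussian quadratic forms.
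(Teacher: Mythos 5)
Your proposal is correct and follows essentially the same route as the paper: the paper likewise splits the effective channel into its deterministic LoS and zero-mean NLoS parts, evaluates the same Gaussian quadratic-form moments to obtain the signal power $p_{kl}(\rho_{g,k}^{\rm D}\rho_{r,kl}^{\rm D})^2\bigl(\eta M N_k^2(\tfrac{2^b}{\pi}\sin\tfrac{\pi}{2^b})^2+(1-\eta)N_k\bigr)$ and per-interferer power $p_{il}(\rho_{g,k}^{\rm D}\rho_{r,kl}^{\rm D})^2(1-\eta)N_k$ (with the LoS interference nulled by the ideal deployment condition, which you usefully make explicit), and then invokes the standard $\mathrm{E}[\log_2(1+S/I)]\approx\log_2(1+\mathrm{E}[S]/\mathrm{E}[I])$ decoupling (Lemma 1 of the cited reference) that you correctly flag as the only non-routine step.
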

\begin{proof}
By applying Lemma 1 in \cite{6816003}, the ergodic rate of user $u_k^l$ can be approximated as
\begin{align}\label{ergodic_rate_pre}
{\rm{E}}\left[ {{R_{kl}}} \right] \!\!\approx\!\! {\log _2}\left( {1 + \frac{{{\mathop{\rm E}\nolimits} \left[ {{{\left| {{{\left( {{\bf{h}}_{kl}^{\rm{D}}\left( {{\bf{\Theta }}_k^{\rm{D}}\left[ l \right]} \right)} \right)}^H}{{\bf{w}}_k}\left[ l \right]} \right|}^2}} \right]}}{{\sum\limits_{i = 1,i \ne k}^K {{\mathop{\rm E}\nolimits} {{\left| {{{\left( {{\bf{h}}_{kl}^{\rm{D}}\left( {{\bf{\Theta }}_k^{\rm{D}}\left[ l \right]} \right)} \right)}^H}{{\bf{w}}_i}\left[ l \right]} \right|}^2} + {\sigma ^2}} }}} \right).
\end{align}
Upon substituting $\left\{ {{{\bf{w}}_k}\left[ l \right],{\bf{\Theta }}_k^{\rm{D}}\left[ l \right]} \right\}$ given in \eqref{joint_bf_configuration} into \eqref{ergodic_rate_pre}, the signal term ${{\mathop{\rm E}\nolimits} \left[ {{{\left| {{{\left( {{\bf{h}}_{kl}^{\rm{D}}\left( {{\bf{\Theta }}_k^{\rm{D}}\left[ l \right]} \right)} \right)}^H}{{\bf{w}}_k}\left[ l \right]} \right|}^2}} \right]}$ can be derived as
\begin{align}\label{average_intended_signal}
\begin{array}{l}
{\rm{E}}\left[ {{{\left| {{{\left( {{\bf{h}}_{kl}^{\rm{D}}\left( {{\bf{\Theta }}_k^{\rm{D}}\left[ l \right]} \right)} \right)}^H}{{\bf{w}}_k}\left[ l \right]} \right|}^2}} \right]\\
 = {p_{kl}}{\left( {\rho _{g,k}^{\rm{D}}\rho _{r,kl}^{\rm{D}}} \right)^2}{\rm{E}}\left[ {{{\left| {{{\left( {{\bf{\bar h}}_{r,kl}^{\rm{D}}} \right)}^H}{\bf{\Theta }}_k^{\rm{D}}\left[ l \right]\left( {\sqrt \eta  {\bf{\bar G}}_k^{\rm{D}} + \sqrt {1 - \eta } {\bf{\tilde G}}_k^{\rm{D}}} \right){{\bf{w}}_k}\left[ l \right]} \right|}^2}} \right]\\
 = {p_{kl}}{\left( {\rho _{g,k}^{\rm{D}}\rho _{r,kl}^{\rm{D}}} \right)^2}\left( \begin{array}{l}
\eta {\rm{E}}\left[ {{{\left| {{{\left( {{\bf{\bar h}}_{r,kl}^{\rm{D}}} \right)}^H}{\bf{\Theta }}_k^{\rm{D}}\left[ l \right]{\bf{\bar G}}_k^{\rm{D}}{{\bf{w}}_k}\left[ l \right]} \right|}^2}} \right]\\
 + \left( {1 - \eta } \right){\rm{E}}\left[ {{{\left| {{{\left( {{\bf{\bar h}}_{r,kl}^{\rm{D}}} \right)}^H}{\bf{\Theta }}_k^{\rm{D}}\left[ l \right]{\bf{\tilde G}}_k^{\rm{D}}{{\bf{w}}_k}\left[ l \right]} \right|}^2}} \right]
\end{array} \right)\\
 = {p_{kl}}{\left( {\rho _{g,k}^{\rm{D}}\rho _{r,kl}^{\rm{D}}} \right)^2}\left( {\eta MN_k^2{{\left( {\frac{{{2^b}}}{\pi }\sin \frac{\pi }{{{2^b}}}} \right)}^2}{\rm{ + }}\left( {1 - \eta } \right){N_k}} \right).
\end{array}
\end{align}
Then, the inter-user interference term is obtained as
\begin{align}\label{average_interference}
\begin{array}{*{20}{l}}
{{\rm{E}}{{\left| {{{\left( {{\bf{h}}_{kl}^{\rm{D}}\left( {{\bf{\Theta }}_k^{\rm{D}}\left[ l \right]} \right)} \right)}^H}{{\bf{w}}_i}\left[ l \right]} \right|}^2}}\\
{ = {p_{il}}{{\left( {\rho _{g,k}^{\rm{D}}\rho _{r,kl}^{\rm{D}}} \right)}^2}{\rm{E}}\left[ {{{\left| {{{\left( {{\bf{\bar h}}_{r,kl}^{\rm{D}}} \right)}^H}{\bf{\Theta }}_k^{\rm{D}}\left[ l \right]\left(
\sqrt \eta  {\bf{\bar G}}_k^{\rm{D}}
 + \sqrt {1 - \eta } {\bf{\tilde G}}_k^{\rm{D}}
 \right){{\bf{w}}_i}\left[ l \right]} \right|}^2}} \right]}\\
{ = {p_{il}}{{\left( {\rho _{g,k}^{\rm{D}}\rho _{r,kl}^{\rm{D}}} \right)}^2}\left( {1 - \eta } \right){\rm{E}}\left[ {{{\left| {{{\left( {{\bf{\bar h}}_{r,kl}^{\rm{D}}} \right)}^H}{\bf{\Theta }}_k^{\rm{D}}\left[ l \right]{\bf{\tilde G}}_k^{\rm{D}}{{\bf{w}}_i}\left[ l \right]} \right|}^2}} \right]}\\
{ = {p_{il}}{{\left( {\rho _{g,k}^{\rm{D}}\rho _{r,kl}^{\rm{D}}} \right)}^2}\left( {1 - \eta } \right){N_k}.}
\end{array}
\end{align}
substituting \eqref{average_intended_signal} and \eqref{average_interference} into \eqref{ergodic_rate_pre}, the proof is completed.
\end{proof}

The accuracy of \eqref{ergodic_rate_result} will be verified by Monte Carlo simulations in the next subsection. Proposition 6 provides an efficient way of quantifing the performance loss of the achievable rate under the Rician fading channel relative to the LoS channel. It is observed from \eqref{ergodic_rate_result} that the residual inter-user interference increases linearly with ${{N_k}}$ due to the NLoS component in the BS-IRS link.
\vspace{-10pt}
\subsection{Numerical Results}
In this subsection, we first examine the effectiveness of the proposed IRS element allocation strategies numerically. Then, we further quantify the performance of the \emph{distributed IRS} under the Rician fading channel. Unless otherwise stated, we set $K = 2$ and other parameters are same as those for Fig. 2-Fig. 5.

\subsubsection{Minimum User Rate Maximization}
We consider a heterogeneous channel setup, where the concatenated path-loss for the $K$ users is set as $\left[ { - 140, - 150} \right]$ dB. The following schemes are considered for comparison: 1) \textbf{Distributed IRS-o}: The optimal design for IRS elements and power allocation provided in Proposition 4; 2) \textbf{Distributed IRS-eq}: The optimal power allocation is performed under the identical IRS elements allocation of ${N_k} = N/K,\forall k$; 3) \textbf{Centralized IRS}: Optimal power allocation under the \emph{centralized IRS} architecture.
%

\begin{figure}[t]
\begin{minipage}[t]{0.4\linewidth}
\centering
\includegraphics[width=2.9in]{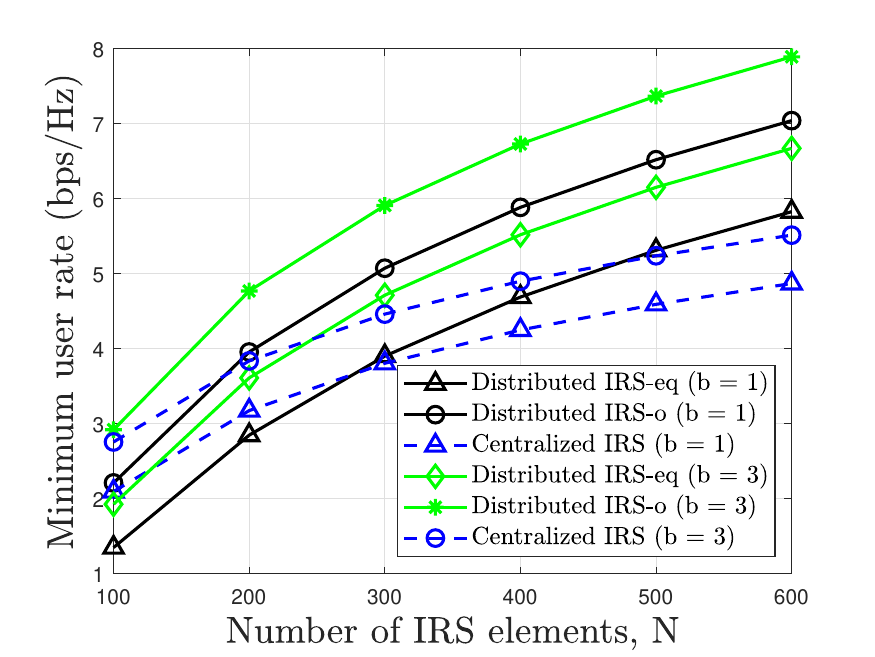}
\caption{Minimum user rate versus $N$ under the heterogeneous channel setup.}
\label{elements_allocation1}
\end{minipage}%
\hfill
\begin{minipage}[t]{0.4\linewidth}
\centering
\includegraphics[width=2.9in]{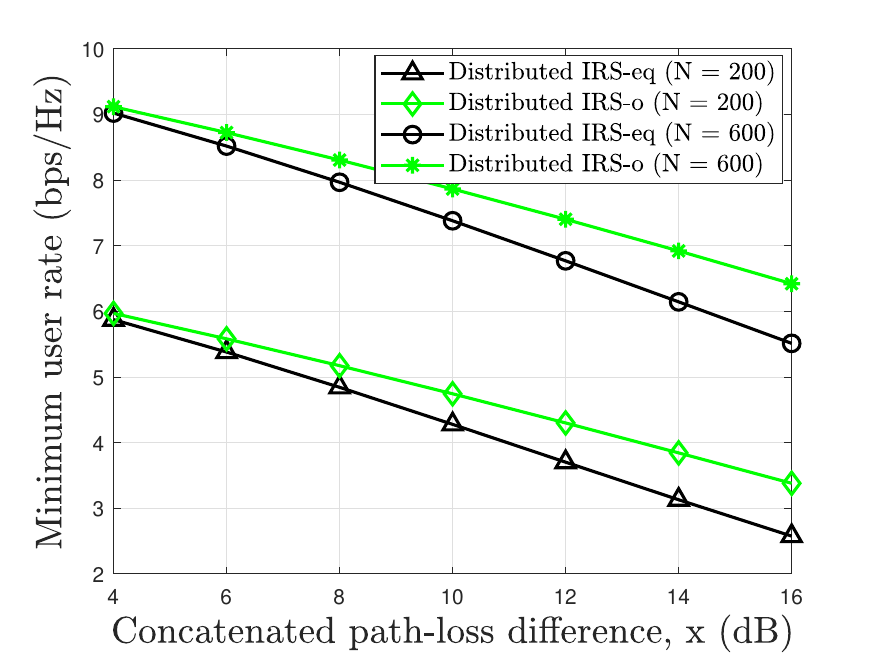}
\caption{Minimum user rate versus concatenated path-loss difference.}
\label{elements_allocation2}
\end{minipage}
\vspace{-16pt}
\end{figure}

In Fig. \ref{elements_allocation1}, we plot the minimum user rate versus $N$ under our heterogeneous channel setup. First, it is observed that the proposed IRS elements allocation design significantly improves the minimum user rate as compared to the case of the identical IRS element allocation of ${N_k} = N/K$. This is expected since the effective channel power gains of different user clusters tend to be homogenous due to flexibly allocating the IRS elements. This suggests that the rate fairness issue can be alleviated by appropriate IRS elements allocation design. Note that careful IRS element allocation is capable of mitigating the severe concatenated path-loss of the user clusters, which are located far from the BS. Moreover, the $N$ required for distributed IRS to outperform centralized IRS can be reduced via the optimal IRS element allocation design, which effectively enlarges the operating region of the distributed IRS.

In Fig. \ref{elements_allocation2}, we study the impact of the concatenated path-loss difference on the minimum user rate, by plotting it versus the concatenated path-loss difference, denoted by $x$. For the given $x$, the corresponding minimum user rate is obtained under the concatenated path-loss $\left[ { - 140, - 140 - x} \right]$ dB. It is observed that the minimum user rate under the optimal element allocation moderately decreases with $x$, while that of the identical element allocation decreases sharply with $x$. This highlights the importance of carefully optimizing the IRS element allocation for high path-loss differences among users.

\begin{figure}[t]
\begin{minipage}[t]{0.45\linewidth}
\centering
\includegraphics[width=2.9in]{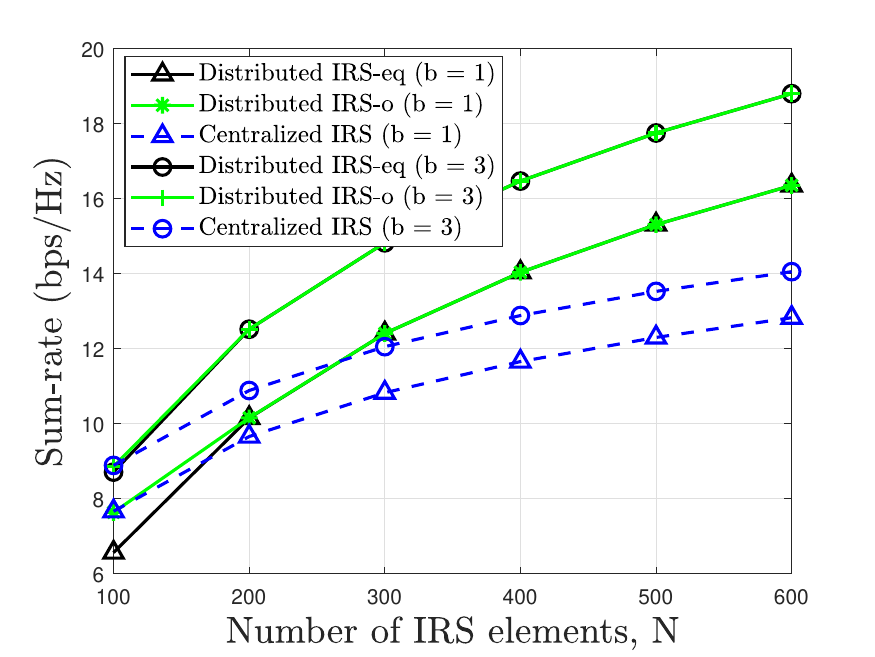}
\caption{Sumrate versus $N$ under the heterogeneous channel setup.}
\label{elements_allocation3}
\end{minipage}%
\hfill
\begin{minipage}[t]{0.45\linewidth}
\centering
\includegraphics[width=2.9in]{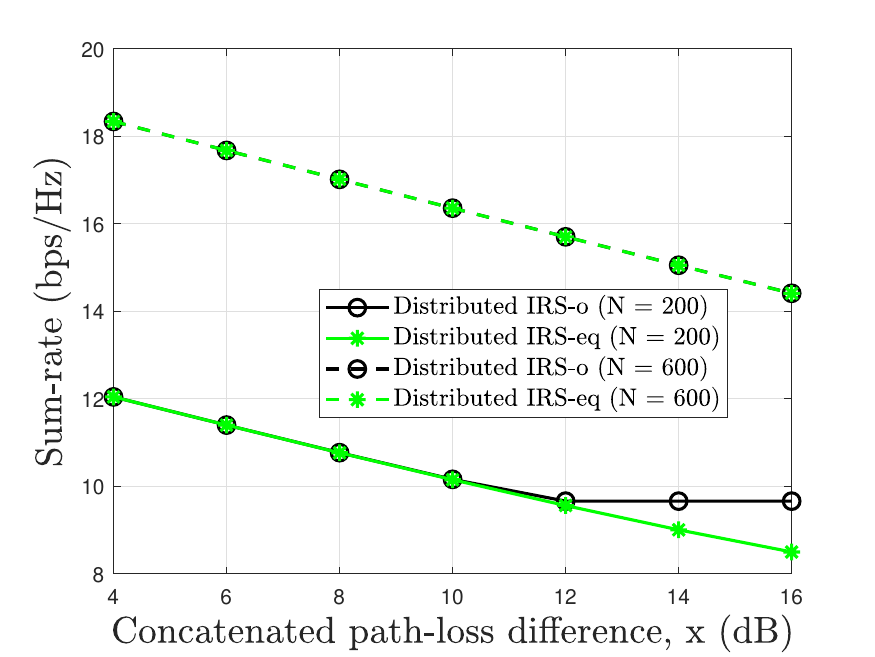}
\caption{Sumrate versus concatenated path-loss difference.}
\label{elements_allocation4}
\end{minipage}
\vspace{-16pt}
\end{figure}

\subsubsection{Objective of Sum-Rate Maximization}
%
We adopt the same parameters as those in Fig. \ref{elements_allocation1} and Fig. \ref{elements_allocation2}. The following schemes are considered: 1) \textbf{Distributed IRS-eq}: Both the equal power allocation and identical IRS element allocation are adopted under the \emph{distributed IRS}; 2) \textbf{Distributed IRS-o}: Exhaustive search is employed to find the optimal element allocation; 3) \textbf{Centralized IRS}: The maximum sum-rate of the \emph{centralized IRS} is achieved by only scheduling the user having the maximum received SNR, i.e, $R_c^s = \mathop {\max }\limits_{k \in \left\{ {1, \ldots ,K} \right\}} {\log _2}\left( {1 + \gamma _k^c} \right)$ with ${\gamma _k^c}$ representing the received SNR at user $k$. In Fig. \eqref{elements_allocation3}, we show the sum-rate versus the total number of IRS elements. As $N>200$, it is observed from Fig. \eqref{elements_allocation3} that the sum-rate achieved by equal power and identical IRS element allocations is almost consistent with that achieved by exhaustive search, which validates our analysis in Proposition 5.

In Fig. \ref{elements_allocation4}, we examine the ergodic sum-rate versus the concatenated path-loss difference, i.e., $x$. For the case of $N = 200$, we observe that equal power and identical IRS element allocation is able to achieve near-optimal performance for $x \le 12$ dB. However, when $x > 12$ dB, the sum-rate achieved by exhaustive search remains fixed as $x$ increases. This is due to the fact that all the elements are placed near user 1 for maximizing the sum-rate, since the concatenated path-loss of user 2 is significantly lower than that of user 1. In this case, the sum-rate reduces to the achieved rate of user 1, since user 2 is not scheduled, which leads to a severe user fairness issue. Nevertheless, for the case of $N = 600$, we can observe that equal power and identical IRS element allocation achieves almost the same performance as that of exhaustive search even at $x=16$ dB. The result unveils that the user fairness issue caused by high concatenated path-loss differences can be addressed by deploying more IRS elements.

\subsubsection{Performance Evaluation Under Rician Channels}

\begin{figure}[!t]
\setlength{\abovecaptionskip}{-5pt}
\setlength{\belowcaptionskip}{-5pt}
\centering
\includegraphics[width= 0.4\textwidth]{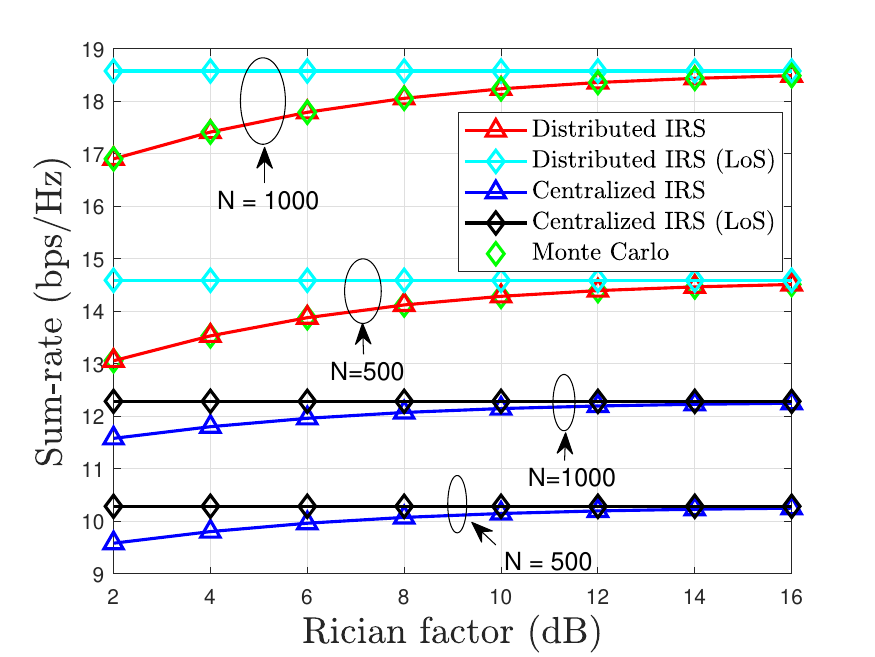}
\DeclareGraphicsExtensions.
\caption{Sum-rate versus Rician factor.}
\label{Rician_fading}
\vspace{-16pt}
\end{figure}
In Fig. \ref{Rician_fading}, we plot the sum-rate versus the Rician factor. For examining the accuracy of the analysis in Proposition 6, Monte-Carlo simulations are implemented and the results are obtained by averaging those of 10000 channel realizations. Regarding the case of \emph{centralized IRS}, the BS-IRS link is assumed to be the pure LoS channel and the Rician factor of the IRS-user links are set to be the same as that of the BS-IRSs links under the \emph{distributed IRS}. The sum-rates under the \emph{centralized IRS} are obtained by simulations. As shown in Fig. \ref{Rician_fading}, the results of the expression in Proposition 6 are tightly matched with the results obtained by the Monte-Carlo simulations, which validates the accuracy of the approximation. It is observed from Fig. \ref{Rician_fading} that the sum-rate under the \emph{distributed IRS} is more sensitive to the Rician factor than that of the \emph{centralized IRS}. Note that the NLoS component of channels degrades the passive beamforming gain and also increases the inter-user interference under the \emph{distributed IRS}. By contrast, the NLoS component of channels only reduces the passive beamforming gain for the \emph{centralized IRS}. The result highlights the importance of deploying distributed IRSs to create strong LoS links with the BS. Nevertheless, the sum-rates of the \emph{distributed IRS} are still higher than those of the \emph{centralized IRS} for a wide range of Rician factors. This demonstrates the superiority of the \emph{distributed IRS} architecture in terms of network capacity.

\vspace{-10pt}
\section{Conclusions}
In this paper, we investigated the capacity of the multi-antenna broadcast channel assisted by both the \emph{distributed IRS} and \emph{centralized IRS} deployment architectures. We provided an analytical framework to theoretically compare the capacity achieved by the \emph{distributed} and \emph{centralized IRS}. By capturing the fundamental tradeoff between the spatial multiplexing gain and passive beamforming gain, we analytically demonstrated that the \emph{distributed IRS} is capable of outperforming \emph{centralized IRS} when the total number of IRS elements is higher than a threshold. Furthermore, to fully unleash the potential of spatial multiplexing and dynamic IRS beamforming, we proposed an efficient hybrid SDMA-TDMA scheme for the \emph{distributed IRS}. Moreover, we studied the IRS element allocation problem under the \emph{distributed IRS} to customize channels for both minimum user rate maximization and sum-rate maximization. Our numerical results validated the theoretical findings and demonstrated the benefits of the \emph{distributed IRS} for improving the system capacity under various setups.
\vspace{-10pt}
\section*{Appendix A: \textsc{Proof of Proposition 1}}
To obtain \eqref{C_D_closed}, we first derive the outer bound of ${{\cal C}^{\rm{D}}}$ and then we show that this upper is tight and can be achieved. By removing the inter-user interference, it can be shown that $R_k^{\rm{D}}\left( {\left\{ {{{\bf{w}}_k}} \right\},{\bf{\Theta }}_k^{\rm{D}}} \right)$ is upper-bounded by
\begin{align}\label{upper_bound}
R_k^{\rm{D}}\left( {\left\{ {{{\bf{w}}_k}} \right\},{\bf{\Theta }}_k^{\rm{D}}} \right) \le \hat R_k^{\rm{D}}\left( {\left\{ {{{\bf{w}}_k}} \right\},{\bf{\Theta }}_k^{\rm{D}}} \right){\rm{ = }}{\log _2}\left( {1 + {{{{\left| {{{\left( {{\bf{h}}_k^{\rm{D}}\left( {{\bf{\Theta }}_k^{\rm{D}}} \right)} \right)}^H}{{\bf{w}}_k}} \right|}^2}} \mathord{\left/
 {\vphantom {{{{\left| {{{\left( {{\bf{h}}_k^{\rm{D}}\left( {{\bf{\Theta }}_k^{\rm{D}}} \right)} \right)}^H}{{\bf{w}}_k}} \right|}^2}} {{\sigma ^2}}}} \right.
 \kern-\nulldelimiterspace} {{\sigma ^2}}}} \right).
\end{align}
Let ${{\bf{w}}_k}{\rm{ = }}\sqrt {{p_k}} {{{\bf{\tilde w}}}_k}$ with $\left\| {{{{\bf{\tilde w}}}_k}} \right\|_2^2 = 1$. Then, we derive the outer bound of ${{\cal C}^{\rm{D}}}$ by analytically solving the following optimization problem:
\begin{align}\label{max_power_gain}
\mathop {\max }\limits_{{{{\bf{\tilde w}}}_k},{\bf{\Theta }}_k^{\rm{D}}}~ {p_k}{\left| {{{\left( {{\bf{h}}_k^{\rm{D}}\left( {{\bf{\Theta }}_k^{\rm{D}}} \right)} \right)}^H}{{{\bf{\tilde w}}}_k}} \right|^2}~{\rm{s}}{\rm{.t}}{\rm{.}}~\left\| {{{{\bf{\tilde w}}}_k}} \right\|_2^2 = 1,{\bf{\Theta }}_k^{\rm{D}} \in {{\cal P}^{\rm{D}}}.
\end{align}
By exploiting the special structure of ${\left( {{\bf{h}}_k^{\rm{D}}\left( {{\bf{\Theta }}_k^{\rm{D}}} \right)} \right)^H} = \rho _{g,k}^{\rm{D}}\rho _{r,k}^{\rm{D}}{\left( {{\bf{\bar h}}_{r,k}^{\rm{D}}} \right)^H}{\bf{\Theta }}_k^{\rm{D}}{\bf{\bar G}}_k^{\rm{D}}$ and ${\bf{\bar G}}_k^{\rm{D}}$ in \eqref{channel_Gk_LoS}, problem \eqref{max_power_gain} can be equivalently decomposed into two parallel sub-problems as follow:
\begin{align}\label{w_k_direction}
\mathop {\max }\limits_{{{{\bf{\tilde w}}}_k}}~~ {\left| {{\bf{a}}_M^H\left( {\sin \theta _{{\rm{T}},k}^{{\rm{AOD}}}} \right){{{\bf{\tilde w}}}_k}} \right|^2}~~{\rm{s}}{\rm{.t}}{\rm{.}}~~\left\| {{{{\bf{\tilde w}}}_k}} \right\|_2^2 = 1,
\end{align}
\begin{align}\label{IRS_pattern_distributed}
\mathop {\max }\limits_{{\bf{\Theta }}_k^{\rm{D}}} {\left| {{{\left( {{\bf{\bar h}}_{r,k}^{\rm{D}}} \right)}^H}{\bf{\Theta }}_k^{\rm{D}}{{\bf{a}}_{{\rm{S}},k}}} \right|^2}~~{\rm{s}}{\rm{.t}}{\rm{.}}~~{\bf{\Theta }}_k^{\rm{D}} \in {{\cal P}^{\rm{D}}}.
\end{align}
For problem \eqref{w_k_direction}, the optimal ${{{{\bf{\tilde w}}}_k}}$ is ${\bf{\tilde w}}_k^* = {{\bf{a}}_M}\left( {\sin \theta _{{\rm{T}},k}^{{\rm{AOD}}}} \right)/\sqrt M $ and its associated optimal objective value is $M$. For problem \eqref{IRS_pattern_distributed}, the optimal ${\bf{\Theta }}_k^{\rm{D}}$ is given by $\theta _{k,n}^{{\rm{D*}}} = \mathop {\arg \min }\nolimits_{\theta _{k,n}^D \in {\cal F}} \left| {{e^{j\theta _{k,n}^{\rm{D}}}} - {e^{j\theta _{k,n}^{{\rm{D}}**}}}} \right|$, where $\theta _{k,n}^{{\rm{D}}**} = \arg \left( {{{\left[ {{\bf{\bar h}}_{r,k}^{\rm{D}}} \right]}_n}} \right) - \arg \left( {{{\left[ {{{\bf{a}}_{{\rm{S}},k}}} \right]}_n}} \right)$. As $N \to \infty$, the optimal objective value of problem \eqref{IRS_pattern_distributed} is derived as
\begin{align}\label{IRS_BF_gain}
{\left| {{{\left( {{\bf{\bar h}}_{r,k}^{\rm{D}}} \right)}^H}{\bf{\Theta }}_k^{\rm{D}}{{\bf{a}}_{{\rm{S}},k}}} \right|^2} &= {\left| {\sum\nolimits_{n = 1}^N {{e^{j\left( {\theta _{k,n}^{{\rm{D*}}} - \theta _{k,n}^{{\rm{D}}**}} \right)}}} } \right|^2}
 = {N^2}{\left| {\frac{1}{N}\sum\nolimits_{n = 1}^N {{e^{j\left( {\theta _{k,n}^{{\rm{D*}}} - \theta _{k,n}^{{\rm{D}}**}} \right)}}} } \right|^2}\nonumber\\
&\mathop  \to \limits^{a.s.} {N^2}{\rm{E}}\left[ {{e^{j\left( {\theta _{k,n}^{{\rm{D*}}} - \theta _{k,n}^{{\rm{D}}**}} \right)}}} \right]\mathop  = \limits^{\left( a \right)} {N^2}{\left( {\frac{{{2^b}}}{\pi }\sin \frac{\pi }{{{2^b}}}} \right)^2},
\end{align}
where (a) is valid because $\left( {\theta _{k,n}^{{\rm{D*}}} - \theta _{k,n}^{{\rm{D}}**}} \right)$ is uniformly distributed in $\left[ { - {\pi  \mathord{\left/
 {\vphantom {\pi  {{2^b},{\pi  \mathord{\left/
 {\vphantom {\pi  {{2^b}}}} \right.
 \kern-\nulldelimiterspace} {{2^b}}}}}} \right.
 \kern-\nulldelimiterspace} {{2^b},{\pi  \mathord{\left/
 {\vphantom {\pi  {{2^b}}}} \right.
 \kern-\nulldelimiterspace} {{2^b}}}}}} \right]$. Hence, the objective value of problem \eqref{optimal_beamforming} is ${p_k}M{N^2}{\left( {\frac{{{2^b}}}{\pi }\sin \frac{\pi }{{{2^b}}}} \right)^2}{\left( {\rho _{g,k}^{\rm{D}}} \right)^2}{\left( {\rho _{r,k}^{\rm{D}}} \right)^2}$. Correspondingly, the outer bound of ${{\cal C}^{\rm{D}}}$, denoted by ${\cal C}_o^{\rm{D}}$ is given by
 \begin{align}\label{C_D_O}
{\cal C}_o^{\rm{D}} \!\! =\!\! \left\{ {{{\bf{r}}^{\rm{D}}}:0 \le r_k^{\rm{D}} \le {{\log }_2}\left( {1 + \frac{{{p_k}M{N^2}{{\left( {\frac{{{2^b}}}{\pi }\sin \frac{\pi }{{{2^b}}}} \right)}^2}}}{{{K^2}{\sigma ^2}{{\left( {\rho _{g,k}^{\rm{D}}\rho _{r,k}^{\rm{D}}} \right)}^{ - 2}}}}} \right)} \right\},
\end{align}
with $\sum\nolimits_{k = 1}^K {{p_k} = {P_{\max }}}$.

Then, we show that the outer bound ${\cal C}_o^{\rm{D}}$ is tight. Under the condition that \eqref{deployment_condition} is satisfied, it can be readily verified that
\begin{align}\label{inner_product}
{\left| {{\bf{a}}_M^H\left( {\sin \theta _{{\rm{T}},i}^{{\rm{AOD}}}} \right){{\bf{a}}_M}\left( {\sin \theta _{{\rm{T}},k}^{{\rm{AOD}}}} \right)} \right|^2} &= \sum\nolimits_{m = 1}^M {{e^{j\frac{{2\pi d}}{\lambda }\left( {\sin \theta _{{\rm{T}},k}^{{\rm{AOD}}} - \sin \theta _{{\rm{T}},i}^{{\rm{AOD}}}} \right)\left( {M - 1} \right)}}} \nonumber\\
& = {\left| {\frac{{\sin \frac{{\pi d}}{\lambda }M\left( {\sin \theta _{{\rm{T}},k}^{{\rm{AOD}}} - \sin \theta _{{\rm{T}},i}^{{\rm{AOD}}}} \right)}}{{\sin \frac{{\pi d}}{\lambda }\left( {\sin \theta _{{\rm{T}},k}^{{\rm{AOD}}} - \sin \theta _{{\rm{T}},i}^{{\rm{AOD}}}} \right)}}} \right|^2} = 0,\forall k \ne i.
\end{align}
Based on \eqref{inner_product} and setting ${\bf{\tilde w}}_k^* = {{\bf{a}}_M}\left( {\sin \theta _{{\rm{T}},k}^{{\rm{AOD}}}} \right)/\sqrt M$, we have ${\left| {{{\left( {{\bf{h}}_k^{\rm{D}}\left( {{\bf{\Theta }}_k^{\rm{D}}} \right)} \right)}^H}{{\bf{w}}_i}} \right|^2} = 0,\forall k \ne i$, which indicates that the inter-user interference can be perfectly nulled. Hence, we have ${\cal C}^{\rm{D}}={\cal C}_o^{\rm{D}}$ and thus \eqref{C_D_closed} is obtained. For maximizing the system's sum rate, we formulate the following optimization problem:
\begin{align}\label{sum_rate_maximization21}
&\mathop {\max }\nolimits_{\left\{ {{p_k}} \right\}} \sum\limits_{k = 1}^K {{{\log }_2}\left( {1 + \frac{{{p_k}M{N^2}{{\left( {\rho _{g,k}^{\rm{D}}} \right)}^2}{{\left( {\rho _{r,k}^{\rm{D}}} \right)}^2}}}{{{K^2}{\sigma ^2}}}{{\left( {\frac{{{2^b}}}{\pi }\sin \frac{\pi }{{{2^b}}}} \right)}^2}} \right)}\nonumber\\
&~~{\rm{s}}{\rm{.t}}{\rm{.}}~~\sum\nolimits_{k = 1}^K {{p_k} = {P_{\max }}}.
\end{align}
The optimal ${\left\{ {{p_k}} \right\}}$ follows the well-known water-filling power allocation \cite{4657320}, which is given by
\begin{align}\label{sum_rate_maximization}
p_k^* = \max \left( {u - \frac{{{\sigma ^2}}}{{M{N^2}{{\left( {\rho _{g,k}^{\rm{D}}} \right)}^2}{{\left( {\rho _{r,k}^{\rm{D}}} \right)}^2}{{\left( {\frac{{{2^b}}}{\pi }\sin \frac{\pi }{{{2^b}}}} \right)}^2}}},0} \right).
\end{align}
Based on \eqref{assumption1} in \emph{Assumption 1}, i.e., ${\left( {\rho _{g,1}^{\rm{D}}} \right)^2}{\left( {\rho _{r,1}^{\rm{D}}} \right)^2} =  \ldots  = {\left( {\rho _{g,K}^{\rm{D}}} \right)^2}{\left( {\rho _{r,K}^{\rm{D}}} \right)^2}$, we further have $p_k^* = {P_{\max }}/K$, which leads to \eqref{sumrate_D_closed} and  \eqref{optimal_beamforming}. Thus, we complete the proof.

\section*{Appendix B: \textsc{Proof of Proposition 2}}
For the LoS channel scenario, we have
\begin{align}\label{channel_LoS_centralized}
{\left( {{\bf{h}}_k^{\rm{C}}\left( {{{\bf{\Theta }}^{\rm{C}}}} \right)} \right)^H} = \rho _g^{\rm{C}}\rho _{r,k}^{\rm{C}}{\left( {{\bf{\bar h}}_{r,kl}^{\rm{C}}} \right)^H}{{\bf{\Theta }}^{\rm{C}}}{{{\bf{\bar G}}}^{\rm{C}}}= \rho _g^{\rm{C}}\rho _{r,k}^{\rm{C}}\underbrace {\left( {{{\left( {{\bf{\bar h}}_{r,k}^{\rm{C}}} \right)}^H}{{\bf{\Theta }}^{\rm{C}}}{{\bf{a}}_{\rm{S}}}} \right)}_{{\rm{scalar}}}{\bf{a}}_M^H\left( {\sin \theta _{\rm{T}}^{{\rm{AOD}}}} \right),\forall k,
\end{align}
which indicates that all ${{\bf{h}}_k^{\rm{C}}\left( {{{\bf{\Theta }}^{\rm{C}}}} \right)}$'s are linearly dependent and parallel with ${{\bf{a}}_M}\left( {\sin \theta _{\rm{T}}^{{\rm{AOD}}}} \right)$. According to the uplink-downlink duality for this degraded broadcast channel, all achievable rate tuples satisfy the following condition:
\begin{align}\label{rate_constrained_centralized}
\sum\limits_{k \in {\cal J}} {r_k^{\rm{C}} \le } {\log _2}\left( {1 \!+\! \frac{{\sum\nolimits_{k \in {\cal J}} {{p_k}{{\left| {{{\left( {{\bf{h}}_k^{\rm{C}}\left( {{{\bf{\Theta }}^{\rm{C}}}} \right)} \right)}^H}{{{\bf{\tilde w}}}_k}} \right|}^2}} }}{{{\sigma ^2}}}} \right),\forall {\cal J} \subseteq {\cal K},
\end{align}
with $\sum\nolimits_{k \in {\cal J}} {{p_k}}  \le {P_{\max }}$ and $\left\| {{{{\bf{\tilde w}}}_k}} \right\|_2^2 = 1$.
Then, we derive the outer bound of ${{\cal C}^{\rm{C}}}$, denoted by ${\cal C}_{\rm{o}}^{\rm{C}}$, and further show that ${\cal C}_{\rm{o}}^{\rm{C}}$ is tight. The outer bound ${\cal C}_{\rm{o}}^{\rm{C}}$ is derived by considering the following set of optimization problem:
\begin{align}\label{max_power_gain_centralized}
\mathop {\max }\limits_{{{{\bf{\tilde w}}}_k},{\bf{\Theta }}^{\rm{C}}}~~ {\left| {{{\left( {{\bf{h}}_k^{\rm{D}}\left( {{\bf{\Theta }}^{\rm{C}}} \right)} \right)}^H}{{{\bf{\tilde w}}}_k}} \right|^2}~~{\rm{s}}{\rm{.t}}{\rm{.}}~~\left\| {{{{\bf{\tilde w}}}_k}} \right\|_2^2 = 1,{\bf{\Theta }}^{\rm{C}} \in {{\cal P}^{\rm{C}}}.
\end{align}
Similar to problem \eqref{max_power_gain}, the optimal solution of \eqref{max_power_gain_centralized} is derived as
\begin{align}\label{optimal_centralized}
{{{\bf{\tilde w}}}_k} = \frac{{{{\bf{a}}_M}\left( {\sin \theta _{{\rm{T}}}^{{\rm{AOD}}}} \right)}}{{\sqrt M }},{\left[ {{{\bf{\Theta }}^{\rm{C}}}} \right]_{n,n}} = \arg {\min _{\theta _n^{\rm{C}} \in {\cal F}}}\left| {{e^{j\theta _n^{\rm{C}}}} \!\!-\!\! {e^{j\theta _n^{{\rm{C}}**}}}} \right|,\forall n,
\end{align}
where $\theta _{n}^{{\rm{C}}**} = \arg \left( {{{\left[ {{\bf{\bar h}}_{r,k}^{\rm{C}}} \right]}_n}} \right) - \arg \left( {{{\left[ {{{\bf{a}}_{{\rm{S}}}}} \right]}_n}} \right)$. Accordingly, its optimal objective value is
\begin{align}\label{optimal_centralized}
{\gamma _k} = M{N^2}{\left( {\rho _g^{\rm{C}}} \right)^2}{\left( {\rho _{r,k}^{\rm{C}}} \right)^2}{\left( {\frac{{{2^b}}}{\pi }\sin \frac{\pi }{{{2^b}}}} \right)^2}.
\end{align}
Hence, the RHS of \eqref{rate_constrained_centralized} is upper-bounded by
\begin{align}\label{optimal_centralized}
{\log _2}\left( {1 + \frac{{\sum\nolimits_{k \in J} {{p_k}{{\left| {{{\left( {{\bf{h}}_k^{\rm{C}}\left( {{{\bf{\Theta }}^{\rm{C}}}} \right)} \right)}^H}{{{\bf{\tilde w}}}_k}} \right|}^2}} }}{{{\sigma ^2}}}} \right)\mathop  \le \limits^{\left( a \right)} {\log _2}\left( {1 + \frac{{\sum\nolimits_{k \in J} {{p_k}{\gamma _k}} }}{{{\sigma ^2}}}} \right)\mathop  \le \limits^{\left( b \right)} {\log _2}\left( {1 + \frac{{{P_{\max }}{\gamma _k}}}{{{\sigma ^2}}}} \right),
\end{align}
where (a) holds due to ${\left| {{{\left( {{\bf{h}}_k^{\rm{C}}\left( {{{\bf{\Theta }}^{\rm{C}}}} \right)} \right)}^H}{{{\bf{\tilde w}}}_k}} \right|^2} \le {\gamma _k}$ and (b) holds due to ${\gamma _1} =  \ldots  = {\gamma _K}$ based on \emph{Assumption 1} and $\sum\nolimits_{k \in {\cal J}} {{p_k}}  \le {P_{\max }}$.

Next, we show that ${\log _2}\left( {1 + \frac{{{P_{\max }}{\gamma _k}}}{{{\sigma ^2}}}} \right)$ is indeed achieved. By allocating weight of time ${\rho _k}$ for ${\Gamma _k}$ with $\sum\nolimits_{k \in {\cal J}} {{\rho _k} = 1}$, the sum-rate of user ${{\tilde u}_k}$'s, ${k \in {\cal J}}$, can be obtained as
\begin{align}\label{sum-rate_C}
\sum\nolimits_{k \in {\cal J}} {{\rho _k}} {\log _2}\left( {1 + \frac{{{P_{\max }}{\gamma _k}}}{{{\sigma ^2}}}} \right) = {\log _2}\left( {1 + \frac{{{P_{\max }}{\gamma _k}}}{{{\sigma ^2}}}} \right)
\end{align}
due to ${\gamma _1} =  \ldots  = {\gamma _K}$ and $\sum\nolimits_{k \in {\cal J}} {{\rho _k} = 1}$. Thus, we complete the proof.

\section*{Appendix C: \textsc{Proof of Theorem 1}}
Note that $R_s^{\rm{D}}$ is a function with respect to $K$ and thus we use $R_s^{\rm{D}}\left( K \right)$ to represent $R_s^{\rm{D}}$. By relaxing $K$ to a continuous variable, the first-order derivative of $R_s^{\rm{D}}\left( K \right)$ with respect to $K$ is given by
\begin{align}\label{first_order}
\frac{{\partial R_s^{\rm{D}}\left( K \right)}}{{\partial K}} = {\log _2}\left( {1 + {{{\gamma _0}} \mathord{\left/
 {\vphantom {{{\gamma _0}} {{K^3}}}} \right.
 \kern-\nulldelimiterspace} {{K^3}}}} \right) - 3 + \frac{3}{{1 + {{{\gamma _0}} \mathord{\left/
 {\vphantom {{{\gamma _0}} {{K^3}}}} \right.
 \kern-\nulldelimiterspace} {{K^3}}}}},
\end{align}
with

\begin{align}\label{gamma0}
{\gamma _0} = \frac{{{P_{\max }}M{N^2}{{\left( {\rho _{g,1}^{\rm{D}}} \right)}^2}{{\left( {\rho _{r,1}^{\rm{D}}} \right)}^2}}}{{{\sigma ^2}}}{\left( {\frac{{{2^b}}}{\pi }\sin \frac{\pi }{{{2^b}}}} \right)^2}.
\end{align}
Let ${{x = {\gamma _0}} \mathord{\left/
 {\vphantom {{x = {\gamma _0}} {{K^3}}}} \right.
 \kern-\nulldelimiterspace} {{K^3}}}$ and $g\left( x \right) \buildrel \Delta \over = \ln \left( {1 + x} \right) - 3 + 3/\left( {1 + x} \right)$. By further taking the first order derivative of $g\left( x \right)$ with respect to $x$, we obtain
\begin{align}\label{g_x_deriviation}
g'\left( x \right) \buildrel \Delta \over = \frac{{\partial g\left( x \right)}}{{\partial x}} = \frac{{x - 2}}{{{{\left( {1 + x} \right)}^2}}}.
\end{align}
From \eqref{g_x_deriviation}, we have $g'\left( x \right) < 0$ for $x \in \left( {0,2} \right)$ and $g'\left( x \right) \ge 0$ for $x \in \left[ {2,\infty } \right)$. Hence, $g\left( x \right)$ monotonously decreases with $x$ when $x \in \left( {0,2} \right)$ and $g\left( x \right)$ monotonously increases with $x$ when $x \in \left[ {2,\infty } \right)$. It can be readily verified that $g\left( x \right) < 0$ for $x \in \left( {0,2} \right]$ and $\mathop {\lim }\limits_{x \to \infty } g\left( x \right) =  + \infty  > 0$. Since $g\left( x \right)$ monotonously increases with $x$ for  $x \in \left[ {2,\infty } \right)$, equation $g\left( x \right) = 0$ has a single unique solution, denoted by ${{C_{{\rm{th}}}}}$, located in $\left( {0,\infty } \right)$. Hence, we have $g\left( x \right) \le 0$ for $x \in \left( {0,{C_{{\rm{th}}}}} \right]$ and $g\left( x \right) > 0$ for $x \in \left( {{C_{{\rm{th}}}},\infty } \right)$. Under the condition that ${\gamma _0} \le {C_{{\rm{th}}}}$, we have
\begin{align}\label{g_x_deriviation}
\left( {\partial R_s^{\rm{D}}\left( K \right)/\partial K} \right) = g\left( {{{{\gamma _0}} \mathord{\left/
 {\vphantom {{{\gamma _0}} {{K^3}}}} \right.
 \kern-\nulldelimiterspace} {{K^3}}}} \right)  \le  0, ~\forall 1 \le K \le M,
\end{align}
since ${{{\gamma _0}} \mathord{\left/
 {\vphantom {{{\gamma _0}} {{K^3}}}} \right.
 \kern-\nulldelimiterspace} {{K^3}}} \le {\gamma _0} \le {C_{{\rm{th}}}}$. In this case, ${R_s^{\rm{D}}\left( K \right)}$ monotonously decreases with $K$ and thus we have $R_s^{\rm{D}}\left( K \right)  \le  R_s^{\rm{D}}\left( 1 \right)$ for $\forall K \in \left\{ {1, \ldots ,M} \right\}$. Note that $R_s^{\rm{C}} = R_s^{\rm{D}}\left( 1 \right)$ and thus condition \eqref{sufficient_condition1} is obtained. By contrast, we have $g\left( {{{{\gamma _0}} \mathord{\left/
 {\vphantom {{{\gamma _0}} {{K^3}}}} \right.
 \kern-\nulldelimiterspace} {{K^3}}}} \right) \ge 0,\forall 1 \le K \le M$, if ${\gamma _0}/{M^3} \ge {C_{{\rm{th}}}}$ is satisfied. Hence, $R_s^{\rm{D}}\left( K \right)$ monotonously increases with $K$ for $1 \le K \le M$, which leads to $R_s^{\rm{D}}\left( K \right) \ge R_s^{\rm{D}}\left( 1 \right) = R_s^{\rm{C}}$,$\forall K \in \left\{ {1, \ldots ,M} \right\}$. The condition ${\gamma _0}/{M^3} \ge {C_{{\rm{th}}}}$ is equivalent to \eqref{sufficient_condition2} and thus we complete the proof.

\section*{Appendix D: \textsc{Proof of Proposition 3}}
We commence by expanding the equivalent channel as
\begin{align}\label{channel_reexpression}
{\left( {{\bf{h}}_{kl}^{\rm{D}}\left( {{\bf{\Theta }}_k^{\rm{D}}} \right)} \right)^H} \!\!=\!\! \rho _{r,kl}^{\rm{D}}\rho _{g,k}^{\rm{D}}\left(
\sqrt {\frac{{{\delta ^{\rm{D}}}}}{{{\delta ^{\rm{D}}} + 1}}} {\left( {{\bf{\bar h}}_{r,kl}^{\rm{D}}} \right)^H}{\bf{\Theta }}_k^{\rm{D}}{\bf{\bar G}}_k^{\rm{D}}
 + \sqrt {\frac{1}{{{\delta ^{\rm{D}}} + 1}}} {\left( {{\bf{\bar h}}_{r,kl}^{\rm{D}}} \right)^H}{\bf{\Theta }}_k^{\rm{D}}{\bf{\tilde G}}_k^{\rm{D}}
 \right).
\end{align}
Let ${\left( {{\bf{\bar h}}_{kl}^{\rm{D}}\left( {{\bf{\Theta }}_k^{\rm{D}}} \right)} \right)^H} = {\left( {{\bf{\bar h}}_{r,kl}^{\rm{D}}} \right)^H}{\bf{\Theta }}_k^{\rm{D}}{\bf{\bar G}}_k^{\rm{D}}$ and ${\left( {{\bf{\tilde h}}_{kl}^{\rm{D}}\left( {{\bf{\Theta }}_k^{\rm{D}}} \right)} \right)^H} = {\left( {{\bf{\bar h}}_{r,kl}^{\rm{D}}} \right)^H}{\bf{\Theta }}_k^{\rm{D}}{\bf{\tilde G}}_k^{\rm{D}}$. For the ${\rm{E}}\left\{ {\left\| {{\bf{h}}_{kl}^{\rm{D}}\left( {{\bf{\Theta }}_k^{\rm{D}}} \right)} \right\|_2^2} \right\}$, it can be derived as
\begin{align}\label{first_part}
\begin{array}{l}
{\rm{E}}\left\{ {\left\| {{\bf{h}}_{kl}^{\rm{D}}\left( {{\bf{\Theta }}_k^{\rm{D}}} \right)} \right\|_2^2} \right\}\\
 = {\left( {\rho _{r,kl}^{\rm{D}}\rho _{g,k}^{\rm{D}}} \right)^2}\left(
\frac{{{\delta ^{\rm{D}}}}}{{{\delta ^{\rm{D}}} + 1}}{\rm{E}}\left\{ {\left\| {{\bf{\bar h}}_{kl}^{\rm{D}}\left( {{\bf{\Theta }}_k^{\rm{D}}} \right)} \right\|_2^2} \right\}
 + \frac{1}{{{\delta ^{\rm{D}}} + 1}}{\rm{E}}\left\{ {\left\| {{\bf{\tilde h}}_{kl}^{\rm{D}}\left( {{\bf{\Theta }}_k^{\rm{D}}} \right)} \right\|_2^2} \right\}
 \right)\\
 + 4{\left( {\rho _{r,kl}^{\rm{D}}\rho _{g,k}^{\rm{D}}} \right)^2}\frac{{\sqrt {{\delta ^{\rm{D}}}} }}{{{\delta ^{\rm{D}}} + 1}}{\rm{E}}\left\{ {{\rm{Re}}\left( {{{\left( {{\bf{\bar h}}_{kl}^{\rm{D}}\left( {{\bf{\Theta }}_k^{\rm{D}}} \right)} \right)}^H}{\bf{\tilde h}}_{kl}^{\rm{D}}\left( {{\bf{\Theta }}_k^{\rm{D}}} \right)} \right)} \right\}\\
\mathop  = \limits^{\left( a \right)} {\left( {\rho _{r,kl}^{\rm{D}}\rho _{g,k}^{\rm{D}}} \right)^2}MN,
\end{array}
\end{align}
where (a) is obtained based on the results of ${\rm{E}}\left\{ {\left\| {{\bf{\bar h}}_{kl}^{\rm{D}}\left( {{\bf{\Theta }}_k^{\rm{D}}} \right)} \right\|_2^2} \right\} = {\rm{E}}\left\{ {\left\| {{\bf{\tilde h}}_{kl}^{\rm{D}}\left( {{\bf{\Theta }}_k^{\rm{D}}} \right)} \right\|_2^2} \right\} = MN$ and ${\rm{E}}\left\{ {{\mathop{\rm Re}\nolimits} \left( {{{\left( {{\bf{\bar h}}_{kl}^{\rm{D}}\left( {{\bf{\Theta }}_k^{\rm{D}}} \right)} \right)}^H}{\bf{\tilde h}}_{kl}^{\rm{D}}\left( {{\bf{\Theta }}_k^{\rm{D}}} \right)} \right)} \right\} = 0$. Then, ${{\rm{E}}\left\{ {{{\left| {{{\left( {{\bf{h}}_{kl}^{\rm{D}}\left( {{\bf{\Theta }}_k^{\rm{D}}} \right)} \right)}^H}\left( {{\bf{h}}_{ml'}^{\rm{D}}\left( {{\bf{\Theta }}_m^{\rm{D}}} \right)} \right)} \right|}^2}} \right\}}$ can be calculated as
\begin{align}\label{second_part}
\begin{array}{l}
{\rm{E}}\left\{ {{{\left| {{{\left( {{\bf{h}}_{kl}^{\rm{D}}\left( {{\bf{\Theta }}_k^{\rm{D}}} \right)} \right)}^H}\left( {{\bf{h}}_{ml'}^{\rm{D}}\left( {{\bf{\Theta }}_m^{\rm{D}}} \right)} \right)} \right|}^2}} \right\} = {\rm{E}}\left\{ {{{\left| {\sum\nolimits_{i = 1}^4 {{z_i}} } \right|}^2}} \right\} \\
= \sum\nolimits_{i = 1}^4 {{\mathop{\rm E}\nolimits} \left\{ {{{\left| {{z_i}} \right|}^2}} \right\}}  + \sum\nolimits_{i = 1}^4 {\sum\nolimits_{j = i+1}^4 {2{\mathop{\rm E}\nolimits} \left\{ {{\mathop{\rm Re}\nolimits} \left( {{z_i}z_j^*} \right)} \right\}} },
\end{array}
\end{align}
where
\begin{align}\label{second_part_appendix}
\begin{array}{l}
{z_1} = \frac{{{\delta ^{\rm{D}}}}}{{{\delta ^{\rm{D}}} + 1}}{\left( {{\bf{\bar h}}_{kl}^{\rm{D}}\left( {{\bf{\Theta }}_k^{\rm{D}}} \right)} \right)^H}{\bf{\bar h}}_{ml'}^{\rm{D}}\left( {{\bf{\Theta }}_m^{\rm{D}}} \right),{z_2} = \frac{{\sqrt {{\delta ^{\rm{D}}}} }}{{{\delta ^{\rm{D}}} + 1}}{\left( {{\bf{\bar h}}_{kl}^{\rm{D}}\left( {{\bf{\Theta }}_k^{\rm{D}}} \right)} \right)^H}{\bf{\tilde h}}_{ml'}^{\rm{D}}\left( {{\bf{\Theta }}_m^{\rm{D}}} \right), \\
{z_3} = \frac{{\sqrt {{\delta ^{\rm{D}}}} }}{{{\delta ^{\rm{D}}} + 1}}{\left( {{\bf{\tilde h}}_{kl}^{\rm{D}}\left( {{\bf{\Theta }}_k^{\rm{D}}} \right)} \right)^H}{\bf{\bar h}}_{ml'}^{\rm{D}}\left( {{\bf{\Theta }}_m^{\rm{D}}} \right),{z_4} = \frac{1}{{{\delta ^{\rm{D}}} + 1}}{\left( {{\bf{\tilde h}}_{kl}^{\rm{D}}\left( {{\bf{\Theta }}_k^{\rm{D}}} \right)} \right)^H}{\bf{\tilde h}}_{ml'}^{\rm{D}}\left( {{\bf{\Theta }}_m^{\rm{D}}} \right).
\end{array}
\end{align}
It can be readily shown that ${z_1} = 0$ according to the deployment condition unveiled in \eqref{deployment_condition}. For the remaining terms in \eqref{second_part}, we have
\begin{align}\label{second_part_appendix2}
\begin{array}{l}
{\rm{E}}\left\{ {{z_i}z_j^*} \right\} = 0,\forall i \ne j,\\
{\rm{E}}\left\{ {{{\left| {{z_2}} \right|}^2}} \right\} = {\rm{E}}\left\{ {{{\left| {{z_3}} \right|}^2}} \right\} = \frac{{{\delta ^{\rm{D}}}}}{{{{\left( {{\delta ^{\rm{D}}} + 1} \right)}^2}}}M{N^2},\\
{\rm{E}}\left\{ {{{\left| {{z_4}} \right|}^2}} \right\} = \frac{{M{N^2}}}{{{{\left( {{\delta ^{\rm{D}}} + 1} \right)}^2}}}.
\end{array}
\end{align}
Upon substituting \eqref{second_part_appendix2} into \eqref{second_part}, we arrive at
\begin{align}\label{second_part_result}
{\rm{E}}\left\{ {{{\left| {{{\left( {{\bf{h}}_{kl}^{\rm{D}}\left( {{\bf{\Theta }}_k^{\rm{D}}} \right)} \right)}^H}\left( {{\bf{h}}_{ml'}^{\rm{D}}\left( {{\bf{\Theta }}_m^{\rm{D}}} \right)} \right)} \right|}^2}} \right\}= {\left( {\rho _{r,kl}^{\rm{D}}\rho _{g,k}^{\rm{D}}} \right)^2}{\left( {\rho _{r,ml'}^{\rm{D}}\rho _{g,m}^{\rm{D}}} \right)^2}\frac{{2{\delta ^{\rm{D}}} + 1}}{{{{\left( {{\delta ^{\rm{D}}} + 1} \right)}^2}}}M{N^2}.
\end{align}
Based on \eqref{first_part} and \eqref{second_part_result}, \eqref{correlation1} can be obtained. Note that \eqref{correlation2} can be derived following similar steps, which are omitted for brevity.

\bibliographystyle{IEEEtran}

\vspace{-10pt}
\bibliography{IEEEabrv,myref}


\end{document}